\documentclass{amsart}

\subjclass{68Q17, 68Q25}

\usepackage{graphicx} 
\usepackage[utf8]{inputenc}
\usepackage{amsmath}
\usepackage{amssymb}
\usepackage{tikz}
\usepackage{amsthm}
\usepackage{hyperref}
\usepackage[margin=3.25cm]{geometry}
\usepackage{url}
\usepackage{mathabx}
\usepackage{theoremref}

\newtheorem{theorem}{Proposition}[section]
\newtheorem{corollary}[theorem]{Corollary}
\newtheorem{lemma}[theorem]{Lemma}
\newtheorem{main}[theorem]{Theorem}
\newtheorem{example}[theorem]{Example}

\title[Promise Systems of Equations]{Promise Systems of Equations over Magmas With Identity and over Algebras in Congruence Modular Varieties }
\author{Nick Jamesson}
\email{nicholas.jamesson@colorado.edu}
\address{Department of Mathematics,
University of Colorado Boulder, USA}

\thanks{Supported by the  National Science Foundation under Grant No. DMS 2452289}
\keywords{constraint satisfaction, promise constraint satisfaction, equations, minions, magma, congruence modular variety}

\date{\today}

\theoremstyle{definition}
\newtheorem{definition}[theorem]{Definition}

\newcommand{\SysPol}{\mathrm{SysPol}}
\newcommand{\PEqn}{\mathrm{PEqn}}
\newcommand{\CSP}{\mathrm{CSP}}
\newcommand{\PCSP}{\mathrm{PCSP}}

\newcommand{\Pol}{\mathrm{Pol}}

\newcommand{\AIP}{\mathrm{AIP}}
\newcommand{\BLP}{\mathrm{BLP}}
\newcommand{\ar}{\mathrm{ar}}

\newcommand{\SysTerm}{\mathrm{SysTerm}}
\newcommand{\Con}{\mathrm{Con }}

\newcommand{\A}{\mathbf{A}}
\newcommand{\B}{\mathbf{B}}
\newcommand{\C}{\mathbf{C}}

\begin{document}

\maketitle

\begin{abstract}
We study the computational complexity of solving promise systems of equations over finite algebras. In this line of research one considers two algebras $\A$ and $\B$ of the same signature with a homomorphism $\A\to \B$, and the problem is the following:
\begin{align*}
& \PEqn(\A, \B):\\
&\textsc{input:} \text{ A system of equations in the signature of }\mathbf{ A} \text{ and }\B.\\
&\textsc{problem:} \text{ Is there a solution to the system of equations in }\A\text{ or not even a solution in }\B?
\end{align*}

We generalize the results of Larrauri, Mottet, and Živný from \cite{larrauri2025equationsfinitemonoidsinfinite} to obtain a $\mathbf{P}-\mathbf{NP}$-hard dichotomy result for promise systems of equations over a class of algebras which contains all monoids, and a dichotomy result for promise systems of equations over algebras in a congruence modular variety. We then consider the metaproblem for promise systems of equations over algebras in a congruence modular variety: given finite algebras $\A$ and $\B$ such that $\A$ is in a congruence modular variety, we show there is a quasi-polynomial time algorithm for determining whether or not $\PEqn(\A, \B)$ is in $\mathbf{P}$.
\end{abstract}


\section{Introduction}
Constraint satisfaction problems (CSPs) have been an important and wide ranging topic of research since Schaefer's dichotomy result on Boolean domains \cite{Schaefer1}. For a CSP with a Boolean domain $\mathbb{A}$, Schaefer proved that either $\CSP(\mathbb{A})$ is in $\mathbf{P}$ or $\CSP(\mathbb{A})$ is $\mathbf{NP}$-complete. A similar dichotomy for CSPs with any finite domain has been shown to hold by Bulatov \cite{BulatovDichotomy} and by Zhuk \cite{ZhukDichotomy}.

Given the settled dichotomy of CSPs, the so called CSP metaproblem has also been studied in several papers (see \cite{Metaproblem1, Mayr23}). Here, the input is a relational structure $\mathbb{A}$, and the problem is to determine if $\CSP(\mathbb{A})$ is in $\mathbf{P}$ or is $\mathbf{NP}$-complete.

Brakensiek and Guruswami proposed a variant of CSPs known as promise constraint satisfaction problems (PCSPs) \cite{BrakensiekGuruswami1}. A natural example of a PCSP is a graph coloring problem: is an input graph 3-colorable or not even 6-colorable. The \emph{promise} in the PCSP is that every input graph is indeed either 3-colorable or not even 6-colorable. Computational complexity of general PCSPs is generally still a wide open problem.

In this paper, we study promise systems of equations over finite algebras. This is a variant of a PCSP in which we fix algebras $\A$ and $\B$ with a homomorphism from $\A$ to $\B$ and have the following computational problem:

\begin{align*}
& \PEqn(\A, \B)\colon\\
&\textsc{input:} \text{ A system of equations in the signature of }\mathbf{ A} \text{ and }\B.\\
&\textsc{problem:} \text{ Is there a solution to the system of equations in }\A\text{ or not even a solution in }\B?
\end{align*}
The study of $\PEqn(\A, \B)$ was initiated in \cite{larrauri2024solving} by Larrauri and Živný in the case of finite semigroups.

Our main results are dichotomy theorems for the computational complexity of $\PEqn(\A, \B)$: we characterize when $\PEqn(\A, \B)$ is in $\mathbf{P}$ and show that $\PEqn(\A, \B)$ is $\mathbf{NP}$-hard otherwise. In particular, we establish dichotomy theorems for a class of algebras which contains expansions of magmas (see Definition \ref{Magma}) with a two sided identity, and also for algebras in a congruence modular variety. As a consequence, we obtain dichotomy results for familiar algebras contained in these classes such as lattices and rings. In order to formally state these results, we will first need some introductory definitions. We refer the reader to \cite{BurrisSanka} for background in universal algebra. Throughout, we will be working with relational structures and algebras. We will denote by $\mathbb{N}$ the set of natural numbers $\lbrace 0 , 1, 2, \dots \rbrace$. We will also use the following notation for $n\in \mathbb{N}$ often: $[n] = \lbrace   1, 2, \dots , n \rbrace.$
 
 \begin{definition}
 \thlabel{signature}
     A \emph{signature} is a set of symbols; either function symbols, or relation symbols, each with an associated natural number, which we call the \emph{arity} of the symbol (0-ary functions symbols are just constant symbols). If $f$ is a function symbol, we denote its arity by $\ar(f)$ and similarly for a relation symbol $R$. 
 \end{definition}

 \begin{definition}
 \thlabel{RelationalStructure}
 Let $\sigma$ be a signature with no function symbols. A \emph{relational structure} $\mathbb{A}$ of signature $\sigma$ is a set $A$ together with a relation $R^{\mathbb{A}}\subseteq A^{\ar(R)}$ for each relation symbol $R\in \sigma$. In this case, we write $\mathbb{A} = (A;  R_1^{\mathbb{A}}, R_2^{\mathbb{A}}\dots)$. We say $\mathbb{A}$ is \emph{finite} if $A$ is finite and $\sigma$ is finite.
 \end{definition}
 
\begin{definition}
\thlabel{Algebra}
    Let $\sigma$ is a signature with no relation symbols. An \emph{algebra} $\A$ of signature $\sigma$ is a set $A$ together with a function $f^{\A}\colon A^{\ar(f)}\to A$ for each function symbol $f\in \sigma$. We write $\A = (A; f_1^{\A}, f_2^{\A}, \dots)$. We say $\A$ is \emph{finite} if $A$ is finite and $\sigma$ is finite.
\end{definition}

Note that if $\ar(f) = 0$, then $f^\A\in A$ is just constant.

\begin{definition}
\thlabel{homomorphismRelation}
    Let $\mathbb{A}$ and $\mathbb{B}$ be two finite relational structures of the same signature $\sigma$. A function $\phi \colon \mathbb{A}\to \mathbb{B}$ is called a \emph{homomorphism} if the following hold: for all $R\in \sigma$ and for all $(a_1, \dots , a_{\ar({R})})\in R^{\mathbb{A}}$, we have $(\phi(a_1), \dots , \phi(a_{\ar({R})}))\in R^{\mathbb{B}}$.
\end{definition}

\begin{definition}
\thlabel{homomorphismAlgebra}
    Let $\A$ and $\B$ be algebras of the same signature $\sigma$. A function $\phi \colon \A\to \B$ is called a \emph{homomorphism} if for all $f\in \sigma$, and for all $(a_1, \dots , a_{\ar(f)})\in A^{\ar(f)}$, we have $\phi(f^{\A}(a_1, \dots , a_{\ar(f)})) = f^{\B}(\phi(a_1), \dots , \phi(a_{\ar(f)}))$.
\end{definition}

We are now ready to give the definition of a \emph{promise constraint satisfaction problem}.

\begin{definition}
\thlabel{PCSPdef}
\cite{BrakensiekGuruswami1}
Let $\mathbb{A}$ and $\mathbb{B}$ be two relational $\sigma$-structures with a homomorphism $\mathbb{A}\to \mathbb{B}$. Then $\PCSP(\mathbb{A},\mathbb{B})$ is the computational problem given as follows: 

\textsc{input}: A finite relational $\sigma$-structure $\mathbb{I}$.

\textsc{output}: \textsc{yes} if there is a homomorphism $\mathbb{I}\to\mathbb{A}$. \textsc{no} if there is no homomorphism $\mathbb{I}\to \mathbb{B}$.
\end{definition}

The \emph{promise} here is that every input instance $\mathbb{I}$ is indeed either a \textsc{yes} instance or a \textsc{no} instance. The existence of a homomorphism $\mathbb{A}\to \mathbb{B}$ guarantees that the input instance $\mathbb{I}$ is not both a \textsc{yes} instance and a  \textsc{no} instance.

The main topic of this paper is a variant of a PCSP, namely a \emph{promise system equations} over algebras, which we will define next. A \emph{system of equations} in a signature $\sigma$ without relation symbols is a finite set of equations of the form $s(x_1, \dots , x_n) = t(x_1, \dots , x_n)$ where $s$ and $t$ are $\sigma$-terms. If the variables occurring in the system of equations are $x_1, x_2, \dots , x_n$, we may denote the system $\Sigma(x_1, x_2, \dots ,x_n)$. If $\A$ is an algebra of signature $\sigma$, and $\Sigma(x_1, x_2, \dots ,x_n)$ is a system of equations in the signature $\sigma$, then a \emph{solution} to $\Sigma(x_1, x_2, \dots , x_n)$ over $\A$ is a mapping $h\colon \lbrace x_1, x_2, \dots , x_n \rbrace \to A$ such that for each equation $$s(x_1, \dots , x_n) = t(x_1, \dots , x_n)\in \Sigma(x_1, x_2, \dots , x_n)$$ we have that $$s^{\A}(h(x_1), \dots , h(x_n)) = t^{\A}(h(x_1), \dots , h(x_n)).$$

\begin{definition}
\thlabel{PromiseEqsDef}
    Let $\A$ and $\B$ be finite algebras of signature $\sigma$ with a homomorphism $\A\to \B$. We define the \emph{promise system of equations} $\PEqn(\textbf{A},\textbf{B})$ to be the following computational problem:
    
    \textsc{input:} A system of term equations in the signature $\sigma$.
    
    \textsc{output:} \textsc{yes} if there is a solution to the system in $\A$. \textsc{no} if there is no solution to the system in $\B$.
\end{definition}
As in the PCSP case, the promise here is that our instance of the problem is either a \textsc{yes} instance or a \textsc{no} instance. We denote $\text{SysTerm}(\A) = \PEqn(\A, \A)$. Given an algebra $\A$ of signature $\sigma$, we may expand $\sigma$ to include a constant symbol for each $a\in A$ and obtain an algebra which interprets the new constant symbols in the natural way. If we denote the resulting algebra $\A'$, then a system of term equations over $\A'$ is just a system of polynomial equations over $\A$. We denote $\SysPol(\A) = \text{SysTerm}(\mathbf{A'})$. There is a trivial reduction from $\PEqn(\A, \B)$ to both $\SysTerm(\A)$ and to $\SysTerm(\B)$. In view of the CSP dichotomy theorem, this means the interesting cases of $\PEqn(\A, \B)$ are the cases in which both $\SysTerm(\A)$ and $\SysTerm(\B)$ are $\mathbf{NP}$-complete. The next example shows that $\PEqn(\A, \B)$ could be in $\mathbf{P}$ in such a case. \begin{example}
    \label{ExampleRings}
    Let $\A = (2\mathbb{Z}_8; +^{\mathbb{Z}_8},\cdot^{\mathbb{Z}_8} , 2^{\mathbb{Z}_8})$ and $\B = (\mathbb{Z}_4; +^{\mathbb{Z}_4},\cdot^{{\mathbb{Z}_4}}, 2^{\mathbb{Z}_4})$. Then $\SysTerm(\A)$ and $\SysTerm(\B)$ are $\mathbf{NP}$-complete, but $\PEqn(\A,\B)$ is in $\mathbf{P}$.
\end{example} Example \ref{ExampleRings} will follow from Theorem \ref{MainDichotomy}.

In \cite[Proposition 5.9, Proposition 6.2]{larrauri2025equationsfinitemonoidsinfinite}, Larrauri, Mottet, and Živný gave a dichotomy result for promise systems of equations over monoids expanded with an additional relation. The first of our main results generalizes this to expansions of magmas with an identity. In particular, we obtain a dichotomy theorem for $\PEqn(\A , \B)$ in the case where there is a binary polynomial operation with identity on $\A$ which need not be associative.

\begin{main}
    Let $\A$ and $\B$ be finite algebras, let $e\in A$, let $\cdot$ be a binary polynomial operation on $\A$ such that $x \cdot e = x= e \cdot x$ for all $x\in A$. Suppose there is a homomorphism $\A\to \B$. Then $\PEqn(\A, \B)$ is either in $\mathbf{P}$ or is $\mathbf{NP}$-hard.
\end{main}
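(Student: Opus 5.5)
We follow the strategy of Larrauri, Mottet and Živný for monoids: translate $\PEqn(\A,\B)$ into a PCSP, identify the polymorphism minion, and show that the identity element forces a clean dichotomy.

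\emph{Step 1 (reduction to a PCSP).} Let $\mathbb{A}$ be the relational structure with universe $A$ whose relations are the graphs $\{(a_1,\dots,a_n,f^{\A}(a_1,\dots,a_n))\}$ of the basic operations, together with the equality relation; let $\mathbb{B}$ be built from $\B$ in the same way. Flattening terms by introducing a fresh variable for every subterm shows that $\PEqn(\A,\B)$ and $\PCSP(\mathbb{A},\mathbb{B})$ are polynomial-time equivalent. The homomorphism $\A\to\B$ is then a homomorphism $\mathbb{A}\to\mathbb{B}$. The polymorphisms $\Pol(\mathbb{A},\mathbb{B})$ are exactly the maps $h\colon A^n\to B$ that commute with the basic operations, i.e.\ the homomorphisms $\A^n\to\B$. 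By the algebraic theory of PCSPs, the complexity is therefore governed by this minion $\mathcal{M}$ of homomorphisms $\A^n\to\B$.

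\emph{Step 2 (structure of $\mathcal{M}$ using the identity).} The polynomial $\cdot$ is $t(x,y,\bar c)$ for some term $t$ and constants $\bar c\in A$. A homomorphism $h\colon\A^n\to\B$ respects the corresponding term operation. For each coordinate $i$, consider the tuple $e_i[a]$ that equals $a$ in coordinate $i$ and $e$ elsewhere, and more generally tuples supported on a set $S$ of coordinates. Every $\bar a\in A^n$ can be written as an iterated product of the $e_i[a_i]$ (bracketed in some fixed way). The product is computed coordinatewise, and in each coordinate only one factor differs from $e$, so the identity laws give back $\bar a$. Applying $h$ then yields
\[
h(\bar a)=t^{\B}\bigl(\dots t^{\B}(h(e_1[a_1]),h(e_2[a_2]),h(\bar c)),\dots\bigr).
\]
Hence $h$ is determined by its $n$ unary "restrictions" $h_i(a)=h(e_i[a])$, each of which lies in a finite set $\mathcal{F}$ of maps $A\to B$. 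The constants need care: $h(\bar c,\dots,\bar c)$ is a fixed value, since the diagonal is handled by the minor to arity $1$. I would verify this decomposition first.

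\emph{Step 3 (dichotomy).} Call coordinate $i$ \emph{trivial} for $h$ if $h_i$ equals the constant map obtained by evaluating $h$ at the all-$e$ tuple, so that changing coordinate $i$ has no effect. There are two cases.
\begin{itemize}
\item[(a)] Suppose there is a uniform bound $k$ such that every $h\in\mathcal{M}$ has at most $k$ nontrivial coordinates. Then the selection $h\mapsto\{\text{nontrivial coordinates}\}$ is compatible with minors, at least up to a bounded-size choice. The standard criterion, namely a minion homomorphism to a minion of bounded essential arity or bounded-size selection, gives $\mathbf{NP}$-hardness.
\item[(b)] Otherwise, polymorphisms of arbitrarily large arity are "genuinely infinite". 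Finiteness of $\mathcal{F}$ lets us pigeonhole many coordinates sharing the same restriction $g\in\mathcal{F}$. From these we extract a family of symmetric or alternating-type polymorphisms of all large arities, whose value depends only on counts of the form $\sum_i$ via the iterated product. I would then show that $\PCSP(\mathbb{A},\mathbb{B})$ is solved by $\BLP+\AIP$, or by $\AIP$ alone, in the spirit of the known monoid argument: the tractable case corresponds to an abelian group-like structure in the image.
\end{itemize}

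The main obstacle is case (b), and more specifically reproving without associativity that unboundedly many nontrivial coordinates yield the required infinite family, e.g.\ that $\mathcal{M}$ admits a minion homomorphism from the $\AIP$ or $\BLP+\AIP$ minion. Without associativity the iterated product depends on bracketing, so a minor that identifies coordinates may reorder the factors. To handle this I would first fix a canonical left-nested bracketing. I would then show, using only $x\cdot e=x=e\cdot x$ applied coordinatewise, that $h$ is nevertheless invariant under permutations of the coordinates that carry the same restriction. This invariance is where I expect the real work to lie.
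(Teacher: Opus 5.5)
There is a genuine gap in Step 3. The number of nontrivial coordinates is not what separates the tractable cases from the hard ones, and case (b) is false. Let $T=(\{0,1,2\};\oplus)$ with $s\oplus t=\min(s+t,2)$. Let $M=\mathbb{Z}_2\times T$, with identity $e=(0,0)$, and let $\A=\B=(M;\cdot,a)$ with the constant $a=(0,1)$. Put $\pi(z,t)=(z,0)$. Then for every $n$ the map $p(x_1,\dots,x_n)=x_1\cdot\pi(x_2)\cdots\pi(x_n)$ is a polymorphism: $M$ is commutative, $\pi$ is an endomorphism of $(M;\cdot)$, and $\pi(a)=e$. All $n$ of its coordinates are nontrivial in your sense.

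Nevertheless $\BLP+\AIP$ does not solve this problem. In a 2-block symmetric polymorphism of arity $2n+1\ge 5$, all coordinates of a block share the same restriction. The $T$-component of $p(a,\dots,a)=a$ would then be $k\cdot t_u\oplus l\cdot t_v$ (computed in $T$) with $k,l\ge 2$, and this is never $1$. In fact the problem is $\mathbf{NP}$-hard: $p\mapsto(p_1(a),\dots,p_n(a))$ is a minion homomorphism into the monoidal minion $\mathcal{M}_{M,a}$, and $a$ is not a group element. Hardness here is carried by a single distinguished coordinate, while arbitrarily many other coordinates are nontrivial in a harmless, group-like way. So pigeonholing equal restrictions cannot produce the symmetric families you want.

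What is missing is the correct dividing line, together with a hardness argument that does not count coordinates. The paper's tractability condition is a homomorphism $\psi\colon\A\to\B$ such that $(\psi(A);\cdot_\psi)$ is a commutative regular monoid and $x\cdot_\psi y^{-1}\cdot_\psi z$ is a polymorphism of $\psi(\A)$. Then $\prod_{i\text{ odd}}\psi(a_i)\prod_{j\text{ even}}\psi(a_j)^{-1}$ gives 2-block symmetric polymorphisms of all odd arities. Otherwise $\Pol(\A,\B)$ is split by the algebraic type of the image, and each part gets a bounded selection function.

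(i) If $(p(A^n);\cdot_p)$ is not a commutative monoid, select the $i$ with $(p_i(A);\cdot_p)$ not a commutative monoid. This set is bounded because distinct coordinates automatically commute and associate with one another, by $x\cdot e=x=e\cdot x$ applied coordinatewise. Hence three coordinates with the same image force a commutative monoid. This is how non-associativity is handled: as a source of hardness, not through the invariance argument you anticipate.

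(ii) If the image is a commutative monoid but $p_\Delta(A)$ is not regular, fix $p_\Delta=\psi$. Map $p\mapsto(p_1(a),\dots,p_n(a))$ into $\mathcal{M}_{\mathbf{N},\psi(a)}$, where $\psi(a)$ is a non-group element of $\mathbf{N}=(\psi(A);\cdot_\psi)$, exactly as in the example.

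(iii) If the image is regular but the coset condition fails, select the coordinates lying in $\sim_p$-classes of size below a constant $L$. The absorption $[R_\dagger]^{(k)}\subseteq R^{(k)}$ for $k\ge L$ shows that a minor violating the selection property would force $[p_\Delta(S^{\A})]\subseteq p((S^{\A})^n)\subseteq S^{\B}$, contradicting the failure of the coset condition.

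In (ii) and (iii) the selected sets stay bounded even though the number of nontrivial coordinates does not. Your case (a) would also need to exclude constant polymorphisms, on which your selection is empty; these have a one-element image and give tractability.
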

We will prove this as Theorem $\ref{MainDichotomy}$ and characterize the conditions under which $\PEqn(\A , \B)$ is in $\mathbf{P}$. We then use Theorem \ref{MainDichotomy},  to prove Theorem \ref{DichotomyCongruenceModular}. This is a dichotomy for $\PEqn(\A, \B)$ in which $\A$ belongs to a congruence modular variety.

\begin{main}[Proven as Theorem \ref{DichotomyCongruenceModular}]
    Let $\A$ and $\B$ be finite algebras with a homomorphism $\A\to \B$ such that $\A$ is in a congruence modular variety. Then $\PEqn(\A , \B)$ is either in $\mathbf{P}$ or is $\mathbf{NP}$-hard.
\end{main}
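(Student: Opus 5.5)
Since $\A$ lies in a congruence modular variety, it has Gumm terms. My plan is to split on whether $\A$ has a Maltsev-like binary polynomial with an identity element, reducing to Theorem \ref{MainDichotomy} in that case and showing $\mathbf{NP}$-hardness otherwise.

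First I would reduce to a convenient form of $\A$. Replacing $\A$ by the image of the homomorphism $\A\to\B$ will not do, since solutions must lie in $A$. Instead I would use the standard minion and polymorphism machinery. $\PEqn(\A,\B)$ is the PCSP of the relational structures whose relations are the graphs of the basic operations of $\A$ and $\B$, together with the equality relation. Its polymorphism minion $\Pol(\A,\B)$ consists of the maps $A^n\to B$ that commute with the operations, meaning the homomorphisms $\A^n\to\B$. Tractability and hardness are then governed by the minion, via minion homomorphisms and the standard hardness criteria such as bounded essential arity or its Boolean analogues.

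The key step is a case distinction on the structure of $\A$ from commutator theory. Either $\A$ is abelian in a suitable sense, or it is not.

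\emph{Abelian case.} If $\A$ is abelian in a congruence modular variety, it is affine: it is polynomially equivalent to a module, so $x-y+z$ is a polynomial (a Maltsev polynomial). Taking any $e\in A$ and setting $x\cdot y := x - e + y$ gives a binary polynomial with two-sided identity $e$, so Theorem \ref{MainDichotomy} applies directly.

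\emph{Non-abelian case.} If $\A$ is not abelian, there is a congruence $\theta$ with $[\theta,\theta]\neq 0$. From the failure of the term condition, together with the Gumm terms, I would try to manufacture a binary polynomial with an identity, at least on some subset or block. A cleaner alternative is to find a non-affine quotient or section on which such a polynomial exists. Hardness or tractability would then transfer back via pp-constructions or minion homomorphisms.

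Here is the route I expect to work in general. If $\A$ has a nontrivial non-abelian (nilpotent-failure) section, then some binary polynomial behaves like a "multiplication" with an absorbing/identity element. This is as in the proof that systems of polynomial equations over non-nilpotent CM algebras are $\mathbf{NP}$-hard. If instead every relevant section is abelian, $\A$ is nilpotent or solvable, and I would handle it by induction on the congruence lattice, combining Theorem \ref{MainDichotomy} on the factors.

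The main obstacle is this non-abelian, non-affine case. The difficulty is that Theorem \ref{MainDichotomy} needs a genuine binary polynomial with identity on all of $A$, while commutator theory typically only supplies such an operation locally, on a neighborhood $e(A)$ for an idempotent unary polynomial $e$ in the sense of tame congruence theory. So the first thing I would try is to pass to the induced algebra $\A|_{e(A)}$. I would show that $\PEqn(\A|_{e(A)},\B|_{\phi(e)(B)})$ reduces to $\PEqn(\A,\B)$ by encoding membership in $e(A)$ as the equation $x=e(x)$. Then I would apply Theorem \ref{MainDichotomy} there to obtain $\mathbf{NP}$-hardness, or conclude tractability via the Maltsev polynomial otherwise.
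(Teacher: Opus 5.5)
Your proposal has a genuine gap, and it sits where you place the main obstacle.

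\textbf{What your case split does and does not cover.} Theorem \ref{MainDichotomy} needs nothing but some binary polynomial of $\A$ with a two-sided identity. Whenever you can produce one globally, abelian or not, you are already done. Solvable algebras in a congruence modular variety are Mal'cev, so no induction over the congruence lattice is needed there either. The real content of the statement is the non-solvable algebras without such a polynomial.

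\textbf{Why the localisation to $\A|_{e(A)}$ fails.}
\begin{enumerate}
\item Instances of $\PEqn(\A,\B)$ are \emph{term} equations. An idempotent unary polynomial $e$ from tame congruence theory uses constants from $A$, so it cannot be imposed by an equation $x=e(x)$.
\item Replacing those constants by fresh variables lets a $\B$-solution assign them anything, so \textsc{no} instances are not preserved.
\item The target ``$\B|_{\phi(e)(B)}$'' depends on a choice of $\phi$ that $\B$-solutions need not respect.
\item Even granting such a reduction, you would still have to show that hardness occurs exactly when no homomorphism $\psi\colon\A\to\B$ has abelian image. An induced algebra on a neighbourhood does not see the homomorphisms $\A\to\B$.
\end{enumerate}

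\textbf{The complexity depends on $\B$.} Your opening plan, proving $\mathbf{NP}$-hardness whenever $\A$ lacks a binary polynomial with identity, cannot be right as stated. For a one-element $\B$ the problem is trivially in $\mathbf{P}$ for every $\A$. Likewise ``$\A$ non-abelian'' is not the hardness condition: Example \ref{ExampleRings} is non-abelian yet tractable. You never state when the non-abelian case is tractable.

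\textbf{The missing idea: split the polymorphism minion, not the algebra $\A$.} The paper partitions $\Pol(\A,\B)$ by whether the image $p(\A^n)$ is solvable.

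For non-solvable images it takes $I(p)=\{i: \A^n/(\theta_i^{(n)}\vee\ker p)\text{ is not solvable}\}$, which is a selection function:
\begin{enumerate}
\item It is nonempty, since otherwise additivity and monotonicity of the commutator give $[1]^{d+n-1}\le\ker p$, making the image solvable.
\item It satisfies $\pi(I(p))\subseteq I(p^{(\pi)})$, because $\A^n/(\theta_i^{(n)}\vee\ker p)$ is a homomorphic image of $\A^m/(\theta_{\pi(i)}^{(m)}\vee\ker p^{(\pi)})$.
\item It is bounded. Since $\theta_i^{(n)}\vee\theta_j^{(n)}=1$ for $i\neq j$, distinct indices in $I(p)$ give distinct congruences above $\ker p$.
\end{enumerate}

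Polymorphisms with solvable image factor through the maximal solvable quotient $\widehat{\A}=\A/[1]^d$, which is Mal'cev. So $p\mapsto\widehat{p}$ is a minion homomorphism into $\Pol(\widehat{\A},\B)$. No homomorphism $\widehat{\A}\to\B$ has abelian image, so Theorem \ref{DichotomyMal'cev} supplies a selection function there. That theorem is Theorem \ref{MainDichotomy} applied with $x\cdot y=m(x,e,y)$, together with Lemma \ref{AbelianMal'cev}.

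Proposition \ref{SelectionFunctionCriteria} then gives hardness. Tractability comes from the alternating polymorphisms of Proposition \ref{TractabilityTaylorAlg}. This route never needs to decide whether $\A$ has a global binary polynomial with identity.
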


Following these results, we prove Theorem \ref{MetaProblemMal'cev}, which shows there is a quasi-polynomial time algorithm that decides the metaproblem for promise systems of equations over finite algebras in a congruence modular variety.

\begin{main}\label{MetaProblemMal'cev}
    Let $\A$ and $\B$ be finite algebras with a homomorphism $\A\to \B$ such that $\A$ is in a congruence modular variety. Then there is a quasi-polynomial time algorithm which decides if $\PEqn(\A,\B)$ is in $\mathbf{P}$ or is instead $\mathbf{NP}$-hard.
\end{main}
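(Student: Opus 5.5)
We will rely on the tractability criterion that the paper establishes in Theorem \ref{DichotomyCongruenceModular}. We expect that criterion to have the following shape. Since $\A$ lies in a congruence modular variety, $\A$ has a Gumm/Day term, and suitable binary polynomials built from it give operations with identity on the relevant subsets. The analysis then reduces, as in the monoid case of Larrauri, Mottet and Živný, to the following question: does the promise problem admit a ``linear'' (abelian-group-like) or ``semilattice-like'' polymorphism family from the polymorphism minion of $(\A',\B')$ into a fixed tractable minion (e.g. $\mathscr{Z}$ for AIP or $\mathscr{Q}_{\mathrm{conv}}$ for BLP)? Otherwise it is NP-hard.

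The plan is to show that this criterion is equivalent to the existence of a small witness: finitely many polymorphisms $A^n \to B$ of bounded arity satisfying a fixed finite set of minor conditions. Concretely, we expect the condition to say that certain pairs $a,b \in A$ (or certain congruence quotients related to the identity element $e$) map under the homomorphism $\A\to\B$ into elements whose generated sub-structure of $\B$ is abelian, or into a coset of an abelian group. Checking this amounts to searching, inside subalgebras of $\B$ generated by images of $\A$, for a term or polynomial satisfying a fixed identity system.

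The algorithm proceeds in the following order.
\begin{enumerate}
\item Find a Mal'cev/Gumm term witnessing congruence modularity of $\A$ on the finite algebra. This is the main obstacle, since deciding congruence modularity (or finding a Gumm term) for finite idempotent algebras is known to be doable in polynomial time, but for non-idempotent algebras the known bounds (Freese–Valeriote, Horowitz) give only quasi-polynomial or worse. We would use the local characterization: search for Day/Gumm terms via subpower-membership on $A^{|A|^2}$ restricted to pairs. A term of bounded description can be represented as a table on $A^{k}$ for fixed small $k$, and the closure computation runs in time $|A|^{O(\log|A|)}$ times the size of the input. That is the quasi-polynomial bound.
\item Using this term, compute the binary polynomials with identity $e$. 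Compute the required sub-universes of $\B$ generated by images (closure computations, polynomial in the input).
\item Test the abelian/linear condition of the criterion. It is local: commutator conditions such as $[\alpha,\alpha]=0$ can be checked via the term condition on bounded-arity tuples, in polynomial time given the Gumm term, using modular commutator theory.
\end{enumerate}

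The key remaining point is that the witness from step (1) has representation size quasi-polynomial, so the total runtime is quasi-polynomial. We would first try to adapt the Freese–Valeriote local-to-global argument, which produces the needed terms only on $2$-generated subalgebras. If that fails, we would fall back on the known quasi-polynomial algorithm for deciding whether a finite algebra generates a congruence modular variety, and feed its witness into steps (2)–(3).
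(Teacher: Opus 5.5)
There is a genuine gap. You never identify the actual tractability criterion, so your algorithm never addresses the step where the quasi-polynomial cost really arises.

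By Theorem \ref{DichotomyCongruenceModular}, $\PEqn(\A,\B)$ is in $\mathbf{P}$ exactly when there is a homomorphism $\psi\colon\A\to\B$ with $\psi(\A)$ abelian. There is no separate semilattice/BLP branch, and neither a Gumm term nor a binary polynomial with identity enters the criterion. This is an existential statement over homomorphisms $\A\to\B$; naively one would range over all $|B|^{|A|}$ maps. Your step (3) only checks whether a \emph{given} algebra is abelian. That check is polynomial, but it cannot decide whether a suitable $\psi$ exists.

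Your step (1) also puts the difficulty in the wrong place. Congruence modularity of $\A$ is a hypothesis of the theorem, so nothing has to be found or verified there. Moreover, for arbitrary non-idempotent finite algebras, deciding whether the generated variety is congruence modular is EXPTIME-complete (Freese--Valeriote), so the quasi-polynomial ``fallback'' you invoke does not exist.

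The missing idea is Lemma \ref{MaxAbelianQuotient}. In this setting $\A/\ker\psi$ is abelian iff $\ker\psi\geq[1_\A,1_\A]$, so a homomorphism with abelian image exists iff there is a homomorphism $\widehat{\A}=\A/[1_\A,1_\A]\to\B$. The proof then runs as follows.
\begin{enumerate}
\item The congruence $[1_\A,1_\A]$ and the Mal'cev operation of $\widehat{\A}$ can be computed in polynomial time, as in Mayr's algorithm.
\item Since $\widehat{\A}$ is an abelian Mal'cev algebra, this operation is $x-y+z$ for an abelian group on $\widehat{A}$. Hence $\widehat{\A}$ is generated, already under this single term operation, by some $u_1,\dots,u_n$ with $n\leq\log|A|+1$.
\item Every homomorphism $\widehat{\A}\to\B$ is determined by the images $v_1,\dots,v_n$ of these generators.
\item For each of the $|B|^n\leq 2^{\log|B|(\log|A|+1)}$ choices, compute the subalgebra of $\widehat{\A}\times\B$ generated by the pairs $(u_i,v_i)$. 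Check whether it is the graph of a function; this takes polynomial time per choice.
\end{enumerate}
This enumeration over generator images, not the search for a Gumm term, is what gives the quasi-polynomial bound (Proposition \ref{QuasiPolyAlg}).
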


\section{Equivalence of PCSPs and promise systems of equations}

We can reformulate a promise system of equations problem as a PCSP with relational structures whose relations are the graphs of the basic operations in our algebras. This allows us to use results on PCSPs to work on promise systems of equations.

\begin{definition}
    Let $A$ be a set and let $f\colon A^n\to A$ be a function. Then we define the relation $$f^{\circ} = \lbrace (x_1, \dots , x_n,y)\in A^{n+1}:f(x_1, \dots , x_n) = y \rbrace .$$
\end{definition}

Note that if $f$ is a 0-ary (constant) function, then $f^{\circ}$ is a singleton set.

\begin{definition}
    Let $\A$ be an algebra of signature $\lbrace f_1, \dots , f_n \rbrace$. We denote by $\mathbb{A}^\circ$ the relational structure $( A; (f_1^{\A})^\circ, \dots , (f_n^{\A})^\circ )$.
\end{definition}

In \cite[Theorem 2.2]{ZadoriLarose06}, Larose and Z\'{a}dori show that systems of polynomial equations over finite algebras are logspace equivalent to CSPs over finite relational structures. A similar proof works for the following:

\begin{theorem}
    \thlabel{CSP equiv to eq}
    Let $\A$ be a finite algebra of signature $\lbrace f_1, \dots , f_n \rbrace$. Then $\text{SysTerm}(\A)$ is logspace equivalent to $\CSP(\mathbb{A}^{\circ})$.
\end{theorem}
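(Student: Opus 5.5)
We give two logspace reductions, one in each direction, in the style of Larose and Z\'adori \cite[Theorem 2.2]{ZadoriLarose06}.

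\emph{From $\SysTerm(\A)$ to $\CSP(\mathbb{A}^{\circ})$.} Take a system $\Sigma(x_1,\dots,x_m)$ and flatten it.
\begin{enumerate}
\item Introduce a fresh variable $y_t$ for every subterm occurrence $t$ in $\Sigma$; for a variable $x_i$ we simply take $y_{x_i}=x_i$.
\item For each occurrence $t=f_j(t_1,\dots,t_k)$, add the constraint $(y_{t_1},\dots,y_{t_k},y_t)\in (f_j^{\A})^{\circ}$. A constant symbol $t=c$ gives the unary constraint $(y_t)\in (c^{\A})^{\circ}$.
\item Each equation $s=t$ then requires $y_s$ and $y_t$ to be identified. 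Since $\mathbb{A}^{\circ}$ need not contain the equality relation, we do this by literally merging the two variables.
\end{enumerate}
We merge by computing connected components of the graph whose edges are the equation pairs. Undirected connectivity is in logspace by Reingold's theorem, so we can replace each $y_t$ by the least-indexed vertex of its component. Handling equality by merging rather than by an equality relation is the point needing some care.

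Correctness is routine. Any solution $h$ of $\Sigma$ extends to $y_t\mapsto t^{\A}(h)$, and this map satisfies all constraints. Conversely, a homomorphism from the instance to $\mathbb{A}^{\circ}$ forces, by induction on terms, each $y_t$ to take the value $t^{\A}$ of the $x_i$'s, so all equations hold. Enumerating the subterm occurrences of $\Sigma$ by position in the input string is clearly logspace.

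\emph{From $\CSP(\mathbb{A}^{\circ})$ to $\SysTerm(\A)$.} Each constraint $(z_1,\dots,z_k,z)\in(f_j^{\A})^{\circ}$ becomes the equation $f_j(z_1,\dots,z_k)=z$. A constraint coming from a constant symbol $c$, namely $(z)\in (c^{\A})^{\circ}$, becomes $c=z$. Solutions correspond exactly, and this translation is clearly logspace, constraint by constraint.

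The main obstacle is the equality merging in the first reduction. A naive approach would add an equality relation, which is not available. I would resolve it as above, via undirected reachability in logspace (Reingold). If one prefers to avoid that, note that in signatures with at least one symbol of positive arity one can express $y_s=y_t$ primitive-positively: take $f(y_s,\dots)=w$ and $f(y_t,\dots)=w$ with shared remaining arguments. This is not generally faithful, though, so merging is the approach I would commit to.
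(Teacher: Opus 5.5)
Your proof is correct and matches what the paper intends: the paper gives no argument of its own beyond invoking the Larose--Z\'adori translation of instances in both directions (flattening terms into graph-of-operation constraints, and reading each constraint $(z_1,\dots,z_k,z)\in(f_j^{\A})^{\circ}$ back as the equation $f_j(z_1,\dots,z_k)=z$), and that is exactly your construction. Merging variables via Reingold's theorem is the right way to handle equations, and it cannot be avoided: for $\A=(\{0,1\};0,1)$ with two constant symbols, $\CSP(\mathbb{A}^{\circ})$ is a purely local check while $\SysTerm(\A)$ contains undirected $s$--$t$ non-connectivity, so the statement itself implies Reingold's theorem (and the pp-trick in your last paragraph is indeed unsound and can be dropped).
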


The proof of the above relies only on converting an instance of a system of term equations in a signature $\lbrace f_1, \dots , f_n \rbrace$ to an instance of a constraint satisfaction problem in a signature $\lbrace  f_1^{\circ}, \dots , f_n^{\circ}\rbrace$ and vice versa. The reductions do not depend on the structure $\A$ and can be recycled to prove the following equivalence:

\begin{theorem}
    \label{PEqnasPCSP}
    Let $\A$ and $\B$ be finite algebras of signature $\lbrace f_1, \dots , f_n \rbrace$ with a homomorphism $\A\to \B$. Then $\PEqn(\A, \B)$ is logspace equivalent to $\PCSP(\mathbb{A}^{\circ}, \mathbb{B}^{\circ})$.
\end{theorem}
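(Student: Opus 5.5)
We give two logspace reductions whose constructions depend only on the signature $\lbrace f_1,\dots,f_n\rbrace$ and not on the algebras. The task in each direction is to check that the map sends \textsc{yes} instances to \textsc{yes} instances and \textsc{no} instances to \textsc{no} instances. Both properties come from one bijection between solutions that holds uniformly for every algebra $\C$ of the signature, applied once with $\C=\A$ and once with $\C=\B$.

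\emph{From $\PEqn(\A,\B)$ to $\PCSP(\mathbb{A}^\circ,\mathbb{B}^\circ)$.} Start with a system $\Sigma(x_1,\dots,x_m)$. For every subterm occurrence $t$ of $\Sigma$, introduce a vertex $v_t$; a variable $x_i$ gets a single vertex $v_{x_i}$. If $t=f_j(t_1,\dots,t_k)$, add the tuple $(v_{t_1},\dots,v_{t_k},v_t)$ to the relation interpreting $f_j^\circ$ in $\mathbb{I}$. A constant symbol $f_j$ is handled the same way and gives a unary constraint. An equation $s=t$ is handled by identifying $v_s$ with $v_t$. To stay in logspace, we do not compute the quotient explicitly. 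Instead we add, for each equation, a gadget forcing equality, using only the available relations. For example, we introduce a fresh copy $u$ of the term $s$ that shares leaf vertices with $s$, and reuse the vertex $v_t$ as the output vertex of that copy. Since every $f_j^\circ$ is the graph of a function in any algebra, the output of the copy is forced to be $s^\C$ evaluated at the leaves, hence equal to $v_s$. This step needs care, and I expect it to be the main technical point.

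The alternative is to compute the equivalence relation generated by the identifications, which is undirected reachability and therefore in logspace by Reingold's theorem. This is presumably what the proof of \thref{CSP equiv to eq} from \cite{ZadoriLarose06} does, and I would simply reuse it. Once $\mathbb{I}$ is built, the key claim is the following: for every algebra $\C$ of the signature, solutions $h$ of $\Sigma$ in $\C$ correspond to homomorphisms $\mathbb{I}\to\mathbb{C}^\circ$. In one direction, $h$ induces the map $v_t\mapsto t^\C(h(\bar x))$. In the other, a homomorphism restricted to the variable vertices is a solution, by induction on terms using that each relation is the graph of a function. Taking $\C=\A$ shows that \textsc{yes} instances go to \textsc{yes} instances, and taking $\C=\B$ shows the same for \textsc{no} instances.

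\emph{From $\PCSP(\mathbb{A}^\circ,\mathbb{B}^\circ)$ to $\PEqn(\A,\B)$.} Start with an instance $\mathbb{I}$ and take one variable $x_v$ for each element $v$ of $\mathbb{I}$. For each tuple $(v_1,\dots,v_k,w)$ in the relation interpreting $f_j^\circ$, write the equation $f_j(x_{v_1},\dots,x_{v_k})=x_w$. Homomorphisms $\mathbb{I}\to\mathbb{C}^\circ$ are then literally the same as solutions in $\C$, for every $\C$. This reduction is clearly logspace.

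Combining the two directions gives logspace equivalence. The only nontrivial point is keeping the first reduction in logspace despite the identifications imposed by the equations; there I would invoke the argument of \thref{CSP equiv to eq} essentially verbatim.
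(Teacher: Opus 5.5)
Your proposal is correct and is essentially the paper's argument. The paper only remarks that the Larose--Z\'adori reductions between term equations and $\CSP$ instances depend on the signature and not on the algebra, so the uniform correspondence between solutions in $\C$ and homomorphisms $\mathbb{I}\to\mathbb{C}^\circ$, applied with $\C=\A$ and with $\C=\B$, sends \textsc{yes} instances to \textsc{yes} instances and \textsc{no} instances to \textsc{no} instances, exactly as you argue (your ``fresh copy'' gadget cannot replace identification for equations $x_i=x_j$, since the copy of a variable is the shared leaf itself, so the connected-components step via Reingold's theorem that you fall back on is genuinely needed there).
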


By Proposition \ref{PEqnasPCSP}, we can use results about PCSPs in our discussion of promise equations. 

\section{Minions and polymorphisms}

\begin{definition}
\thlabel{polymorphismRelation}
Let $\mathbb{A}$, $\mathbb{B}$ be structures and let $n\in \mathbb{N}\backslash \lbrace 0 \rbrace$. We say $p\colon\mathbb{A}^n\to\mathbb{B}$ is a \emph{polymorphism} from $\mathbb{A}$ into $\mathbb{B}$ if $p$ is a homomorphism from $\mathbb{A}^n$ into $\mathbb{B}$. We denote the set of polymorphisms from $\mathbb{A}$ into $\mathbb{B}$ by $\Pol(\mathbb{A}, \mathbb{B})$, and we denote $\Pol(\mathbb{A}) = \Pol(\mathbb{A}, \mathbb{A})$.
\end{definition}

It is known that $\Pol(\mathbb{A}, \mathbb{B})$ determines the complexity of $\PCSP(\mathbb{A}, \mathbb{B})$ (see \cite[Theorem 2.25, Theorem 3.1]{AlgApproachToPSCP}). In the study of PCSPs, we work with polymorphisms between two different structures in $\Pol(\mathbb{A}, \mathbb{B})$, and therefore polymorphisms can no longer be composed as they can in $\Pol(\mathbb{A})$. However, $\Pol(\mathbb{A}, \mathbb{B})$ is still closed under variable manipulations, such as identification, permutation, and introduction of fictitious variables. Formally, $\Pol(\mathbb{A}, \mathbb{B})$ is a \emph{minion}.

\begin{definition}
    \label{AbstractMinion}
    \cite{larrauri2024solving}
    A \emph{minion} $\mathcal{M}$ consists of a set $\mathcal{M}^{(n)}$ for each $n\in \mathbb{N}\backslash \lbrace 0 \rbrace$, and a function $\pi^{\mathcal{M}}\colon\mathcal{M}^{(n)}\to \mathcal M^{(m)}$ for each function $\pi\colon [n]\to [m]$ satisfying:

    \begin{enumerate}
        \item[(1)] $id_{[n]}^\mathcal{M} = id_{\mathcal{M}^{(n)}}$ for all $n\in \mathbb{N}$.
        \item[(2)] $(\pi\circ \tau)^\mathcal{M} = \pi^{\mathcal{M}}\circ \tau^{\mathcal{M}}$ for all $\pi, \tau$ that can be composed.
    \end{enumerate}
    
\end{definition}

\begin{definition}
    \label{Minor}
    \cite[Definition 2.19]{AlgApproachToPSCP}
    Let $A$ and $B$ be sets, let $m,n\in \mathbb{N}$, let $f\colon A^n\to B$ be a function, and let $\pi\colon [n]\to [m]$ be a function. We define $f^{(\pi)}\colon A^m \to B$ by $$f^{(\pi)}(x_1, \dots , \dots x_m) = f(x_{\pi(1)}, \dots x_{\pi(n)})$$ for all $x_1, \dots x_m \in A$. We call $f^{(\pi)}$ a \emph{minor} of $f$. We denote by $f_\Delta$ the function $$f_\Delta\colon A\to B , \quad x\mapsto f(x, \dots , x)$$ and we call $f_\Delta$ the \emph{unary minor} of $f$.
\end{definition}

\begin{definition}
    \label{Minion}
    \cite[Definition 2.20]{AlgApproachToPSCP}
    Let $A$ and $B$ be sets and let $\mathcal{O}(A,B) = \lbrace f\colon A^n\to B : n\geq 1 \rbrace$. A \emph{function minion} $\mathcal{M}$ from $A$ to $B$ is a subset of $\mathcal{O}(A,B)$ closed under taking minors of functions. The set of $n$-ary elements of $\mathcal{M}$ is denoted $\mathcal{M}^{(n)}$.
\end{definition}

Note that if $\mathbb{A}$ and $\mathbb{B}$ are structures, then $\Pol(\mathbb{A}, \mathbb{B})$ is a function minion. A function minion is an example of a minion. 

In the proof of Theorem \ref{MainDichotomy}, we will make use of \emph{monoidal minions}.

\begin{definition}
    \label{CommutativeTuple}
    \cite{larrauri2024solving}
    Let $\mathbf{M}$ be a monoid and $(a_1, \dots , a_n)\in M^n$. We say $(a_1, \dots , a_n)$ is a \emph{commutative tuple} from $M$ if $a_ia_j = a_ja_i$ for all $i,j\in [n]$. 
\end{definition}

\begin{definition}
    \label{MonoidalMinion}
    \cite[Definition 3]{larrauri2024solving}
    Let $\mathbf{M}$ be a monoid and let $a\in M.$ The \emph{monoidal minion} $\mathcal{M}_{\mathbf{M},a}$ consists of commutative tuples from $M$ whose product is $a$. Given $(a_1, \dots , a_n)\in \mathcal{M}_{M,a}$, and $\pi\colon [n]\to [m]$, the minor $(a_1, \dots , a_n)^{(\pi)}$ is the tuple $(b_1, \dots , b_m)$ where $b_i = \prod_{j\in \pi^{-1}(i)}a_j$ for $i\in [m]$. Here, the empty product is just the identity $e$ of the monoid $\mathbf{M}$. Hence, if $\pi^{-1}(i)$ is empty, we get $b_i = \prod_{j\in \pi^{-1}(i)}a_j = e$ above.
\end{definition}

Note that monoidal minions are minions.

\begin{definition}
    \label{MinionHom}
    \cite[Definition 2.21]{AlgApproachToPSCP}
    Let $\mathcal{M}$ and $\mathcal{N}$ be minions. A function $\xi\colon\mathcal{M}\to \mathcal{N}$ is called a \emph{minion homomorphism} if the following hold:

    \begin{enumerate}
        \item[(1)] $\xi$ preserves arities. That is, for all $n\in \mathbb{N}\backslash \lbrace 0 \rbrace$ we have $\xi(\mathcal{M}^{(n)}) \subseteq \mathcal{N}^{(n)}$.
        \item[(2)] For all $m,n\in \mathbb{N}\backslash \lbrace 0 \rbrace$ and $\pi\colon[n]\to [m]$ and $f\in \mathcal{M}^{(n)}$, we have $\xi(f^{(\pi)}) = (\xi(f))^{(\pi)}$.
    \end{enumerate}

    If there are minion homomorphisms $\xi_1\colon \mathcal{M}\to \mathcal{N}$ and $\xi_2\colon \mathcal{N}\to \mathcal{M}$, we say $\mathcal{M}$ and $\mathcal{N}$ are \emph{homomorphically equivalent}.
\end{definition}

In \cite{BLP+AIP} and \cite{AlgApproachToPSCP}, the basic linear programming relaxation (BLP) and the basic affine integer relaxation (AIP) of an instance of a PCSP are studied. Given an instance $I$ of $\PCSP(\mathbb{A},\mathbb{B})$, one can construct an instance $\text{BLP}_\mathbb{A}(I)$ of a linear program corresponding to $I$. One can also construct the so called affine relaxation instance $\text{AIP}_\mathbb{A}(I)$. This is a system of equations over the integers. Under some conditions, algorithms which use these new instances will correctly solve $\PCSP(\mathbb{A},\mathbb{B})$. The AIP algorithm uses the instance $\text{AIP}_\mathbb{A}(I)$ to solve $\PCSP(\mathbb{A}, \mathbb{B})$. The BLP+AIP algorithm described in \cite{BLP+AIP} uses both instances. Both algorithms run in polynomial time.

\begin{definition}
    \cite[Definition 3.1]{BLP+AIP}
    The BLP+AIP algorithm \emph{correctly solves} $\PCSP(\mathbb{A}, \mathbb{B})$ if it accepts any \textsc{yes} instance of $\PCSP(\mathbb{A},\mathbb{B})$ and rejects any \textsc{no} instance of $\PCSP(\mathbb{A}, \mathbb{B})$.
\end{definition}

We will now describe the conditions required for BLP+AIP and for AIP to correctly solve $\PCSP(\mathbb{A},\mathbb{B})$. In particular, the polymorphism minion $\Pol(\mathbb{A},\mathbb{B})$ is required to contain functions which have some symmetry properties.

\begin{definition}
    \label{2BlockSymmetric}
    Let $A$ and $B$ be sets, let $n\in \mathbb{N}$, and let $f\colon A^{2n+1} \to B$ be a function. Then $f$ is called \emph{2-block symmetric} if there exists a partition of $[2n+1]$ into two blocks of size $n$ and $n+1$, respectively, such that for every permutation $\pi$ of $[2n+1]$ that preserves each block, we have $f = f^{(\pi)}$.
\end{definition}

\begin{definition}
    \label{alternating function}
    Let $A,B$ be sets, let $n\in \mathbb{N}$, let $f\colon A^{2n+1}\to B$ be a function. Then $f$ is called \emph{alternating} if for all $a_1, \dots a_{2n+1},x,y \in A$, we have $f(a_1, \dots ,a_{2n-1}, x,x) = f(a_1, \dots ,a_{2n-1}, y,y) $ and $f(a_1, \dots ,a_{2n+1}) = f(a_{\pi(1)}, \dots , a_{\pi(2n+1)})$ for all permutations $\pi$ on $[2n + 1]$ which send even numbers to even numbers and odd numbers to odd numbers.
\end{definition}

Note that an alternating function is 2-block symmetric with blocks corresponding to even and odd indices.

\begin{example}
    Let $(A; +)$ be an abelian group. Then $f\colon A^{2n+1}\to A$ given by $$f(x_1, \dots x_{2n+1}) = x_1 -x_2 +x_3 - \dots  -x_{2n}+x_{2n+1}$$ is an alternating function.
\end{example}

\begin{theorem}
    \label{BLP+AIP characterization}
    \cite[Theorem 5.1]{BLP+AIP}
    Let $\mathbb{A}$ and $\mathbb{B}$ be finite relational structures with a homomorphism $\mathbb{A}\to \mathbb{B}$. Then $\PCSP(\mathbb{A}, \mathbb{B})$ is solvable by $\BLP + \AIP$ if and only if $\Pol(\mathbb{A}, \mathbb{B})$ contains 2-block symmetric maps of all odd arities.
\end{theorem}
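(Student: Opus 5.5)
The plan is to go through the minion-theoretic characterization of relaxation algorithms from \cite{AlgApproachToPSCP}. First, describe the $\BLP+\AIP$ algorithm as a test for a homomorphism into the free structure of a minion $\mathcal{M}_{\BLP+\AIP}$. Its $n$-ary elements are pairs $(p,z)$, where $p\in\mathbb{Q}_{\ge 0}^n$ is a probability vector, $z\in\mathbb{Z}^n$ has coordinate sum $1$, and $\mathrm{supp}(z)\subseteq\mathrm{supp}(p)$. Minors act on both components by summing coordinates over fibres of $\pi$. The first step is to check, by unwinding the definitions of $\BLP_{\mathbb{A}}(I)$ and $\AIP_{\mathbb{A}}(I)$, that the algorithm accepts $I$ exactly when there is a homomorphism $I\to\mathbb{F}_{\mathcal{M}_{\BLP+\AIP}}(\mathbb{A})$. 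The support condition records the refinement step, in which the AIP is solved only on the relative interior of a BLP solution.

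Given this, the general theorem of \cite{AlgApproachToPSCP} says that such a relaxation solves $\PCSP(\mathbb{A},\mathbb{B})$ if and only if there is a minion homomorphism $\mathcal{M}_{\BLP+\AIP}\to\Pol(\mathbb{A},\mathbb{B})$. So the statement reduces to showing that such a homomorphism exists if and only if $\Pol(\mathbb{A},\mathbb{B})$ has 2-block symmetric maps of every odd arity.

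For the forward direction, fix $n$ and take the element
\[
\Bigl(\tfrac{1}{2n+1}(1,\dots,1),\,(1,\dots,1,-1,\dots,-1)\Bigr)\in\mathcal{M}_{\BLP+\AIP}^{(2n+1)},
\]
where the integer part has $n+1$ entries equal to $+1$ and $n$ entries equal to $-1$. This element is invariant under every permutation preserving the two blocks. Since minion homomorphisms commute with minors, its image is a 2-block symmetric polymorphism of arity $2n+1$.

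The converse is the main obstacle. Given an element $(p,z)$ of arity $m$, we must produce a polymorphism depending only on $(p,z)$, and it must be compatible with minors. The approach I would take is as follows:
\begin{enumerate}
\item Choose $N$ large and write the rational vector $p$ with denominator dividing $N$.
\item Build a 2-block symmetric $f$ of arity $2N'+1$, for suitable $N'$, by assigning to coordinate $i$ a number of positions in the $(n+1)$-block and in the $n$-block.
\item Make these counts differ by exactly $z_i$ and have total proportional to a large multiple of $p_i$ plus a correction. The correction is possible because $z_i\neq 0$ only where $p_i>0$, so the counts stay nonnegative.
\item Define the image of $(p,z)$ as the corresponding minor of $f$.
\end{enumerate}
By 2-block symmetry, the resulting function depends only on the multiset of block counts, and summing counts over fibres corresponds exactly to taking minors.

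The difficulty is that the choice of $N$ and of the scaling must be made uniformly, so that minors of the image equal the image of minors. Arities are unbounded, so a single $N$ cannot serve for everything. I would first handle this on each finite subminion, such as the elements with bounded denominators and bounded $|z|$. On such a piece, one suitably large arity works, and using the 2-block symmetric map of exactly that arity gives a genuine minion homomorphism. I would then pass to all of $\mathcal{M}_{\BLP+\AIP}$ by compactness. Since $\Pol(\mathbb{A},\mathbb{B})$ has only finitely many functions of each arity, a König's lemma / Tychonoff argument applies: the set of partial homomorphisms on each finite piece is finite, nonempty, and closed under restriction, so a compatible global choice exists.

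It is also enough to know that the free structure only needs to be checked on finite instances, each of which uses finitely many elements of bounded size. So even the finite-piece version suffices to show that the algorithm is correct.
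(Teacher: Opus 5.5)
The paper does not prove this statement; it only cites \cite[Theorem 5.1]{BLP+AIP}. Your proposal is correct and follows essentially the argument of that source. You characterize $\BLP+\AIP$ by the minion of pairs $(p,z)$ with $\mathrm{supp}(z)\subseteq\mathrm{supp}(p)$, and you obtain the 2-block symmetric polymorphisms as images of the block-invariant element. For the converse you realize $(p,z)$ as a minor of a 2-block symmetric map of arity $2K+1$ with block counts $Kp_i+z_i$ and $Kp_i$, and then finish by compactness or, more simply, by a per-instance argument.

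Two details need tidying. Your pieces should bound $\|z\|_1$ rather than $\max_i|z_i|$, since only the former is preserved by minors. The pieces are also not finite sets, but every element is a minor along an injection of an element of arity at most $N$. So a homomorphism on a piece is determined by finitely many values, and the K\"onig argument goes through.
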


\begin{theorem}
    \label{AIP characterization}
    \cite[Theorem 7.19]{AlgApproachToPSCP}
    Let $\mathbb{A}$ and $\mathbb{B}$ be finite relational structures with a homomorphism $\mathbb{A}\to \mathbb{B}$. Then $\PCSP(\mathbb{A}, \mathbb{B})$ is solvable by $\AIP$ if and only if $\Pol(\mathbb{A}, \mathbb{B})$ contains alternating maps of all odd arities.
\end{theorem}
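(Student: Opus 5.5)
The statement is \cite[Theorem 7.19]{AlgApproachToPSCP}, taken as a black box, so the plan is to recall the standard minion-theoretic argument behind it rather than reprove it from scratch. The central tool is the free structure (equivalently, the minion $\mathcal{Z}_{\mathrm{aff}}$) of the affine integer relaxation. Its $n$-ary elements are the integer tuples $(z_1,\dots,z_n)\in\mathbb{Z}^n$ with $\sum_i z_i = 1$. Minors are taken by summing coordinates along fibres, exactly as in the monoidal minion of Definition~\ref{MonoidalMinion} but with $(\mathbb{Z};+)$ in place of the monoid.

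The first step is the general fact that $\AIP$ solves $\PCSP(\mathbb{A},\mathbb{B})$ if and only if there is a minion homomorphism $\mathcal{Z}_{\mathrm{aff}}\to\Pol(\mathbb{A},\mathbb{B})$. For the forward direction, an integer solution of $\AIP_\mathbb{A}(I)$ is precisely a homomorphism from $I$ into the free structure $\mathbb{F}_{\mathcal{Z}_{\mathrm{aff}}}(\mathbb{A})$. Composing it with the map $\mathbb{F}_{\mathcal{Z}_{\mathrm{aff}}}(\mathbb{A})\to\mathbb{B}$ induced by the minion homomorphism yields a homomorphism $I\to\mathbb{B}$. The converse uses compactness: one applies $\AIP$ to the instance $\mathbb{F}_{\mathcal{Z}_{\mathrm{aff}}}(\mathbb{A})$ truncated to entries of bounded size, whose acceptance yields the polymorphisms, and closes via a standard inverse-limit argument.

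The second step is to identify minion homomorphisms $\mathcal{Z}_{\mathrm{aff}}\to\Pol(\mathbb{A},\mathbb{B})$ with families of alternating polymorphisms. Given such a homomorphism $\xi$, let $f_n=\xi(1,-1,1,\dots,-1,1)$ of arity $2n+1$. This element of $\mathcal{Z}_{\mathrm{aff}}$ is invariant under parity-preserving permutations, hence so is $f_n$. Identifying the last two coordinates sends it to a tuple ending in $-1+1=0$. That element is fixed by the minor which moves the zero coordinate, so $f_n(\dots,x,x)$ does not depend on $x$. Hence $f_n$ is alternating.

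Conversely, given alternating maps of all odd arities, one must build $\xi$. Any $z\in\mathbb{Z}^n$ with $\sum z_i=1$ is a minor of a tuple $(1,-1,\dots,1)$ of length $2N+1$ for $N$ large, by placing $|z_i|$ signed $\pm1$ entries at position $i$ and padding with cancelling $+1,-1$ pairs. One then defines $\xi(z)$ as the corresponding minor of $f_N$. The main obstacle is well-definedness: the value must not depend on $N$ or on the chosen representation, and the resulting map must commute with minors. Parity-symmetry handles reorderings. The cancellation property $f(\dots,x,x)=f(\dots,y,y)$ lets one delete a $+1,-1$ pair mapped to the same coordinate, reducing to $f_{N-1}$ only after one fixes a compatible choice of the family $(f_N)$. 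I would first try to extract such a compatible family by compactness (König's lemma over the finitely many functions of each arity). That is the delicate part of the cited proof.
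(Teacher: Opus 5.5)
Your proposal is correct. The paper gives no argument for this statement; it imports it from \cite[Theorem 7.19]{AlgApproachToPSCP}. So the comparison is between your reconstruction and the most economical proof, not with anything in the text.

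Your route through $\mathcal{Z}_{\mathrm{aff}}$ works. The K\"onig step is legitimate once you add one observation. The reduct $f(x_1,\dots,x_{2N-1},x_1,x_1)$ of an alternating polymorphism of arity $2N+1$ is again an alternating polymorphism: it is a minor, and parity-preserving permutations let you move any cancelling pair to the last two positions. Hence the finite nonempty sets $S_N$ of such maps, with the reduct maps $S_N\to S_{N-1}$, form an inverse system. What your route buys is the stronger intermediate equivalence with $\mathcal{Z}_{\mathrm{aff}}\to\Pol(\mathbb{A},\mathbb{B})$, at the cost of two compactness arguments.

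For the equivalence as stated, neither compactness argument is needed.

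\emph{Alternating maps imply AIP solves.} Suppose $\AIP$ accepts $I$. Write each constraint weight $\lambda_C$ as signed multisets $(P_C,Q_C)$ of tuples of $R^{\mathbb{A}}$ with $P_C-Q_C=\lambda_C$. Pad all of them with cancelling pairs to common sizes $N+1$ and $N$. Then apply a single alternating polymorphism of arity $2N+1$ columnwise, with $P_C$ in odd and $Q_C$ in even positions. The $i$-th coordinate of the result depends only on $\mathrm{pr}_iP_C-\mathrm{pr}_iQ_C=\lambda_{v_i}$. This is because two signed representations of the same integer vector with equal sizes differ only by parity-preserving permutations and exchanges of cancelling pairs. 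So this defines a homomorphism $I\to\mathbb{B}$ directly.

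\emph{AIP solves implies alternating maps.} Let $I_n$ be the image of $\mathbb{A}^{2n+1}$ under the equivalence generated by parity-preserving permutations and by $(\dots,x,x)\sim(\dots,y,y)$. This is a finite instance whose homomorphisms to $\mathbb{B}$ are exactly the alternating polymorphisms of arity $2n+1$. $\AIP$ accepts $I_n$ via signed counts (odd minus even coordinates), which are class invariants. Since $\AIP$ solves the problem, $I_n\to\mathbb{B}$.

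This direct proof uses one polymorphism per instance and never needs compatibility between arities.
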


\section{Minor conditions and hardness from minions}

$\Pol(\A, \B)$ controls the complexity of $\PEqn(\A, \B)$ by \cite[Theorem 2.25, Theorem 3.1]{AlgApproachToPSCP}. The main tool for proving hardness of $\PEqn(\A , \B)$ that we will use is the existence of a \emph{selection function} on $\Pol(\A, \B)$.

\begin{definition}
    \label{SelectionFunction}
    Let $\mathcal{M}$ be a subset of a minion. Suppose there exist $K\in \mathbb{N}$ and $I\colon \mathcal{M}\to \mathcal{P}(\mathbb{N})$ which for all $m,n \in \mathbb{N}\backslash \lbrace 0 \rbrace$ satisfy the following properties:

    \begin{enumerate}
        \item[(1)] For each $p\in \mathcal{M}^{(n)}$, we have $I(p)\subseteq [n]$ and $|I(p)|< K$.
        \item[(2)] For each $\pi\colon [n]\to [m]$ and $p\in \mathcal{M}^{(n)}$ such that $p^{(\pi)}\in \mathcal{M}$, we have $ \pi(I(p)) \cap I(p^{(\pi)}) \neq \emptyset.$
    \end{enumerate}
Then the function $I$ is called a \emph{selection function} for $\mathcal{M}$ with bound $K$.
\end{definition}

In order to obtain a generalization of \cite[Corollary 5.13]{AlgApproachToPSCP}, which allows us to show that $\PCSP(\mathbb{A}, \mathbb{B})$ is $\mathbf{NP}$-hard by constructing selection functions, it is necessary to first define minor conditions. 

\begin{definition}
    Let $f$ and $g$ be function symbols of arities $m$ and $n$ respectively. A \emph{minor identity} is an expression of the form $$f(x_1, \dots , x_m ) = g(x_{\pi(1)}, \dots , x_{\pi(n)})$$ where $\pi\colon [n]\to [m]$. A \emph{minor condition} is a finite system of minor identities. A \emph{bipartite minor condition} is a minor condition in which the function symbols on the right hand side of the identities are disjoint from the function symbols on the left hand side of the identities.
\end{definition}

\begin{definition}
    Let $\mathcal{M}$ be a set of functions and let $\Sigma$ be a minor condition. We say that $\mathcal{M}$ satisfies $\Sigma$ if there is an assignment $\zeta$ of function symbols in $\Sigma$ to functions in $\mathcal{M}$ which preserves arity and such that for every identity $$f(x_1, \dots , x_n ) = g(x_{\pi(1)}, \dots , x_{\pi(m)}) \in \Sigma$$ we have $\zeta(f) = (\zeta(g))^{(\pi )}$. We say a minor condition is \emph{trivial} if it is satisfied in every function minion. Equivalently, a minor condition is trivial if it is satisfied in $\mathcal{P}_2$, the projections on the set $\lbrace 0 , 1 \rbrace$.
\end{definition}

\begin{definition}
    Let $\epsilon>0$. A bipartite minor condition $\Sigma$ is said to be $\epsilon$-\emph{robust} if no $\epsilon$-fraction of identities from $\Sigma$ is trivial.
\end{definition}

For functions $f\colon \mathbb{N}\to \mathbb{N}$ and $g\colon \mathbb{N}\to \mathbb{N}$, if there are constants $c$ and $M$ such that for all $n\geq M$ we have $f(n) \geq cg(n)$, then we say $f(n)\in \Omega(g(n))$.

\begin{theorem}
    \cite[Theorem 5.22]{AlgApproachToPSCP}
    \label{MinionUnion1}
    Let $\mathbb{A}$ and $\mathbb{B}$ be relational structures with a homomorphism $\mathbb{A}\to \mathbb{B}$, let $\epsilon\colon \mathbb{N}\to \mathbb{N}$ be a function with $\epsilon(l)\in \Omega(l^{-K})$ for all $K>0$, let $$\Pol(\mathbb{A}, \mathbb{B}) = \mathcal{M}_1 \cup \dots \cup \mathcal{M}_n$$  where for $i\in [n]$, $\mathcal{M}_i$ satisfies no $\epsilon(l)$-robust bipartite minor condition involving symbols of arity at most $l$. Then $\PCSP(\mathbb{A}, \mathbb{B})$ is $\mathbf{NP}$-hard.
\end{theorem}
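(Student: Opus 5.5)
This is \cite[Theorem 5.22]{AlgApproachToPSCP}, and I would reprove it along the lines of the standard reduction from Gap Label Cover. The first step is the known reformulation of $\PCSP(\mathbb{A},\mathbb{B})$ as a promise minor condition problem, from \cite[Theorem 3.1]{AlgApproachToPSCP}. For fixed $N$ (depending on $|A|$ and the arities of $\mathbb{A}$), consider bipartite minor conditions $\Sigma$ whose symbols have arity at most $N$. The task is to distinguish $\Sigma$ trivial from $\Sigma$ not satisfied in $\Pol(\mathbb{A},\mathbb{B})$, and this reduces in logspace to $\PCSP(\mathbb{A},\mathbb{B})$. Moreover, one can first compose with a gadget that rewrites a condition with symbols of arity at most $l$ into one of arity at most $N$, so arbitrary label sizes are allowed. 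So it suffices to produce, from an $\mathbf{NP}$-hard problem, minor conditions $\Sigma$ that are trivial in the \textsc{yes} case and not satisfied in $\Pol(\mathbb{A},\mathbb{B})$ in the \textsc{no} case.

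The source problem would be Gap Label Cover with label set of size $l$. A label cover instance is literally a bipartite minor condition: a left vertex $u$ becomes a symbol $f_u$ of arity $l$, a right vertex $v$ becomes a symbol $g_v$, and a constraint $\pi\colon[l_{v}]\to[l_u]$ becomes the identity $f_u = g_v^{(\pi)}$. A fully satisfiable instance is then a trivial condition, since projections realize the labelling. Conversely, if some $\delta$-fraction of identities is simultaneously trivial, one can decode a labelling satisfying a $\delta$-fraction of constraints. By the PCP theorem plus parallel repetition (Raz), Gap Label Cover is $\mathbf{NP}$-hard with soundness $l^{-c}$ for some fixed $c>0$. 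Since $\epsilon(l)\in\Omega(l^{-K})$ for every $K$, the soundness can be pushed below any $\epsilon(l)$-type threshold, even after the polynomial losses of the next step.

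The main obstacle is the union $\mathcal{M}_1\cup\dots\cup\mathcal{M}_n$. A satisfying assignment $\zeta$ of $\Sigma$ into $\Pol(\mathbb{A},\mathbb{B})$ colours every symbol by an index $i$ with $\zeta(\text{symbol})\in\mathcal{M}_i$. An identity is only usable when both of its sides receive the same colour, because only then does it give an identity satisfied inside a single $\mathcal{M}_i$. In a plain bipartite instance nothing forces many identities to be monochromatic.

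To overcome this I would reduce instead from a layered (multilayered) Label Cover with $n+1$ layers, in the style of Dinur--Guruswami--Khot--Regev. This variant has the same soundness guarantees between every pair of layers, and it encodes via composed projections into one bipartite minor condition between each pair of layers. Colour every symbol as above. By pigeonhole, some pair of layers $s<t$ and some colour $i$ are such that a fraction at least about $1/n^2$ of the $s$--$t$ identities (counted by path weight) have both sides coloured $i$. The resulting sub-condition is satisfied in $\mathcal{M}_i$ and has arity at most $l$, so by hypothesis it is not $\epsilon(l)$-robust: an $\epsilon(l)$-fraction of it is trivial. This yields an $\epsilon(l)/n^2$-fraction of simultaneously satisfiable $s$--$t$ constraints, hence a labelling beating the soundness $l^{-c}$ once $l$ is large, since $\epsilon(l)$ decays slower than any polynomial. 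Therefore \textsc{no} instances of Label Cover map to conditions not satisfied in $\Pol(\mathbb{A},\mathbb{B})$. Completeness is immediate, so $\PCSP(\mathbb{A},\mathbb{B})$ is $\mathbf{NP}$-hard.

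The delicate point I would check first is the bookkeeping in this step: each pair of layers must really give a bipartite minor condition, and the decoded labelling from the trivial sub-fraction must violate the layered soundness.
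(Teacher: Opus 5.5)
The paper gives no proof of this statement. It is imported from \cite[Theorem 5.22]{AlgApproachToPSCP}, and the paper only adds the remark that the $\mathcal{M}_i$ need not be minions. Your outline is a correct reconstruction of the standard Label Cover argument behind the cited theorem: reformulate as a promise minor condition problem, encode Gap Label Cover as a bipartite minor condition, and use an $(n+1)$-layered Label Cover to handle the union. Compared with citing, it has the advantage of showing why the paper's remark is harmless. You never use closure of $\mathcal{M}_i$ under minors, because a sub-condition is satisfied in $\mathcal{M}_i$ simply when all of its symbols are sent into $\mathcal{M}_i$. It also shows where the hypothesis that $\epsilon(l)$ decays more slowly than every $l^{-K}$ enters. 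The soundness exponent of the layered instance depends on the number of layers, which is fixed by $n$, and this hypothesis absorbs that dependence, as you say. Incidentally, no arity-reducing gadget is needed: for each constant $l$, distinguishing trivial conditions of arity at most $l$ from conditions not satisfied in $\Pol(\mathbb{A},\mathbb{B})$ reduces directly to $\PCSP(\mathbb{A},\mathbb{B})$.

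The step to state precisely is the pigeonhole.
\begin{itemize}
\item It must be applied to chains $(x_1,\dots,x_{n+1})$, not to layers. Every chain coloured with $n$ colours contains a monochromatic pair. Averaging over chains then gives $s<t$ such that at least a $1/\binom{n+1}{2}$ fraction of chains have $x_s,x_t$ of equal colour.
\item You can then fix the most frequent colour, losing a further factor $n$. Alternatively, keep all colours: the colour classes use disjoint symbols, so the labellings decoded from the different $\mathcal{M}_i$ can be merged.
\item This count transfers to the $s$--$t$ identities only because those identities are weighted by the chain distribution. Finding two layers that share a popular colour would not suffice without an additional density property of the construction.
\item Robustness counts identities rather than weights. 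So you also need the chain distribution to be uniform on the $s$--$t$ constraints, which holds for the usual regular constructions.
\end{itemize}
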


We remark that the $\mathcal{M}_i$ above are not necessarily minions themselves. We now state a generalization of Lemma 5.12 from \cite{AlgApproachToPSCP}.

\begin{theorem}
\cite[cf. Lemma 5.12]{AlgApproachToPSCP}
    Let $\mathcal{M}$ be a set of functions and let $C\colon \mathbb{N} \to \mathbb{N}$. Suppose there exists $I\colon \mathcal{M}\to \mathcal{P}(\mathbb{N})$ such that for all $m,n\in \mathbb{N}\backslash \lbrace 0 \rbrace$
    
    \begin{enumerate}
        \item[(1)] for all $f\in \mathcal{M}^{(n)}$, we have $|I(f)|\leq C(n)$,
        \item[(2)] for all $\pi\colon [n]\to [m]$ and all $g\in \mathcal{M}^{(n)}$ with $g^{(\pi)}\in \mathcal{M}$ we have $$\pi(I(g(x_1, \dots ,x_m)))\cap I(g(x_{\pi(1)}, \dots , x_{\pi(m)}))\neq \emptyset.$$
    \end{enumerate}
 Then for all $l\in \mathbb{N}$, we have that $\mathcal{M}$ satisfies no $1/C(l)^2$-robust bipartite minor condition involving symbols of arity at most $l$.
\end{theorem}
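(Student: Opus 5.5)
We argue by contradiction, following the probabilistic argument of \cite[Lemma 5.12]{AlgApproachToPSCP}. Fix $l\in\mathbb{N}$ and let $\Sigma$ be a bipartite minor condition whose function symbols have arity at most $l$. Suppose $\mathcal{M}$ satisfies $\Sigma$ via an assignment $\zeta$. We will use $\zeta$ to build a projection assignment that satisfies at least a $1/C(l)^2$-fraction of the identities of $\Sigma$. This shows that $\Sigma$ is not $1/C(l)^2$-robust.

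For each symbol $f$ of $\Sigma$ of arity $n\le l$, let $I_f = I(\zeta(f))\subseteq[n]$. Hypothesis (1) gives $|I_f|\le C(n)$. We may assume $C$ is nondecreasing; otherwise we replace $C(l)$ by $\max_{n\le l}C(n)$, or we note that the intended bound uses $C$ evaluated at the relevant arities. So $|I_f|\le C(l)$. We may also assume every $I_f$ is nonempty. Indeed, each symbol occurs in some identity, and hypothesis (2) forces $\pi(I_g)\cap I_f\neq\emptyset$ there, so both sets are nonempty.

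Now choose independently, for each symbol $f$, a coordinate $i_f\in I_f$ uniformly at random, and interpret $f$ as the projection onto coordinate $i_f$ (on $\{0,1\}$, i.e.\ in $\mathcal{P}_2$). Consider an identity $f(x_1,\dots,x_m)=g(x_{\pi(1)},\dots,x_{\pi(n)})$ in $\Sigma$. Since $\zeta(f)=\zeta(g)^{(\pi)}$ and both lie in $\mathcal{M}$, hypothesis (2) applies: there is some $j\in I_g$ with $\pi(j)\in I_f$. The projection interpretation satisfies this identity exactly when $i_f=\pi(i_g)$. That happens with probability at least
\[
\frac{1}{|I_f|\,|I_g|}\ \ge\ \frac{1}{C(l)^2},
\]
namely on the event $i_g=j$ and $i_f=\pi(j)$.

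By linearity of expectation, the expected fraction of satisfied identities is at least $1/C(l)^2$. Hence some fixed choice of coordinates satisfies at least a $1/C(l)^2$-fraction of $\Sigma$. Those identities then form a subcondition satisfied in $\mathcal{P}_2$, i.e.\ a trivial one, which contradicts $1/C(l)^2$-robustness.

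The only delicate points are bookkeeping. First, the statement's indexing in (2) is written loosely, with $g$ of arity $n$ and minor $g^{(\pi)}$ of arity $m$; it has to be read in the direction matching each identity of $\Sigma$. Second, since $\mathcal{M}$ is merely a set of functions and not a minion, (2) may only be applied to minors that actually lie in $\mathcal{M}$. This is guaranteed here, because $\zeta$ maps both sides of every identity into $\mathcal{M}$. I expect the main obstacle to be making the bound $|I_f|\le C(l)$ uniform over arities $n\le l$, which the monotonicity assumption (or taking a maximum) handles.
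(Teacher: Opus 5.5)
Your proof is correct and is exactly the randomized-projection argument of \cite[Lemma 5.12]{AlgApproachToPSCP} that the paper's proof defers to. You also correctly observe that (2) is only ever applied to minors lying in $\mathcal{M}$, because $\zeta$ sends both sides of each identity there.

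One refinement: monotonicity of $C$ is a genuinely needed implicit hypothesis, not a normalization, and your fallback $\max_{n\le l}C(n)$ proves a weaker statement. For example, on $\{0,1\}$ take $\mathcal{M}=\{\mathrm{id},\wedge,\text{binary projections}\}$ with $I(\wedge)=\{1,2\}$, $I$ of each projection its projected coordinate, and $C(1)=1$, $C(2)=2$, $C(3)=1$; this satisfies (1) and (2), yet $f=g=\wedge$ satisfies the nontrivial (hence $1$-robust) condition $f(x,y)=g(x,y)$, $f(x,y)=g(y,x)$, violating the conclusion for $l=3$. This is harmless for the paper, which only uses constant $C$.
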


\begin{proof}
    The proof is exactly as in \cite[Lemma 5.12]{AlgApproachToPSCP} in which $\mathcal{M}$ is assumed to be a minion. As mentioned in \cite{AlgApproachToPSCP}, with the above revised hypotheses, this assumption is not necessary.
\end{proof}

\begin{corollary}
\cite[cf. Corollary 5.13]{AlgApproachToPSCP}
\label{1/K^2 Robust}
    Let $\mathcal{M}$ be a set of functions and let $I$ be a selection function for $\mathcal{M}$ with bound $K$. Then $\mathcal{M}$ satisfies no $1/K^2$-robust bipartite minor condition.
\end{corollary}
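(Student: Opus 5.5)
The corollary follows from the preceding generalization of \cite[Lemma 5.12]{AlgApproachToPSCP} by taking $C$ to be the constant function with value $K$. Let $I$ be a selection function for $\mathcal{M}$ with bound $K$, and set $C(n) = K$ for all $n$. We then check the two hypotheses of that lemma.

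\emph{Hypothesis (1).} Condition (1) of Definition \ref{SelectionFunction} gives $|I(p)| < K$ for every $p\in\mathcal{M}^{(n)}$, so in particular $|I(p)|\le K = C(n)$.

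\emph{Hypothesis (2).} Fix $\pi\colon[n]\to[m]$ and $g\in\mathcal{M}^{(n)}$ with $g^{(\pi)}\in\mathcal{M}$. The lemma's intersection condition, written with the functions $g(x_1,\dots)$ and $g(x_{\pi(1)},\dots)$, says exactly that $\pi(I(g))\cap I(g^{(\pi)})\neq\emptyset$. This is condition (2) of the selection function. The only care needed is to match the two indexing conventions for minors: Definition \ref{Minor} writes $f^{(\pi)}(x_1,\dots,x_m)=f(x_{\pi(1)},\dots,x_{\pi(n)})$, and the lemma's statement, despite its slightly garbled indices, means the same thing.

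Applying the lemma, for every $l\in\mathbb{N}$ the set $\mathcal{M}$ satisfies no $1/C(l)^2 = 1/K^2$-robust bipartite minor condition with symbols of arity at most $l$. Since $l$ is arbitrary and every finite bipartite minor condition involves symbols of some bounded arity $l$, $\mathcal{M}$ satisfies no $1/K^2$-robust bipartite minor condition at all.

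There is no real obstacle. The only points needing attention are the bookkeeping of the minor conventions just mentioned, and the observation that $\mathcal{M}$ need not be a minion, which the lemma already allows.
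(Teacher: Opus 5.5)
Your proposal is correct and matches the paper's argument: the paper gives no separate proof, and the corollary follows by applying the preceding generalization of \cite[Lemma 5.12]{AlgApproachToPSCP} with the constant function $C\equiv K$. There, $|I(p)|<K$ gives hypothesis (1), and selection-function condition (2) is exactly hypothesis (2). Your remark that a finite bipartite minor condition has bounded arity correctly removes the lemma's ``arity at most $l$'' restriction.
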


\begin{theorem}
    \cite[cf. Theorem 5.1]{larrauri2025equationsfinitemonoidsinfinite}
    \label{SelectionFunctionCriteria}
    Let $\mathbb{A}$ and $\mathbb{B}$ be finite relational structures with a homomorphism $\mathbb{A}\to \mathbb{B}$. Suppose that $$\Pol(\mathbb{A}, \mathbb{B}) = \mathcal{M}_1 \cup \dots \cup \mathcal{M}_k$$ and that for each $i\in [k]$, there exists a selection function $I_i$ for $\mathcal{M}_i$. Then $\PCSP(\mathbb{A}, \mathbb{B})$ is $\mathbf{NP}$-hard.
\end{theorem}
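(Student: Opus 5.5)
The plan is to combine Corollary \ref{1/K^2 Robust} with Theorem \ref{MinionUnion1}. Write $\Pol(\mathbb{A}, \mathbb{B}) = \mathcal{M}_1 \cup \dots \cup \mathcal{M}_k$. By hypothesis each $\mathcal{M}_i$ carries a selection function $I_i$ with some bound $K_i$ in the sense of Definition \ref{SelectionFunction}. Set $K = \max_{i\in[k]} K_i$. This is finite because the union is finite, and each $I_i$ is then also a selection function with bound $K$, since condition (1) only becomes weaker when the bound is increased.

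Next I would apply Corollary \ref{1/K^2 Robust} to each $\mathcal{M}_i$. I have to check that the corollary's hypotheses are exactly those given. The generalized Lemma 5.12 stated just above takes an arbitrary set of functions $\mathcal{M}$, not necessarily a minion. Its condition (2) only concerns minors $g^{(\pi)}$ that stay inside $\mathcal{M}$, which is precisely condition (2) of Definition \ref{SelectionFunction}. Its condition (1) holds with the constant function $C(n) = K$, since $|I_i(p)| < K \le K$. So for every $l$, $\mathcal{M}_i$ satisfies no $1/K^2$-robust bipartite minor condition involving symbols of arity at most $l$.

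Finally I would invoke Theorem \ref{MinionUnion1} with the constant function $\epsilon(l) = 1/K^2$. The required growth condition is $\epsilon(l) \in \Omega(l^{-K'})$ for all $K' > 0$, and a positive constant trivially meets it, since $1/K^2 \ge c\, l^{-K'}$ for all $l \ge 1$ with $c = 1/K^2$. (The theorem writes $\epsilon$ as a map $\mathbb{N}\to\mathbb{N}$, but it is evidently meant to take rational values, and I would read it that way.) One more point needs care: a condition that is not $1/K^2$-robust certainly fails to be $\epsilon(l)$-robust at the same level. Here the levels coincide, so nothing further is needed. Theorem \ref{MinionUnion1} then gives that $\PCSP(\mathbb{A}, \mathbb{B})$ is $\mathbf{NP}$-hard.

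I expect the main obstacle to be bookkeeping rather than mathematics. The real work is confirming that the result underlying Theorem \ref{MinionUnion1} (Theorem 5.22 of \cite{AlgApproachToPSCP}) indeed permits the pieces $\mathcal{M}_i$ to be mere subsets rather than minions. The paper remarks that it does, and I would rely on that remark, checking that the Label Cover reduction only uses the fact that for each identity some piece containing the pair of assigned polymorphisms fails robustness.
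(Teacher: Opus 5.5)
Your proposal is correct and follows the paper's proof, which simply combines Corollary~\ref{1/K^2 Robust} with Proposition~\ref{MinionUnion1}; your common bound $K=\max_i K_i$ and constant $\epsilon(l)=1/K^2$ just make explicit the bookkeeping the paper leaves implicit. One caveat, on your closing aside only: the union version in \cite{AlgApproachToPSCP} is not proved identity by identity (the two sides of an identity may fall in different pieces $\mathcal{M}_i$), but via layered Label Cover with a pigeonhole argument along chains of minors, and since you, like the paper, only cite that result, this does not affect your argument.
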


\begin{proof}
    By combining Proposition \ref{MinionUnion1} and Corollary \ref{1/K^2 Robust} we have the result.
\end{proof}

\section{Background on semigroups and monoids}\label{Background on semigroups and monoids}

The following background information on semigroups will be used in the proofs of tractability and hardness results in sections \ref{Tractability Results} and \ref{Hardness results}.

\begin{definition}
    \label{Magma}
    A \emph{magma} $\mathbf{M}=(M; \cdot)$ is an algebra with a binary operation $\cdot$. If $\cdot $ is associative, we call $\mathbf{M}$ a \emph{semigroup}. If $\mathbf{M}$ is a semigroup with a two sided identity, we call $\mathbf{M}$ a \emph{monoid}.
\end{definition}

\begin{lemma}
    \label{idempotentElts}
    \cite[cf. Theorem 1.2.3]{Howie}
    Let $\mathbf{S}$ be a finite semigroup. There exists an integer $k>0$ such that $a^k$ is idempotent for all $a\in S$. Moreover, every $a\in S$, has a unique idempotent power denoted $a^{\omega}$.
\end{lemma}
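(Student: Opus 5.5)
The plan is to work with the cyclic subsemigroup generated by a fixed $a\in S$. Since $S$ is finite, the powers $a, a^2, a^3, \dots$ cannot all be distinct. So there are minimal integers $m\geq 1$ (the index) and $r\geq 1$ (the period) with $a^{m+r}=a^m$.

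From this repetition, associativity gives $a^{i+r}=a^i$ for every $i\geq m$. Hence $a^i=a^j$ whenever $i,j\geq m$ and $i\equiv j \pmod r$. The set $K_a=\lbrace a^m,\dots,a^{m+r-1}\rbrace$ is therefore closed under multiplication, and exponent arithmetic on it is arithmetic modulo $r$.

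The key step is to find an idempotent in $K_a$. Choose $t\geq m$ with $r\mid t$; the smallest such is $t=r\lceil m/r\rceil$. Then $2t\geq m$ and $2t\equiv t\pmod r$, so $a^{2t}=a^t$, and $a^t$ is idempotent.

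For the uniform $k$, note that each $a$ has its own index $m_a$ and period $r_a$, both at most $|S|$. Take $k=|S|!$, or more economically $k=\mathrm{lcm}(r_a)\cdot\max_a m_a$. Then for every $a$ we have $k\geq m_a$ and $r_a\mid k$, so the argument above shows $a^k$ is idempotent simultaneously for all $a\in S$.

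For uniqueness, suppose $a^i$ is idempotent. If $i<m$, then $a^{2i}=a^i$ would be a repetition occurring before the index, contradicting the minimality of $m$. So $i\geq m$, and $a^{2i}=a^i$ forces $2i\equiv i\pmod r$, that is, $r\mid i$. By the periodicity established above, any two exponents $i,j\geq m$ that are both divisible by $r$ give $a^i=a^j$. So all idempotent powers of $a$ coincide, and $a^\omega$ is well defined.

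The only delicate point is the uniqueness step. One must rule out idempotent powers below the index, which uses the minimality of $m$, and then observe that all remaining candidates lie in $K_a$, where there is exactly one element whose exponent is $\equiv 0 \pmod r$. Everything else is routine pigeonhole and associativity.
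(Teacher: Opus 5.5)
Your proof is correct and is the standard index--period argument (the cyclic subsemigroup $K_a=\lbrace a^m,\dots,a^{m+r-1}\rbrace$ and its unique idempotent) behind the result from Howie, which the paper cites without giving a proof of its own, so there is no real difference in approach. The one step you assert rather than prove is that $a^{2i}=a^i$ with $i\ge m$ forces $r\mid i$: this needs $a^m,\dots,a^{m+r-1}$ to be pairwise distinct, which follows from the minimality of $r$ but not from periodicity alone, and you can avoid it entirely because any two idempotent powers satisfy $a^i=(a^i)^j=a^{ij}=(a^j)^i=a^j$.
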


\begin{definition}
    Let $\mathbf{S}$ be a semigroup. An element $a\in S$ is called \emph{group element} if $a$ is a member of a subgroup of $\mathbf{S}$.
\end{definition}

Note that if an element $a$ of a finite semigroup $\mathbf{S}$ is group element, we have $a^k = a^{\omega}$ for some $k>0$, and $a^{\omega}$ is the identity in any subgroup of $\mathbf{S}$ which contains $a$. Because $a^{k-1}a =a^{\omega} = aa^{k-1}$, we see that $a^{k-1}$ is the unique inverse of $a$ in any subgroup containing $a$. Therefore, we may denote the \emph{inverse} of a group element element $a$ of a finite semigroup by $a^{-1}$.

We remark that an element $a$ of a semigroup $\mathbf{S}$ is called \emph{regular} if there exists some $b\in S$ with $aba = a$ (see \cite[Section 2.3]{Howie}). Group elements are always regular. If $\mathbf{S}$ is commutative, it can be shown that regular elements are group elements. Therefore, when working with a commutative semigroup, we will refer to regular elements as group elements. A \emph{regular semigroup} is a semigroup in which every element is regular. Hence if $\mathbf{S}$ is a finite commutative regular semigroup, then every element of $\mathbf{S}$ is a group element. If $\mathbf{S}$ is a finite semigroup and $a\in S$, it is easy to see that $aa^{\omega} = a^\omega a$ is regular. We write $aa^{\omega} = a^{\omega +1}$.

\begin{lemma}
    \label{lemma idempotent product}
    Let $\mathbf{S}$ be a finite commutative semigroup, and let $a,b,c\in S$ such that $c$ is a group element and $c=ab$. Then $c=ca^{\omega}=cb^\omega$.
\end{lemma}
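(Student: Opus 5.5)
Since $c$ is a group element of the finite commutative semigroup $\mathbf{S}$, it lies in a subgroup whose identity is $c^{\omega}$. In particular $c\,c^{\omega}=c$ and $c\,c^{-1}=c^{\omega}$ with $c^{-1}=c^{k-1}$ for a suitable $k$. The plan is to show that $c^{\omega}$ absorbs $a^{\omega}$, i.e.\ $c^{\omega}a^{\omega}=c^{\omega}$. Multiplying this by $c$ then gives
$$ca^{\omega}=c\,c^{\omega}a^{\omega}=c\,c^{\omega}=c.$$
The claim for $b$ follows by the symmetric argument, since $ab=ba$.

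To get the absorption, write $c^{\omega}$ as a power of $c$. By Lemma \ref{idempotentElts} there is $k>0$ with $c^{k}=c^{\omega}$, and we may take $k$ large; concretely, choose $k$ to be a multiple of the uniform exponent $N$ from Lemma \ref{idempotentElts}, so that $x^{N}$ is idempotent for every $x$ and hence $x^{N}=x^{\omega}$ by uniqueness. Because $\mathbf{S}$ is commutative, $c^{N}=(ab)^{N}=a^{N}b^{N}=a^{\omega}b^{\omega}$. Hence $c^{\omega}=a^{\omega}b^{\omega}$, and therefore
$$c^{\omega}a^{\omega}=a^{\omega}a^{\omega}b^{\omega}=a^{\omega}b^{\omega}=c^{\omega}.$$

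Putting these together, $ca^{\omega}=c\,c^{\omega}a^{\omega}=c\,c^{\omega}=c$, where the middle equality uses the absorption just shown and the last uses that $c^{\omega}$ is the identity of the subgroup containing $c$. Similarly $cb^{\omega}=c$.

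The only step needing care is the identification $x^{N}=x^{\omega}$ for the uniform $N$, which Lemma \ref{idempotentElts} provides (there is a single $k$ that works for all elements, and the idempotent power is unique). Everything else is commutativity together with the fact, noted before the lemma, that $c^{\omega}$ acts as the identity on $c$. I do not expect a serious obstacle.
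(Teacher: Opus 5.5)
Your proposal is correct and follows essentially the same route as the paper: obtain $c^{\omega}=a^{\omega}b^{\omega}$ from commutativity, deduce $c^{\omega}a^{\omega}=c^{\omega}$, and then compute $c=cc^{\omega}=cc^{\omega}a^{\omega}=ca^{\omega}$. Your use of the uniform exponent $N$ from Lemma \ref{idempotentElts} just makes explicit the step $c^{\omega}=a^{\omega}b^{\omega}$, which the paper states without spelling out.
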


\begin{proof} By commutativity, we have $c^{\omega} = a^{\omega}b^\omega$. Then $c^{\omega}a^{\omega} = a^{\omega}a^{\omega}b^\omega = a^{\omega}b^\omega = c^{\omega}$. Because $c$ is a group element, $c= cc^{\omega} = cc^{\omega}a^{\omega} = ca^{\omega} $. Similarly, $c = cb^\omega$.
\end{proof}

\begin{definition}
    Let $\mathbf{S}$ be a finite commutative semigroup and let $U\subseteq S$. We define $U_\dagger = \lbrace a^{\omega +1} : a\in U \rbrace$.
\end{definition}

\begin{definition}
    Let $\mathbf{S} = (S; \cdot)$ be a semigroup and let $U,V\subseteq S$. We define $U\cdot V = \lbrace u\cdot v : u\in U, v\in V\rbrace .$ Similarly, for $n>0$ we define $U^{(n)} = \lbrace \prod_{i\in [n]}u_i : u_i\in U\rbrace$. When the choice of the operation $\cdot$ is clear, we may write $U\cdot V = UV$.
\end{definition}

\begin{definition}
    Let $\mathbf{S}$ be a semigroup and $U\subseteq S$ be a subset of group elements. We define $U^{-1} = \lbrace a^{-1}\in S : a\in U \rbrace$.
\end{definition}

\begin{definition}
    Let $\mathbf{S}$ be a finite commutative semigroup. A subset $U\subseteq S$ is called a \emph{coset} of $\mathbf{S}$ if $U$ is a subset of group elements and $U\cdot U^{-1} \cdot U = U$.
\end{definition}

\begin{lemma}
\label{lemmaCosets}
    Let $\mathbf{S}$ be a finite commutative semigroup, let $U\subseteq S$ be a subset of group elements. Let $[U]$ be the closure of $U$ under the operation $m(x,y,z) = xy^{-1}z$. Then $[U]$ is the smallest coset containing $U$.
\end{lemma}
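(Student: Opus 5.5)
We need to show that $[U]$ is a coset, that it contains $U$, and that it is contained in every coset containing $U$; together these say $[U]$ is the smallest coset containing $U$.

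\emph{Preliminary facts on group elements.} In a finite commutative semigroup, the set $G$ of group elements is closed under products and inverses.
\begin{itemize}
\item If $a,b\in G$, then $ab\cdot(a^{-1}b^{-1})\cdot ab = (aa^{-1}a)(bb^{-1}b) = ab$ by commutativity, so $ab$ is regular. The paper notes that regular elements of a commutative semigroup are group elements, so $ab\in G$.
\item Since $(ab)^{-1}$ is the unique inverse within the subgroup $H_{(ab)^\omega}$, we get $(ab)^{-1}=a^{-1}b^{-1}$.
\item $a^{-1}\in G$ with $(a^{-1})^{-1}=a$.
\end{itemize}
Hence $m(x,y,z)=xy^{-1}z$ is a well-defined operation on $G$, and $[U]\subseteq G$. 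Also $U\subseteq[U]$ by definition of closure.

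\emph{Step 1: $[U]$ is a coset.} Since $[U]$ consists of group elements and is closed under $m$, we have $[U]\cdot[U]^{-1}\cdot[U]\subseteq [U]$. For the reverse inclusion, given $u\in[U]$, take $x=y=z=u$. Then $uu^{-1}u = u$, because $u$ lies in a subgroup with identity $u^\omega$ and $uu^{-1}=u^\omega$. So $[U]\cdot[U]^{-1}\cdot[U]=[U]$.

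\emph{Step 2: minimality.} Let $V$ be any coset containing $U$. Then $V$ is a subset of group elements and $V V^{-1} V = V$, so $m(x,y,z)\in V$ whenever $x,y,z\in V$. Thus $V$ is closed under $m$. Since $V\supseteq U$ and $[U]$ is the smallest $m$-closed set containing $U$, we get $[U]\subseteq V$.

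The only substantive point is the closure of $G$ under products and inverses, which is what makes $m$ an operation on $G$. Once that is in hand, Steps 1 and 2 are formal.
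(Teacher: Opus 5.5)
Your proof is correct and follows the paper's argument essentially step for step. The paper also gets $[U]\subseteq[U][U]^{-1}[U]$ from $u=uu^{-1}u$, the reverse inclusion from closure under $m$, and minimality from the fact that any coset $V$ satisfies $VV^{-1}V=V$ and so is $m$-closed; the only difference is that you justify why group elements are closed under products and inverses (via regularity of $ab$), which the paper simply asserts.
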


\begin{proof}
    We first note that $[U]$ is a set of group elements since products and inverses of group elements of a commutative semigroup are also group elements.

    Next, we claim $[U][U]^{-1}[U] = [U]$. For any $a \in [U]$, note that $a = aa^{-1}a \in [U][U]^{-1}[U]$. Hence $[U]\subseteq [U][U]^{-1}[U]$. On the other hand, we have $[U][U]^{-1}[U] \subseteq [U]$ since $[U]$ is closed under $m$. Therefore $[U][U]^{-1}[U] = [U]$ so that $[U]$ is a coset.

    Now we claim that any coset containing $U$ contains $[U]$. To see this, let $V$ be a coset containing $U$. Then $V=VV^{-1}V$, so that $V$ is closed under $m$. Hence $V= [V]$. Because $U\subseteq V$, we know $[U]\subseteq [V] = V$ as desired.
\end{proof}

We remark that $[U]$ is denoted $\text{Cos}(U)$ in \cite{larrauri2025equationsfinitemonoidsinfinite}.

\begin{lemma}
\label{constant lemma 2}
    \cite[Lemma 5.5]{larrauri2025equationsfinitemonoidsinfinite}
    Let $\mathbf{M}$ be a finite commutative monoid. Then there exists $K(\mathbf{M})\in \mathbb{N}$ such that for all $n\geq K(\mathbf{M})$ and all $R\subseteq M$, we have $[R_\dag]^{(n)}\subseteq R^{(n)}$.
\end{lemma}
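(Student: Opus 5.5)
The plan is to make the elements of $[R_\dag]$ explicit and then rewrite any product of $n$ of them as a product of exactly $n$ elements of $R$. I would do this by computing exponents in the cyclic subsemigroups generated by the elements of $R$.

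\textbf{Choice of constants.} Let $E$ be a common multiple of the orders of all subgroups of $\mathbf{M}$, so that $g^E=g^\omega$ for every group element $g$. By Lemma \ref{idempotentElts}, choose $k>0$ such that $a^k=a^\omega$ for all $a\in M$, and enlarge $k$ to a multiple of $E$. This is possible because $a^{jk}=a^\omega$ for all $j\ge 1$. Then for every $a\in M$ the following hold:
\begin{enumerate}
\item[(i)] $a^{\omega+1}=a^{k+1}$;
\item[(ii)] for $q\ge k$, the power $a^q$ lies in the group with identity $a^\omega$ and depends only on $q \bmod E$.
\end{enumerate}
Set $K(\mathbf{M})=|M|(k+E)$.

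\textbf{Describing elements of $[R_\dag]$.} By Lemma \ref{lemmaCosets}, $[R_\dag]$ is the closure of $R_\dag$ under $xy^{-1}z$. By commutativity, every element of $[R_\dag]$ can be written as $c=\prod_{a\in S} (a^{k+1})^{m_a}$. Here $S\subseteq R$ is finite, each $m_a\in\mathbb{Z}$, and $\sum_a m_a=1$. Negative powers denote inverses in the group of $a^\omega$, and a zero exponent contributes the idempotent $a^\omega=a^k$; this idempotent really does occur, since $uu^{-1}=u^\omega$. A product $c_1\cdots c_n$ of elements of $[R_\dag]$ therefore has the same form, with exponents $M_a\in\mathbb{Z}$ satisfying $\sum_{a\in S}M_a=n$ and with $S\subseteq R$, so $|S|\le |M|$.

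\textbf{Converting to a product of $n$ elements of $R$.} By (i) and (ii), $(a^{k+1})^{M_a}=a^{q_a}$ for any $q_a\ge k$ with $q_a\equiv (k+1)M_a\equiv M_a \pmod E$. This uses $E\mid k$ and holds also when $M_a\le 0$, including the idempotent case $M_a=0$. Pick each $q_a\in[k,k+E)$ with the required residue. Then $T=\sum_a q_a\equiv \sum_a M_a=n \pmod E$ and $T\le |M|(k+E)\le n$. Hence $n-T$ is a nonnegative multiple of $E$, and adding it to one $q_a$ changes nothing, by (ii). We then have $c_1\cdots c_n=\prod_{a\in S}a^{q_a}$ with $\sum_a q_a=n$, which is visibly an element of $R^{(n)}$. (If $R=\emptyset$ the statement is trivial.)

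\textbf{Main obstacle.} The delicate point is the bookkeeping in the description of $[R_\dag]$. One must correctly track the idempotent factors $a^\omega$ contributed by elements whose total exponent is zero, and make sure that negative exponents and positive representatives agree. Both are handled by choosing $k$ to be a multiple of the exponent $E$, so that a single residue condition modulo $E$ governs every exponent $q_a\ge k$. That same choice makes the congruence $T\equiv n \pmod E$ come out exactly right.
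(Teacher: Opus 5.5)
Your proof is correct. The paper gives no argument of its own here: the lemma is quoted from \cite[Lemma 5.5]{larrauri2025equationsfinitemonoidsinfinite}. So there is no in-paper proof to compare against, and your write-up serves as a self-contained proof of the cited result.

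Both ingredients are sound.
\begin{enumerate}
\item[(a)] The description of elements of $[R_\dagger]$ as $\prod_{a\in S}(a^{k+1})^{m_a}$, with integer exponents summing to $1$, follows by induction on the closure under $xy^{-1}z$ from Lemma \ref{lemmaCosets}. The same holds for $n$-fold products with exponents summing to $n$. The induction uses that inverses of group elements distribute over products in a commutative semigroup.
\item[(b)] You lift each $M_a$ to a representative $q_a\in[k,k+E)$ and put the surplus $n-T\in E\mathbb{N}$ on one factor. This correctly handles the idempotent factors with $M_a=0$, which is the only delicate point.
\end{enumerate}

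Two steps deserve one line each when written out.
\begin{enumerate}
\item[(1)] Claim (ii): for $q\ge k$ we have $a^qa^\omega=a^{q-k}a^ka^k=a^q$. Hence $a^{q+E}=a^q\,a^\omega a^E=a^q\,(a^{\omega+1})^E=a^q\,a^\omega=a^q$.
\item[(2)] The case $M_a<0$: take $q\ge k$ with $q\equiv -1\pmod E$. Then $a^{k+1}a^q=a^{k+1+q}=a^k$, so $a^q$ is the inverse of $a^{k+1}$ in the group of $a^\omega$.
\end{enumerate}

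Your route also gives an explicit constant $K(\mathbf{M})=|M|(k+E)$. Moreover, the identity of $\mathbf{M}$ is never used, so the statement holds for arbitrary finite commutative semigroups.
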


\begin{lemma}
\label{lemmaDagger}
    Let $\mathbf{S}$ be a finite commutative semigroup, let $n\in \mathbb{N}\backslash \lbrace 0 \rbrace$, let $A_1, \dots ,A_n \subseteq S$, and let $a\in  \prod_{i\in [n]}A_i$. Then $a^{\omega +1} \subseteq \prod_{i\in [n]}(A_i)_\dagger$.
\end{lemma}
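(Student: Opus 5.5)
The statement says $a^{\omega+1}\in \prod_{i\in[n]}(A_i)_\dagger$; the "$\subseteq$" should be read as "$\in$". The plan is to write $a=a_1a_2\cdots a_n$ with $a_i\in A_i$ and show directly that
\[
a^{\omega+1}=a_1^{\omega+1}a_2^{\omega+1}\cdots a_n^{\omega+1}.
\]
Since $a_i^{\omega+1}\in (A_i)_\dagger$ by definition, this gives the claim at once. The only inputs are commutativity of $\mathbf{S}$ and the choice of a suitable common exponent from Lemma \ref{idempotentElts}.

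First fix $k>0$ as in Lemma \ref{idempotentElts}, so that $x^k=x^\omega$ for every $x\in S$. This works because $x^k$ is idempotent and the idempotent power of $x$ is unique.

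Next use commutativity. We get
\[
a^{k}=\prod_{i\in[n]} a_i^{k},
\qquad\text{so}\qquad
a^\omega=\prod_{i\in[n]} a_i^\omega .
\]
Multiplying by $a=\prod_i a_i$ and rearranging, again by commutativity, gives
\[
a^{\omega+1}=a^\omega a=\prod_{i\in[n]} a_i^\omega a_i=\prod_{i\in[n]} a_i^{\omega+1}.
\]

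The one point that needs care is the exponent. One must not pick separate exponents $k_i$ with $a_i^{k_i}=a_i^{\omega}$, because then the product $\prod_i a_i^{k_i}$ is not visibly a power of $a$. Taking a single uniform $k$ for all of $S$, which Lemma \ref{idempotentElts} provides, avoids this. The case $n=1$ is immediate, and no induction on $n$ is needed since the identity holds for arbitrary finite products.
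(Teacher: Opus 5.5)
Your proof is correct and is essentially the paper's argument: write $a=\prod_{i\in[n]}a_i$ and use commutativity to get $a^{\omega+1}=\prod_{i\in[n]}a_i^{\omega+1}$; your uniform exponent $k$ from Lemma \ref{idempotentElts} just spells out the step $a^\omega=\prod_{i\in[n]}a_i^\omega$, which the paper leaves under ``by commutativity.'' You are also right that the $\subseteq$ in the statement should be read as $\in$.
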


\begin{proof}
    We know $a = \prod_{i\in [n]}a_i$ for some elements $a_i\in A_i$. By commutativity $$a^{\omega +1} = aa^{\omega} = \prod_{i\in [n]}a_i\prod_{i\in [n]}a_i^{\omega} = \prod_{i\in [n]}a_i^{\omega +1} \in \prod_{i\in [n]}(A_i)_\dagger.$$
\end{proof}

\begin{lemma}
\label{lemmaCosetProduct}
    Let $\mathbf{S}$ be a finite commutative semigroup. For all $n\in \mathbb{N}\backslash \lbrace 0 \rbrace$, and for all $A_1, \dots ,A_n \subseteq S$ which are subsets of group elements, we have $[\prod_{i\in [n]}A_i]\subseteq \prod_{i\in [n]}[A_i]$.
\end{lemma}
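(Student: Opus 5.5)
The plan is to show that $\prod_{i\in[n]}[A_i]$ is itself a coset containing $\prod_{i\in[n]}A_i$. Lemma \ref{lemmaCosets} says $[\prod_{i\in [n]}A_i]$ is the smallest coset containing $\prod_{i\in[n]}A_i$, so this gives the inclusion at once. The containment $\prod_{i\in [n]}A_i\subseteq \prod_{i\in[n]}[A_i]$ is immediate from $A_i\subseteq [A_i]$. Note that $\prod_{i\in[n]}A_i$ consists of group elements, since products of group elements in a commutative semigroup are group elements, so $[\prod_i A_i]$ makes sense.

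To show that $V=\prod_{i\in[n]}[A_i]$ is a coset, I first check that $V$ consists of group elements, by the same remark as above. Next I need $VV^{-1}V=V$. The inclusion $V\subseteq VV^{-1}V$ holds because $a=aa^{-1}a$ for a group element $a$. For the reverse inclusion, take $u=\prod_i u_i$, $v=\prod_i v_i$ and $w=\prod_i w_i$ with $u_i,v_i,w_i\in[A_i]$. The key identity is $(\prod_i v_i)^{-1}=\prod_i v_i^{-1}$ in a finite commutative semigroup. Granting it, commutativity gives
\[
uv^{-1}w=\prod_{i\in[n]} u_iv_i^{-1}w_i,
\]
and each factor $u_iv_i^{-1}w_i=m(u_i,v_i,w_i)$ lies in $[A_i]$ because $[A_i]$ is closed under $m$. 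Hence $uv^{-1}w\in V$.

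The only step needing care is the inverse of a product. Set $p=\prod_i v_i$ and $q=\prod_i v_i^{-1}$. We have $pq=\prod_i v_iv_i^{-1}=\prod_i v_i^{\omega}$, and this is an idempotent $e$. Also $pe=\prod_i v_iv_i^{\omega}=p$, and likewise $qe=q$. So $p$ and $q$ lie in the subgroup $\{x : xe=x,\ \exists y\,(xy=e,\ ye=y)\}$ with identity $e$; this is the maximal subgroup $H_e$. There $q$ is the inverse of $p$. Inverses are unique within any subgroup containing $p$, as noted after the definition of group elements, so $q=p^{-1}$. I expect this identification of inverses to be the main (minor) obstacle. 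If it is awkward to phrase, one can argue instead that $p^{\omega}=e$ (since $p^{\omega}=\prod_i v_i^{\omega}$ by commutativity and idempotency) and that $q$ satisfies $pq=qp=p^{\omega}$ and $qp^{\omega}=q$, which characterizes $p^{-1}=p^{k-1}$.
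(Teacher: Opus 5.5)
Your proof is correct and is essentially the paper's argument: the paper also shows that a product of the cosets $[A_i]$ is itself a coset containing the product of the $A_i$, using $(ab)^{-1}=a^{-1}b^{-1}$, and then applies the minimality from Lemma \ref{lemmaCosets}. The only differences are that the paper proves the case $n=2$ and inducts, whereas you handle all $n$ at once, and you explicitly justify the inverse-of-a-product identity, which the paper attributes simply to commutativity.
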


\begin{proof}
We will use induction. The case $n=1$ is trivial. The case $n=2$ is needed for the inductive argument. Note that $[A_1][A_2]\supseteq A_1A_2$. Also, by commutativity we have $(ab)^{-1} = a^{-1}b^{-1}$ for all $a,b\in S$, so that $$[A_1][A_2]([A_1][A_2])^{-1}[A_1][A_2]=([A_1][A_1]^{-1}[A_1])([A_2][A_2]^{-1}[A_2])=[A_1][A_2].$$ Hence $[A_1][A_2]$ is a coset containing $A_1A_2$, which is to say $[A_1A_2]\subseteq [A_1][A_2]$. 

For the inductive step we assume the hypothesis holds for $n$. Then $$\left[ \prod_{i\in [n+1]}A_i \right] \subseteq \left[ \prod_{i\in [n]}A_i \right][A_{n+1}] \subseteq \prod_{i\in [n+1]}[A_i].$$

\end{proof}

\begin{lemma}
\label{product lemma}
    \cite[cf. Lemma 4.5]{larrauri2025equationsfinitemonoidsinfinite}
    Let $\A = (A; \cdot^{\A})$ be an algebra where $\cdot^{\A}$ is a binary operation, and let $\B = (B; \cdot^{\B})$ be a finite commutative semigroup. Let $F$ be a finite set of homomorphisms $\A\to \B$ and define $g\colon \A\to \B$ by $g(x )= \prod_{f\in F}f(x)$, where the product is $\cdot^{\B}$. Let $R\subseteq A$. Then $$[g(R)_\dagger] \subseteq \prod_{f\in F}[f(R)_\dagger].$$
\end{lemma}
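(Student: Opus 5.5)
The plan is to reduce to the previous lemmas. Write $F = \{f_1, \dots, f_n\}$, so that $g(x) = \prod_{i\in[n]} f_i(x)$. I would establish the chain
\[
g(R)_\dagger \subseteq \prod_{i\in[n]} f_i(R)_\dagger,
\]
and then apply the coset-closure operator to both sides.

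\emph{Step 1.} Fix $r \in R$. Then $g(r) = \prod_{i} f_i(r)$ lies in $\prod_i f_i(R)$, since we may take $a_i = f_i(r) \in f_i(R)$. By Lemma \ref{lemmaDagger} we get
\[
g(r)^{\omega+1} \in \prod_i f_i(R)_\dagger .
\]
Hence $g(R)_\dagger \subseteq \prod_i f_i(R)_\dagger$. The homomorphism property of the $f_i$ is not actually needed here. Only the pointwise product structure and commutativity of $\B$ are used; commutativity guarantees that $g$ is well defined regardless of how the product over $F$ is ordered.

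\emph{Step 2.} The closure $[\cdot]$ is monotone, because it is the closure under the operation $m$. Each $f_i(R)_\dagger$ consists of group elements, since $a^{\omega+1}$ is regular and, $\B$ being commutative, regular elements are group elements. Products of group elements are again group elements, so $\prod_i f_i(R)_\dagger$ is a set of group elements and its closure makes sense. Monotonicity gives
\[
[g(R)_\dagger] \subseteq \Bigl[\prod_i f_i(R)_\dagger\Bigr],
\]
and Lemma \ref{lemmaCosetProduct} bounds the right-hand side by $\prod_i [f_i(R)_\dagger]$.

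\emph{Remaining check.} The case $F = \emptyset$ has to be excluded or handled separately, since an empty product requires an identity element. I would assume $F$ is nonempty, as Lemma \ref{lemmaCosetProduct} requires $n \geq 1$. I do not expect any step to be a real obstacle: the only care needed is confirming the group-element hypotheses so that the closure and Lemma \ref{lemmaCosetProduct} apply.
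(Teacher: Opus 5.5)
Your proposal is correct and follows the paper's proof essentially verbatim. You first derive $g(R)_\dagger \subseteq \prod_{f\in F} f(R)_\dagger$ from Lemma \ref{lemmaDagger}, then apply monotonicity of the closure and Lemma \ref{lemmaCosetProduct}; your extra checks (that all sets involved consist of group elements, and that $F$ must be nonempty) are details the paper leaves implicit.
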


\begin{proof}
    We have $g(R) \subseteq \prod_{f\in F}f(R)$ so that $g(R)_\dagger \subseteq \prod_{f\in F}f(R)_\dagger$ from Lemma \ref{lemmaDagger}. Hence $$[g(R)_\dagger] \subseteq \left[ \prod_{f\in F}f(R)_\dagger \right] \subseteq \prod_{f\in F}[f(R)_\dagger]$$ by Lemma \ref{lemmaCosetProduct}.
\end{proof}

\section{Varieties and commutators}

We introduce some notation and results on general algebras that we need to formulate our results. For more background, see \cite{BurrisSanka} and \cite{CommutatorTheory}.

\begin{definition}
    \label{Variety}
    A class $\mathcal{V}$ of algebras of signature $\sigma$ that is closed under taking homomorphic images, subalgebras, and direct products is called a \emph{variety}.
\end{definition}

For example, the class of all groups, the class of all abelian groups, and the class of all rings are varieties. In fact, each of these varieties has a \emph{Mal'cev term}.

\begin{definition}
    \thlabel{Mal'cev Term}
    Let $\mathcal{V}$ be a variety. We say the ternary term $m$ in the signature of $\mathcal{V}$ is a \emph{Mal'cev term} of $\mathcal{V}$ if for each $\A\in \mathcal{V}$ and $x,y\in A$ we have $$m^{\A}(x,x,y) = y = m^{\A}(y,x,x).$$
\end{definition}

If $\mathcal{V}$ is the variety of all multiplicative groups, then $xy^{-1}z$ is a Mal'cev term for $\mathcal{V}$. We will now define some operations on congruences of algebras which will be used in the proof Theorem \ref{DichotomyCongruenceModular}.

\begin{definition}
\label{congruences}
    Let $\A$ be an algebra. We denote the congruence lattice of $\A$ by $\Con \; \A$ with intersection $\wedge$ and join $\vee$.
\end{definition}

Of particular interest is the situation in which $\A$ is an algebra such that $\Con \; \A$ is modular:

\begin{definition}
    We say a variety $\mathcal{V}$ is \emph{congruence modular} if for all $\A\in \mathcal{V}$ the congruence lattice $\Con \; \A$ satisfies the modular law: $$(\alpha \wedge \beta)\vee (\alpha \wedge \gamma) = \alpha \wedge ((\alpha\wedge \beta) \vee \gamma)$$ for all $\alpha,\beta,\gamma\in \Con \; \A$.
\end{definition}

Every variety with a Mal'cev term is congruence modular (see \cite[Theorem 5.10]{BurrisSanka}). We will prove a dichotomy for $\PEqn(\A, \B)$ in the case of Mal'cev algebras in Theorem \ref{DichotomyMal'cev} and then generalize this result to the case of algebras in a congruence modular variety in Theorem \ref{DichotomyCongruenceModular}. The proof of Theorem \ref{DichotomyCongruenceModular} will require another operation on $\Con \;\A$ known as the \emph{commutator}.

\begin{definition}
\cite[Definition 3.2, Proposition 3.3]{CommutatorTheory}
\label{MatrixSubalgebra}
    Let $\A$ be an algebra, and let $\alpha,\beta \in \Con\; \A$.
    
    \begin{enumerate}
    \item[(1)] We define $M(\alpha, \beta)$ to be the subalgebra of $\A^4$ generated by $$ \begin{pmatrix}
        a & a\\
        a' & a'
    \end{pmatrix}  \quad \text{ and } \quad \begin{pmatrix}
        b & b'\\
        b & b'
    \end{pmatrix}$$
for all $a\;\alpha \; a'$ and $b \; \beta \;b'$.
\item[(2)] We define $[\alpha , \beta]$ to be the smallest $\delta\in \Con \; \A$ such that $a \;\delta \; b \Rightarrow c \;\delta \;d$ for all $\begin{pmatrix}
    a & b \\
    c & d
\end{pmatrix} \in M(\alpha , \beta)$. We say $[\alpha , \beta]$ is the \emph{commutator} of $\alpha$ with $\beta$.

\end{enumerate}
\end{definition}

The common notion of the commutator of two normal subgroups of a group coincides with the above notion if one identifies normal subgroups with their associated congruence. We can use the commutator to define an \emph{abelian} algebra. Solving systems of equations over abelian algebras tends to be easier than solving systems of equations over non-abelian algebras. We will make this statement precise in sections \ref{Tractability Results}, \ref{Hardness results}, and \ref{Congruence Modular Varieties}.

\begin{definition}
    \label{AbelianAlgebra}
    Let $\A$ be an algebra, let $1_\A$ be the total congruence on $\A$ and let $0_\A$ be the equality relation on $\A$. We say $\A$ is \emph{abelian} if $[1_\A, 1_\A] = 0_\A$.
\end{definition}

It follows from Definition \ref{MatrixSubalgebra} and Definition $\ref{AbelianAlgebra}$ that $\A/[1_\A, 1_\A]$ is abelian. Also, if $\A/\delta$ is abelian, then $\delta \geq [1_\A, 1_\A]$. The converse is not always true, but we will later see that it is true in congruence modular varieties. The notion of an abelian group coincides with our above definition as well. An abelian ring is actually just a ring with 0-multiplication:

\begin{example}
\label{AbelianRing}
    A ring $\A$ is abelian if and only if it has 0-multiplication.
\end{example}

\begin{proof}
    First, suppose $\A$ is abelian. For all $a,b\in A$ we have $$\begin{pmatrix}
        0 & 0\\
        a & a
    \end{pmatrix} , \begin{pmatrix}
        0 & b\\
        0 & b
    \end{pmatrix}\in M(1_\A, 1_\A).$$
Hence $$\begin{pmatrix}
    0 & 0 \\
    0 & ab
\end{pmatrix}\in M(1_\A, 1_\A).$$
Since $[1_\A , 1_\A] = 0_\A$, we have by definition of the commutator that $0=0 \Rightarrow 0 = ab$. That is, $\A$ has 0-multiplication.

For the converse, note that a ring with 0-multiplication is term equivalent to an abelian group.\end{proof}

We will also use the notion of a \emph{solvable} algebra. This coincides with the common notion of a solvable group if our algebra is a group.

\begin{definition}
    \label{Solvable}
    Let $\A$ be an algebra and $\alpha \in \Con \;\A$. Define $[\alpha]^0 = \alpha$ and for $k\geq 0$ define $[\alpha]^{k+1}=[\alpha^k , \alpha^k]$. If $[1_\A]^d = 0_\A$ for some $d\geq 0$, we say, we say $\A$ is $d$-\emph{step solvable} or that $\A$ is \emph{solvable}.
\end{definition}

\begin{theorem}\cite[Theorem 6.2]{CommutatorTheory}
    \label{SolvableInConMod}
    Solvable algebras in a congruence modular variety have a Mal'cev term.
\end{theorem}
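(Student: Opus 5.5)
The statement is cited from \cite[Theorem 6.2]{CommutatorTheory}, but here is how I would prove it directly with commutator theory in congruence modular varieties. Let $\A$ be in a congruence modular variety $\mathcal{V}$, and suppose $[1_\A]^d = 0_\A$. The claim is read as saying that some term of $\mathcal{V}$ acts as a Mal'cev operation on $\A$. Equivalently, the subvariety generated by $\A$ is congruence permutable, and then Mal'cev's theorem gives the term. I will argue by induction on $d$, using two standard facts about modular commutator theory.

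\textbf{Fact 1 (abelian case).} An abelian algebra in a congruence modular variety is affine, i.e.\ polynomially equivalent to a module. In particular it has a Mal'cev term. Gumm's difference term gives this concretely: $\mathcal{V}$ has a ternary term $q$ with $q(x,x,y)=y$, and $q(x,y,y)=x$ whenever $(x,y)$ lies in an abelian congruence. Since $\A$ is abelian, $1_\A$ is abelian, so $q$ is a Mal'cev term on $\A$. More generally, $q$ is Mal'cev on every block of any congruence $\alpha$ with $[\alpha,\alpha]=0$.

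\textbf{Fact 2 (closure).} The class $\mathcal{S}_d$ of $d$-step solvable algebras in $\mathcal{V}$ is closed under $H$, $S$ and $P$, using the monotonicity and homomorphism properties of the modular commutator. So it is a subvariety, and it suffices to show this subvariety is congruence permutable.

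\textbf{Inductive step.} Let $\theta=[1_\A]^{d-1}$. Then $[\theta,\theta]=0_\A$, so $\theta$ is abelian. Also $\A/\theta$ is $(d-1)$-step solvable, hence has a Mal'cev term by induction. I would then show $\alpha\circ\beta\subseteq\beta\circ\alpha$ for all congruences $\alpha,\beta$ of algebras in $\mathcal{S}_d$, by lifting permutability through the abelian congruence $\theta$. Take $(x,z)\in\alpha\circ\beta$, say $x\,\alpha\,y\,\beta\,z$. Modulo $\theta$ we can find $w$ with $x\,\beta\vee\theta\,w$-type relations. Then the difference term $q$ corrects the error inside $\theta$-blocks, since $q$ behaves as a Mal'cev operation on $\theta$-blocks by Fact 1.

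The cleaner route is the known result that in a modular variety, $\alpha\circ\beta=\beta\circ\alpha$ whenever the interval $[\alpha\wedge\beta,\alpha\vee\beta]$ is solvable. This follows by iterating the difference-term argument along the derived series of $\alpha\vee\beta$ relative to $\alpha\wedge\beta$. Applying it to $\mathcal{S}_d$, every congruence interval is solvable, so all congruences permute, and Mal'cev's theorem yields a term $m$ with $m(x,x,y)=y=m(y,x,x)$ throughout $\mathcal{S}_d$, in particular on $\A$.

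\textbf{Main obstacle.} The hard part is the lifting step: showing congruences permute modulo an abelian congruence $\theta$ once they permute above it. The delicate point is verifying that $q(x,y,y)=x$ holds for pairs in $\theta$ while the pairs in question also lie in $\alpha$ or $\beta$. I would handle it by working in the subalgebra of $\A^2$ given by $\alpha$ and applying Fact 1 to the congruence induced there.
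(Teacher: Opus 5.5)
The paper gives no proof of this statement, only the citation to \cite{CommutatorTheory}, so your outline has to stand on its own. Its architecture is sound. Fact 2 makes the $d$-step solvable members of $\mathcal{V}$ into a subvariety $\mathcal{S}_d$, and permutability of congruences throughout $\mathcal{S}_d$ together with Mal'cev's theorem then gives the term. The gap is that the permutability step, which is the entire content of the theorem, is never proved. Your inductive lifting through $\theta=[1_\A]^{d-1}$ never specifies the element $w$ or verifies any relation; ``$x\,\beta\vee\theta\,w$-type relations'' is not an argument. Your ``cleaner route'' cites the Hagemann--Herrmann permutability theorem for solvable intervals, which is essentially the result you are asked to prove. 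You allude to ``the difference-term argument'' but never state it. Your diagnosis of the obstacle is also off: you need neither $q(x,y,y)=x$ exactly, nor a subalgebra of $\A^2$, nor the induction hypothesis for $\A/\theta$.

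The missing idea is a one-step estimate that comes directly from the difference term in its relative form: $q(x,x,y)=y$ and $q(x,y,y)\;[\Theta(x,y),\Theta(x,y)]\;x$. This is quoted in the paper from \cite[Theorem 5.5]{CommutatorTheory}; alternatively, apply your Fact 1 in $\A/[\alpha,\alpha]$, where $\alpha/[\alpha,\alpha]$ is abelian.

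Let $\alpha,\beta\in\Con\;\A$ with $x\;\alpha\;y\;\beta\;z$, and put $w=q(x,y,z)$. Then $w\;\alpha\;q(y,y,z)=z$. Also $w\;\beta\;q(x,y,y)\;[\alpha,\alpha]\;x$, because $\Theta(x,y)\le\alpha$ and the commutator is monotone. Hence $\alpha\circ\beta\subseteq[\alpha,\alpha]\circ\beta\circ\alpha$.

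Apply this with $[\alpha]^j$ in place of $\alpha$ and absorb $[\alpha]^j\circ\alpha=\alpha$ at each stage. This yields $\alpha\circ\beta\subseteq[\alpha]^k\circ\beta\circ\alpha$ for every $k$. If $\A$ is $d$-step solvable, then $[\alpha]^d\le[1_\A]^d=0_\A$, so $\alpha\circ\beta\subseteq\beta\circ\alpha$.

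This holds for every pair of congruences of every member of $\mathcal{S}_d$, with no induction on $d$ through $\A/\theta$. Applying it to the free algebra of $\mathcal{S}_d$ on $x,y,z$ and running Mal'cev's argument produces the term.
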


We will now list some common properties of the commutator which will be used throughout Section \ref{Congruence Modular Varieties}.

\begin{theorem}\cite[Proposition 3.4, Proposition 4.4]{CommutatorTheory}
    \label{CommMonotone}
    Let $\A$ be an algebra with subalgebra $\C$ and let $\alpha,\beta \in \Con \;\A$. Then 
    \begin{enumerate}
    \item[(1)] $[\alpha , \beta]$ is monotone in both $\alpha$ and $\beta$,
    
    \item[(2)] $[\alpha , \beta] \leq \alpha \wedge \beta$,

    \item[(3)] $[\alpha|_\C , \beta|_\C] \leq [\alpha , \beta]|_\C$.
    \end{enumerate}
\end{theorem}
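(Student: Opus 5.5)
The plan is to verify the three items directly from Definition~\ref{MatrixSubalgebra}, arguing in each case that a matrix subalgebra gets bigger (or maps into another one) and using that $[\alpha,\beta]$ is the \emph{smallest} congruence with the stated implication property.

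\emph{Item (1), monotonicity.} Suppose $\alpha\le\alpha'$ and $\beta\le\beta'$. Every generator of $M(\alpha,\beta)$ is also a generator of $M(\alpha',\beta')$. Hence $M(\alpha,\beta)\subseteq M(\alpha',\beta')$. Now $\delta=[\alpha',\beta']$ satisfies $a\,\delta\,b\Rightarrow c\,\delta\,d$ for every matrix in $M(\alpha',\beta')$, so in particular for every matrix in $M(\alpha,\beta)$. By minimality, $[\alpha,\beta]\le[\alpha',\beta']$.

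\emph{Item (2), $[\alpha,\beta]\le\alpha\wedge\beta$.} It suffices to show that $\delta=\beta$ (and symmetrically $\alpha$) has the required property, since $\beta$ is a congruence. The relation
\[
R=\left\{\begin{pmatrix} a & b\\ c & d\end{pmatrix} : a\,\beta\,b,\ c\,\beta\,d\right\}
\]
is a subalgebra of $\A^4$, because $\beta$ is compatible. It contains all generators. For the first kind this is by reflexivity; for the second kind it is because $b\,\beta\,b'$. So $M(\alpha,\beta)\subseteq R$, and therefore any matrix in $M(\alpha,\beta)$ has $c\,\beta\,d$ outright, which gives the implication trivially.

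For $\alpha$ I would use instead
\[
R'=\left\{\begin{pmatrix} a & b\\ c & d\end{pmatrix} : a\,\alpha\,c,\ b\,\alpha\,d\right\}.
\]
Then $a\,\alpha\,b$ implies $c\,\alpha\,a\,\alpha\,b\,\alpha\,d$. Hence $[\alpha,\beta]\le\alpha$ and $[\alpha,\beta]\le\beta$, so $[\alpha,\beta]\le\alpha\wedge\beta$.

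\emph{Item (3), restriction to a subalgebra.} Write $\alpha|_\C=\alpha\cap C^2$, and similarly for $\beta$; these are congruences of $\C$. Let $\delta=[\alpha,\beta]|_\C$, which is a congruence of $\C$. The generators of $M(\alpha|_\C,\beta|_\C)$, formed in $\C^4$, are generators of $M(\alpha,\beta)$ with entries in $C$. Since the operations of $\C$ are restrictions of those of $\A$, we get $M(\alpha|_\C,\beta|_\C)\subseteq M(\alpha,\beta)\cap C^4$. So for a matrix in $M(\alpha|_\C,\beta|_\C)$, $a\,\delta\,b$ means $a\,[\alpha,\beta]\,b$, which gives $c\,[\alpha,\beta]\,d$, and since $c,d\in C$ this means $c\,\delta\,d$. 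By minimality in $\Con\,\C$, $[\alpha|_\C,\beta|_\C]\le\delta$.

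The only real subtlety is implicit in the definition: one must know that the smallest such $\delta$ exists. This holds because the property is preserved under arbitrary intersections of congruences, and the total congruence has it. I take this as part of the definition. Everything else is routine closure of subuniverses, and the only step needing a moment's care is choosing the right auxiliary subalgebra $R$ or $R'$ in item (2).
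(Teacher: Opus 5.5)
Your proof is correct. Each item follows, as you show, from an inclusion of matrix subalgebras ($M(\alpha,\beta)\subseteq M(\alpha',\beta')$, $M(\alpha,\beta)\subseteq R\cap R'$, and $M(\alpha|_\C,\beta|_\C)\subseteq M(\alpha,\beta)\cap C^4$) together with the minimality of the commutator, and your intersection remark secures the existence of the smallest $\delta$.

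The paper gives no proof of its own here; it only cites Freese--McKenzie. Your direct verification from the matrix definition is the standard term-condition argument behind those results. It also confirms that none of the three items requires congruence modularity, which is consistent with the statement being made for an arbitrary algebra.
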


We remark that Proposition \ref{CommMonotone} (3) implies subalgebras of abelian algebras are abelian.

\begin{theorem}\cite[Proposition 4.3, Proposition 4.4]{CommutatorTheory}
    \label{CommMeetJoin}
    Let $\A , \B$ be algebras in a congruence modular variety, let $\alpha,\beta , \gamma\in \Con \;\A$, and let $f\colon \A\to \B$ be a surjective homomorphism. Then
    
    \begin{enumerate}
        \item[(1)] $[\alpha , \beta] = [\beta, \alpha]$,
    
        \item[(2)] $[\alpha , \beta \vee \gamma] = [\alpha , \beta]\vee [\alpha , \gamma]$,

        \item[(3)] $f([\alpha, \beta] \vee \ker f) = [f(\alpha), f(\beta)]$,
        
    \end{enumerate}
\end{theorem}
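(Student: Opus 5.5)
The plan is to prove all three items by passing from the term-condition definition of Definition \ref{MatrixSubalgebra} to the standard characterization of the commutator in congruence modular varieties through Day terms and the congruence $\Delta_{\alpha,\beta}$. This follows the route of Freese and McKenzie in \cite{CommutatorTheory}.

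\textbf{Step 1: a modular characterization of the commutator.} Fix $\A$ in a congruence modular variety $\mathcal V$ and $\alpha,\beta\in\Con\;\A$. Regard $\alpha$ as a subalgebra of $\A^2$, and let $\Delta_{\alpha,\beta}$ be the congruence of this subalgebra generated by all pairs $\bigl((a,a),(b,b)\bigr)$ with $a\;\beta\;b$. The goal of this step is the identity
\[
[\alpha,\beta]=\lbrace (c,d) : (c,c)\;\Delta_{\alpha,\beta}\;(c,d)\rbrace .
\]
We would first check that $M(\alpha,\beta)$ and $\Delta_{\alpha,\beta}$ carry the same information: the columns of matrices in $M(\alpha,\beta)$ lie in $\alpha$, and rows are related by $\beta$ and by polynomial moves. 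Next we would show that the right-hand side, call it $\delta_0$, is a congruence satisfying the term condition $C(\alpha,\beta;\delta_0)$. This is where congruence modularity enters: we use Day terms $m_0,\dots,m_k$ (Gumm's shifting lemma would also do) to transport the relation $(c,c)\;\Delta\;(c,d)$ along the generating matrices. Finally, minimality of $[\alpha,\beta]$ gives one inclusion, and unwinding generators of $\Delta_{\alpha,\beta}$ gives the other.

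\textbf{Step 2: deducing the three items.} With the characterization in hand, the items should follow in turn.
\begin{enumerate}
\item[(1)] For symmetry, we would show $\Delta_{\alpha,\beta}$ and $\Delta_{\beta,\alpha}$ yield the same relation. Transposing $2\times2$ matrices swaps the roles of $\alpha$ and $\beta$ in $M(\alpha,\beta)$. So it suffices to show that in CM varieties the term condition $C(\alpha,\beta;\delta)$ is equivalent to $C(\beta,\alpha;\delta)$ whenever $\delta$ arises as above. This again uses Day terms to move the "diagonal" row to the other side.
\item[(2)] For join distributivity, monotonicity (Proposition \ref{CommMonotone}(1)) gives $\ge$. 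For $\le$, note that $\Delta_{\alpha,\beta\vee\gamma}$ is generated by pairs from $\beta$ and from $\gamma$. Modulo $\delta=[\alpha,\beta]\vee[\alpha,\gamma]$, both generating sets are "centralized", so the term condition for $\beta\vee\gamma$ holds modulo $\delta$. We would prove this by a chain argument along $\beta\circ\gamma\circ\beta\circ\cdots$.
\item[(3)] For the homomorphism property, take $\theta=\ker f$. Surjective homomorphisms map generators of $M(\alpha,\beta)$ onto generators of $M(f(\alpha),f(\beta))$, where $f(\alpha)$ denotes the congruence $f(\alpha\vee\theta)$. This gives $[f(\alpha),f(\beta)]\le f([\alpha\vee\theta,\beta\vee\theta]\vee\theta)$. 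We then expand the inner commutator by (2) and (1) into $[\alpha,\beta]\vee[\alpha,\theta]\vee[\theta,\beta]\vee[\theta,\theta]$. Every term except $[\alpha,\beta]$ lies below $\theta$ by Proposition \ref{CommMonotone}(2), so it is absorbed. For the reverse inequality we lift $f([\alpha,\beta]\vee\theta)$ back and use that the term condition modulo $f^{-1}([f\alpha,f\beta])\ge\theta$ pulls back along $f$.
\end{enumerate}

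\textbf{Main obstacle.} I expect Step 1, and specifically the Day-term manipulation that makes the term condition symmetric, to be the crux. Everything else is lattice bookkeeping once symmetry and the $\Delta$-description are available. I would first try Gumm's shifting lemma on the congruences $\alpha\times\alpha$, $\Delta_{\alpha,\beta}$ and the projection kernels of $\alpha\le\A^2$.
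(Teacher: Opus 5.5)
The paper gives no proof of this proposition; it quotes Freese--McKenzie. Your route, via $\Delta_{\alpha,\beta}$ on $\A(\alpha)\le\A^2$ and Gumm's shifting lemma, is the standard one behind that citation. As written, though, two steps do not go through.

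\textbf{Step 1 has its inclusions the wrong way round.} In the paper's convention for $M(\alpha,\beta)$, columns lie in $\alpha$ and the condition is tested on rows. So $\Delta_{\alpha,\beta}$ relates \emph{columns}, and unwinding a chain from $(c,c)$ to $(c,d)$ transports $\delta$ from column to column. That is the term condition of $M(\alpha,\beta)^{T}=M(\beta,\alpha)$, and it gives $\delta_0\le[\beta,\alpha]$, not $\delta_0\le[\alpha,\beta]$. So Step 1 silently uses (1).

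The repair also removes your anticipated crux: $C(\alpha,\beta;\delta_0)$ needs only the shifting lemma for the projection kernels $\eta_0,\eta_1$ of $\A(\alpha)$, not Day terms. Suppose $\begin{pmatrix}a&b\\c&d\end{pmatrix}\in M(\alpha,\beta)$ and $(a,a)\,\Delta\,(a,b)$.
\begin{enumerate}
\item[(i)] Shifting gives $(c,a)\,\Delta\,(c,b)$.
\item[(ii)] The coordinate swap is an automorphism of $\A(\alpha)$ fixing the generators of $\Delta$. So $(a,c)\,\Delta\,(b,d)$ yields $(c,a)\,\Delta\,(d,b)$, hence $(c,b)\,\Delta\,(d,b)$.
\item[(iii)] Shifting once more gives $(c,d)\,\Delta\,(d,d)\,\Delta\,(c,c)$.
\end{enumerate}
Together with transitivity of $\delta_0$ (one more shift), this gives $[\alpha,\beta]\le\delta_0\le[\beta,\alpha]$ for all $\alpha,\beta$. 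That yields (1) and the $\Delta$-description at once.

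\textbf{The more serious gap is upward closure.} You never show that the term condition holds modulo every $\delta\ge[\alpha,\beta]$, not only modulo $[\alpha,\beta]$ itself. Yet (2) (``modulo $\delta$ both generating sets are centralized'') and the first inequality of (3) both rest on it.

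For (3), set $\alpha'=\alpha\vee\theta$ and $\beta'=\beta\vee\theta$. Mapping generators onto generators only gives $f(M(\alpha',\beta'))=M(f\alpha',f\beta')$. To conclude $[f\alpha',f\beta']\le f([\alpha',\beta']\vee\theta)$, you need the term condition for $M(\alpha',\beta')$ modulo $[\alpha',\beta']\vee\theta$. This is exactly where modularity enters (3); your pull-back argument for the other inequality works in any algebra. The step is not formal. An absolutely free algebra is abelian, but its quotient onto a two-element lattice is not, so the inequality fails there with $\alpha'=\beta'=1$.

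In a modular variety you can supply the missing lemma with two more shifts in $\A(\alpha)$:
\begin{enumerate}
\item[(i)] $(x,y)\,\Delta\,(x,y')$ implies $(y,y)\,\Delta\,(y,y')$. Hence $\eta_0\wedge\Delta_{\alpha,\beta}$ lies below the pullback of $\delta$ along the second projection.
\item[(ii)] Shift that pullback across the rectangle $(a,a),(a,c),(b,b),(b,d)$. This gives $a\,\delta\,c\Rightarrow b\,\delta\,d$ on $M(\alpha,\beta)$, and by (1) you get the row version too.
\end{enumerate}
With this lemma in place, your argument for (2) via $\Delta_{\alpha,\beta\vee\gamma}=\Delta_{\alpha,\beta}\vee\Delta_{\alpha,\gamma}$ and your argument for (3) both go through.
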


For algebras in congruence modular varieties, we remark that Proposition \ref{CommMeetJoin} (3) implies that quotients of abelian algebras are abelian and that $\A / \alpha$ is abelian if and only if $\alpha \geq [1_\A, 1_\A]$. Similarly, quotients of solvable algebras are solvable, and $\A / \alpha$ is solvable if and only if there exists some $d\in \mathbb{N}$ such that $\alpha \geq [1]^d$.

\begin{theorem}
    \cite[cf. Proposition 4.5]{CommutatorTheory}
    \label{CommDirectProd}
    Let $n\in \mathbb{N}\backslash \lbrace 0 \rbrace$, let $\A = \prod_{i\in [n]}\A_i$, let $\A_i$ be in a congruence modular variety for each $i\in [n]$, and let $\alpha_i, \beta_i \in \Con \; \A_i$ for each $i\in [n]$. Then $$\left[\prod_{i\in [n]}\alpha_i , \prod_{i\in [n]}\beta_i \right] = \prod_{i\in [n]}[\alpha_i , \beta_i].$$
\end{theorem}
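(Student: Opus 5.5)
We prove the two inclusions separately. Write $\alpha = \prod_{i\in[n]}\alpha_i$ and $\beta = \prod_{i\in[n]}\beta_i$. Let $p_i\colon \A\to\A_i$ be the projections, and let $\eta_i = \ker p_i$. Only the inclusion $\supseteq$ uses congruence modularity.

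\emph{The inclusion $[\alpha,\beta]\le\prod_i[\alpha_i,\beta_i]$.} I would use the matrix description in Definition \ref{MatrixSubalgebra}. Each generator of $M(\alpha,\beta)$ projects coordinatewise, via $p_i$, to a generator of $M(\alpha_i,\beta_i)$. Since $p_i$ is a homomorphism, $M(\alpha,\beta)$ maps into $M(\alpha_i,\beta_i)$ coordinatewise. Now set $\delta=\prod_i[\alpha_i,\beta_i]$, which is a congruence of $\A$. Take $\left(\begin{smallmatrix} a & b\\ c & d\end{smallmatrix}\right)\in M(\alpha,\beta)$ with $a\,\delta\, b$. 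Then for each $i$ we have $a_i\,[\alpha_i,\beta_i]\,b_i$, hence $c_i\,[\alpha_i,\beta_i]\,d_i$, hence $c\,\delta\,d$. By minimality of the commutator, $[\alpha,\beta]\le\delta$.

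\emph{The inclusion $\prod_i[\alpha_i,\beta_i]\le[\alpha,\beta]$.} Let $\theta_i=\bigwedge_{j\ne i}\eta_j$, and for a congruence $\gamma$ of $\A_i$ let $\hat\gamma = p_i^{-1}(\gamma)\wedge\theta_i$. This is the congruence that relates tuples agreeing off coordinate $i$ and $\gamma$-related at coordinate $i$.
\begin{enumerate}
\item[(a)] Changing one coordinate at a time shows $\prod_i\gamma_i=\bigvee_i\hat\gamma_i$. Applying this to $[\alpha_i,\beta_i]$, it suffices to show $\widehat{[\alpha_i,\beta_i]}\le[\alpha,\beta]$ for each $i$.
\item[(b)] Since $\hat\alpha_i\le\alpha$ and $\hat\beta_i\le\beta$, monotonicity (Proposition \ref{CommMonotone}(1)) gives $[\hat\alpha_i,\hat\beta_i]\le[\alpha,\beta]$.
\item[(c)] Apply Proposition \ref{CommMeetJoin}(3) to the surjection $p_i$, noting $p_i(\hat\alpha_i)=\alpha_i$ and $p_i(\hat\beta_i)=\beta_i$. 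This gives $[\hat\alpha_i,\hat\beta_i]\vee\eta_i = p_i^{-1}([\alpha_i,\beta_i])$. Meeting both sides with $\theta_i$ yields
\[\widehat{[\alpha_i,\beta_i]} = \bigl([\hat\alpha_i,\hat\beta_i]\vee\eta_i\bigr)\wedge\theta_i .\]
\item[(d)] By Proposition \ref{CommMonotone}(2), $[\hat\alpha_i,\hat\beta_i]\le\hat\alpha_i\le\theta_i$. The modular law in $\Con\;\A$ (available since $\A$ lies in the congruence modular variety generated by the $\A_i$) then gives
\[\bigl([\hat\alpha_i,\hat\beta_i]\vee\eta_i\bigr)\wedge\theta_i = [\hat\alpha_i,\hat\beta_i]\vee(\eta_i\wedge\theta_i).\]
Since $\eta_i\wedge\theta_i=0_\A$, this equals $[\hat\alpha_i,\hat\beta_i]$.
\end{enumerate}
Combining (c), (d) and (b), $\widehat{[\alpha_i,\beta_i]}=[\hat\alpha_i,\hat\beta_i]\le[\alpha,\beta]$, which together with (a) proves the inclusion.

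The main obstacle is step (d), which strips the factor $\eta_i$ off the pullback. This is the only place modularity is essential, and it requires checking that all the $\A_i$, and hence $\A$, lie in a single congruence modular variety. If the $\A_i$ come from different modular varieties, one needs the variety they jointly generate to still be congruence modular.
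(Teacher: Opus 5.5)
Your proof is correct. The paper gives no argument of its own here. It quotes the result from \cite[cf.\ Proposition 4.5]{CommutatorTheory} and uses it only once, in the proof of Theorem \ref{DichotomyCongruenceModular}, where every factor equals $\A$.

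Your route is therefore different from the paper's, and self-contained. The inclusion $[\alpha,\beta]\le\prod_i[\alpha_i,\beta_i]$ follows from the matrix definition alone, by projecting $M(\alpha,\beta)$ into each $M(\alpha_i,\beta_i)$, and needs no modularity. The reverse inclusion you reduce, via $\prod_i\gamma_i=\bigvee_i\hat\gamma_i$ and monotonicity, to the one-coordinate identity $\widehat{[\alpha_i,\beta_i]}=[\hat\alpha_i,\hat\beta_i]$. You obtain that identity from Proposition \ref{CommMeetJoin}(3) along $p_i$ together with a single use of the modular law, via $[\hat\alpha_i,\hat\beta_i]\le\theta_i$ and $\eta_i\wedge\theta_i=0_\A$.

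In step (c), one might read Proposition \ref{CommMeetJoin}(3) as in \cite{CommutatorTheory}, where $f(\alpha)$ means the image of $\alpha\vee\ker f$. Nothing changes, since $\hat\alpha_i\vee\eta_i=p_i^{-1}(\alpha_i)$. Compared with the bare citation, your argument uses only facts the paper already lists (Propositions \ref{CommMonotone} and \ref{CommMeetJoin}), and it shows exactly where modularity enters.

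Your caveat about the hypothesis is well founded and not a formality. The statement must be read with all $\A_i$ in one congruence modular variety, as in the paper's application. The reason is that the variety generated by two congruence modular varieties of the same signature need not be congruence modular. For instance, take two ternary symbols $p,q$, let $\A_1$ on $\mathbb{Z}_2$ have $p^{\A_1}=x+y+z$ and $q^{\A_1}\equiv 0$, and let $\A_2$ on $\mathbb{Z}_2$ have the roles swapped. Each of $\A_1,\A_2$ has a Mal'cev term. But in $\Con\,(\A_1\times\A_2)$, the projection kernels $\eta_1,\eta_2$ and the congruence collapsing only $(0,0)$ and $(1,0)$ form, together with $0$ and $1$, a pentagon. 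So your step (d) would not be available there.
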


Here the product is the direct product and the congruence $\prod_{i\in [n]}[\alpha_i, \beta_i]$ is a congruence on the algebra $\prod_{i\in [n]} \A_i$.

\begin{definition}
\label{WeakDifference}
\cite{ALVIII}
    Let $\mathcal{V}$ be a variety. We say the ternary term $d$ in the signature of $\mathcal{V}$ is a \emph{weak difference term} of $\mathcal{V}$ if $d$ is idempotent and for each $\A\in \mathcal{V}$ and $x,y \in A$ we have $$d^{\A}(x,x,y)\; \alpha \; y \;  \alpha \; d^{\A}(y,x,x)$$ where $\alpha\in \Con \;\A$ is given by the commutator $\alpha = [\Theta(x,y), \Theta(x,y)]$ and $\Theta(x,y)$ is the congruence generated by the pair $(x,y)$.
\end{definition}

It is known that congruence modular varieties have weak difference terms. In fact if $\mathcal{V}$ is a congruence modular variety, then $\mathcal{V}$ has a \emph{difference term} $d$. For all $\A \in \mathcal{V}$ and for all $x,y\in A$, the term $d$ satisfies $$d^\A(x,x,y) = y \; \alpha \; d^\A(y,x,x)$$ where $\alpha = [\Theta(x,y), \Theta(x,y)]$ (see \cite[Theorem 5.5]{CommutatorTheory}).

Note that in an abelian algebra $\A$, every commutator is the equality relation $0_\A$. In particular, for all $x,y\in A$ we know that $\alpha = [\Theta(x,y), \Theta(x,y)]$ is the equality relation. That is, a weak difference term of an abelian algebra is just a Mal'cev term. Furthermore, in the case where $\A$ is abelian, it is well known that the Mal'cev term operation $d^{\A}$ can be given by $d^{\A}(x,y,z) = x-y+z$ where $+$ is addition of an abelian group on the set $A$ and that $d^{\A}$ commutes with the basic operations of the algebra $\A$ (see \cite[Lemma 5.6, Proposition 5.7]{CommutatorTheory}).

\section{Tractability results}\label{Tractability Results}

We will first extend the tractability part of \cite[Theorem 3]{larrauri2024solving} by mimicking the proof technique in a different setting, in particular for algebras in varieties with weak difference terms.

\begin{theorem}
    \label{TractabilityTaylorAlg}
    Let $\A$ and $\B$ be finite algebras in a variety with a weak difference term $d$ such that there exists a homomorphism $\A\to \B$.
    
    If there is a homomorphism $\psi\colon\A \to \B$ such that $\psi(\A) $ is an abelian algebra, then $\PEqn(\A, \B)$ is solvable by AIP, hence in polynomial time.
\end{theorem}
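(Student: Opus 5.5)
By Proposition \ref{PEqnasPCSP} and Theorem \ref{AIP characterization}, it suffices to show that $\Pol(\mathbb{A}^\circ, \mathbb{B}^\circ)$ contains alternating maps of all odd arities. A polymorphism from $\mathbb{A}^\circ$ to $\mathbb{B}^\circ$ is exactly a map $p\colon A^n\to B$ that is a homomorphism $\A^n\to\B$: preserving the graph of every basic operation $f$ is the same as commuting with $f$. So the plan is to construct, for each $n$, an alternating homomorphism $\A^{2n+1}\to\B$.

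Let $\C=\psi(\A)$, a subalgebra of $\B$. It lies in the variety, so it has the weak difference term $d$, and it is abelian by hypothesis. By the remarks closing the previous section, $d^{\C}$ is a Mal'cev operation on $C$. It has the form $d^{\C}(x,y,z)=x-y+z$ for an abelian group $(C;+)$, and it commutes with all basic operations of $\C$. Note that $d^\C$ is the restriction of the term operation $d^\B$ to $C$, since $\C$ is closed under $d^\B$.

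Now define
$$p(x_1,\dots,x_{2n+1}) = \psi(x_1)-\psi(x_2)+\psi(x_3)-\dots-\psi(x_{2n})+\psi(x_{2n+1}),$$
computed in $(C;+)$. This is the iterated term operation of $d^\C$ applied to $\psi(x_1),\dots,\psi(x_{2n+1})$, so its values lie in $C\subseteq B$.

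Three things must be checked.

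\emph{$p$ is a homomorphism $\A^{2n+1}\to\B$.} The map $(x_i)\mapsto(\psi(x_i))$ is a homomorphism $\A^{2n+1}\to\C^{2n+1}$. The iterated Mal'cev map $\C^{2n+1}\to\C$ is a homomorphism because $d^\C$ commutes with every basic operation of $\C$. Composing with the inclusion $\C\hookrightarrow\B$ gives the claim.

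\emph{$p$ is alternating.} Permuting the odd positions among themselves, or the even positions among themselves, does not change the value, by commutativity of $+$. Replacing the last two arguments $x,x$ by $y,y$ changes $-\psi(x_{2n})+\psi(x_{2n+1})$ from $-\psi(x)+\psi(x)=0$ to $-\psi(y)+\psi(y)=0$, so the value is unchanged.

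\emph{The hypotheses of Theorem \ref{AIP characterization} hold.} The required homomorphism $\A\to\B$ is given.

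I expect the main obstacle to be the homomorphism check. It depends on the fact, cited from \cite{CommutatorTheory}, that in an abelian algebra with a Mal'cev term the group operation $x-y+z$ is a term operation commuting with the basic operations. One must also be careful that this structure is taken on $\C$ and not on all of $\B$, since $\B$ need not be abelian. Nullary operations need a brief remark: each constant of $\C$ is fixed by $d^\C$ by idempotence, and $\psi$ preserves constants.
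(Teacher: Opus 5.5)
Your proposal is correct and follows the paper's proof essentially step for step. The paper also sets $\mathbf{C}=\psi(\mathbf{A})$, uses that $d^{\mathbf{C}}(x,y,z)=x-y+z$ is a Mal'cev operation commuting with the basic operations of $\mathbf{C}$, and takes $\psi(a_1)-\psi(a_2)+\dots-\psi(a_{2n})+\psi(a_{2n+1})$ as the alternating polymorphisms for the AIP characterization. The only cosmetic difference is in the homomorphism check: you compose the maps $\mathbf{A}^{2n+1}\to\mathbf{C}^{2n+1}\to\mathbf{C}\hookrightarrow\mathbf{B}$, whereas the paper writes out the computation for a basic operation $f$ directly.
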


\begin{proof}
    We will apply Proposition \ref{AIP characterization} to get the result, so we need to define for each odd arity a polymorphism from $\A$ to $\B$ which is alternating. Define $\mathbf{C} = \psi(\A)$. Because $\mathbf{C}$ is abelian, we have that $d^{\mathbf{C}}$ is a Mal'cev term operation with $d^{\mathbf{C}}(x,y,z) = x-y+z$ where $+$ is the addition in an abelian group from \cite[Lemma 5.6]{CommutatorTheory}. Furthermore, $d^{\mathbf{C}}$ commutes with all the basic operations of $\mathbf{C}$ by \cite[Proposition 5.7]{CommutatorTheory}.

    For $n\geq 0$, we define $$p_{2n+1}\colon A^{2n+1} \to B, \quad (a_1, \dots , a_{2n+1}) \mapsto  \psi(a_1)-\psi(a_2)+\dots -\psi(a_{2n})+\psi(a_{2n+1}).$$

    From this it is clear we can permute inputs with even indices or with odd indices and not change the value of $p_{2n+1}$. Furthermore, for any $a_1, \dots , a_{2n-1}, x, y \in A$, it is easy to see that $$p_{2n+1}(a_1, \dots , a_{2n-1}, x, x) = p_{2n-1}(a_1, \dots , a_{2n-1}) = p_{2n+1}(a_1, \dots , a_{2n-1}, y, y).$$
    
    We now show that $p_{2n+1}$ is a polymorphism from $\A$ to $\B$. Let $f$ be a basic operation symbol in $\sigma$ of arity $k$. Let $\overline{a}_i \in A^k$ for each $i\in [2n+1]$ with $\overline{a}_i = (a_1^i, \dots a_k^i)$. Then
    \begin{align*}
        &p_{2n+1}(f^\textbf{A}(\overline{a}_1) \dots , f^\textbf{A}(\overline{a}_{2n+1}))\\  & =
        \psi(f^\textbf{A}(\overline{a}_1))-\psi(f^\textbf{A}(\overline{a}_2))+\dots -\psi(f^\textbf{A}(\overline{a}_{2n}))+\psi(f^\textbf{A}(\overline{a}_{2n+1}))\\
        & =f^\textbf{C}(\psi(a_1^1), \dots , \psi (a_k^1)) - \dots + f^\textbf{C}(\psi(a_1^{2n+1}), \dots , \psi(a_k^{2n+1})) \quad \text{ as }\psi \text{ is a homomorphism}\\
        & =f^\textbf{C}(\psi(a_1^1)-\dots + \psi(a_{1}^{2n+1}), \dots  ,\psi(a_k^1)-\dots + \psi(a_{k}^{2n+1}) ) \quad \text{ as } d^{\mathbf{C}} \text{ commutes with }f^\textbf{C}\\
        & = f^\textbf{C}(p_{2n+1}(a_1^1, \dots , a_{2n+1}^1), \dots , p_{2n+1}(a_k^1, \dots , a_{k}^{2n+1})) \quad \text{ as desired.}
    \end{align*}

\end{proof}

The following lemma will be useful for reducing excess notation.

\begin{lemma}
\label{lemma2}
     Let $\A$ and $\B$ be algebras of the same signature with a polymorphism $p\colon \,\A^n\to \B$, let $f^{\A}$ an $k+2$-ary term operation on $\A$, let $c_i \in A$ for each $i\in[k]$, and let the binary polynomial operation $\cdot$ on $\A$ be given by $$x\cdot y= f^{\A}(x,y,c_1, \dots , c_k).$$
Define the binary polynomial operation $\cdot_p$ on $p(\A^n)$ by $$x\cdot_p y= f^{\B}(x,y,\dots p(c_1, \dots ,c_1), \dots , p(c_k, \dots c_k)).$$
Then $p\colon (A; \cdot)^n \to (p(A^n); \cdot_p)$ is a homomorphism.
\end{lemma}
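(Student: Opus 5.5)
The plan is to verify the homomorphism condition directly from the definition of a polymorphism. Recall that $p\colon \A^n\to\B$ being a polymorphism (a homomorphism from the power $\A^n$ into $\B$, via Proposition \ref{PEqnasPCSP} equivalently a polymorphism of $\mathbb{A}^\circ$ to $\mathbb{B}^\circ$) means that $p$ commutes with every basic operation, coordinatewise on the $\A^n$ side. Since term operations are built from basic operations by composition, the first step is a routine induction on terms: $p$ also commutes with every term operation. Concretely, for every $(k+2)$-ary term $f$ and tuples $\overline{x}^1,\dots,\overline{x}^{k+2}\in A^n$,
\[
p\bigl(f^{\A}(\overline{x}^1,\dots,\overline{x}^{k+2})\bigr)=f^{\B}\bigl(p(\overline{x}^1),\dots,p(\overline{x}^{k+2})\bigr),
\]
where $f^{\A}$ on the left is applied coordinatewise.

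Next, take $\overline{x}=(x_1,\dots,x_n)$ and $\overline{y}=(y_1,\dots,y_n)$ in $A^n$. The operation $\cdot$ on $(A;\cdot)^n$ acts coordinatewise, so
\[
\overline{x}\cdot\overline{y}=f^{\A}(\overline{x},\overline{y},\overline{c}_1,\dots,\overline{c}_k),
\]
where $\overline{c}_i=(c_i,\dots,c_i)$ is the constant tuple. Applying $p$ and the term-commutation from the first step gives
\[
p(\overline{x}\cdot\overline{y})=f^{\B}\bigl(p(\overline{x}),p(\overline{y}),p(c_1,\dots,c_1),\dots,p(c_k,\dots,c_k)\bigr)=p(\overline{x})\cdot_p p(\overline{y}).
\]

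The last step is to check that $\cdot_p$ is a well-defined binary operation on $p(A^n)$, so that the target structure makes sense. Given $u=p(\overline{x})$ and $v=p(\overline{y})$, the displayed identity shows $u\cdot_p v=p(\overline{x}\cdot\overline{y})\in p(A^n)$. The definition of $\cdot_p$ does not depend on the chosen preimages, since it is defined directly from $u$, $v$ and the fixed constants $p(c_i,\dots,c_i)$.

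None of these steps is a real obstacle; the only point requiring care is the induction extending commutation from basic operations to term operations, which is standard.
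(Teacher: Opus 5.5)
Your proposal is correct and matches the paper's proof: both apply the polymorphism property to the term $f$ with the constant tuples $(c_i,\dots,c_i)$ and read off $p(\overline{x})\cdot_p p(\overline{y})$. Your induction extending commutation from basic operations to term operations, and your check that $\cdot_p$ maps $p(A^n)\times p(A^n)$ into $p(A^n)$, spell out what the paper's one-line computation (``by definition of a polymorphism'') leaves implicit.
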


\begin{proof}
    By definition of a polymorphism we have 
        \begin{align*}
    p(x_1\cdot y_1, \dots ,x_k\cdot y_k) &= p(f^{\A}(x_1, y_1, c_1, \dots c_{k}), \dots , f^{\A}(x_n, y_n, c_1, \dots c_{k}) )\\ &= f^{\B}(p(x_1, \dots , x_n), p(y_1, \dots , y_n), p (c_1, \dots , c_1), \dots ,p(c_{k}, \dots c_{k}))\\
    &=p(x_1, \dots , x_k)\cdot_p p(y_1, \dots ,y_k).
     \end{align*}\end{proof}

The next result involves algebras expanded with an additional relation. In this setting, it is useful to consider the relational structures corresponding to our algebras as in Proposition \ref{PEqnasPCSP}. If $\mathbb{A} = (A; R_1^\mathbb{A}, \dots , R_n^{\mathbb{A}})$ and $\mathbb{B}=(B; R_1^\mathbb{B}, \dots , R_n^{\mathbb{B}})$ are relational structures with a homomorphism $\mathbb{A}\to \mathbb{B}$, it can be shown that $\PCSP(\mathbb{A}, \mathbb{B})$ is polynomially equivalent to $$\PCSP((A; R_1^\mathbb{A}\times \dots \times R_n^{\mathbb{A}}), (B; R_1^\mathbb{B}\times \dots \times R_n^{\mathbb{B}})).$$ That is, every PCSP can be viewed as a PCSP involving relational structures with just one relation.

\begin{theorem}
    \label{TractabilityBinaryPolOp}
    Let $\A=(A; f_1^{\A}, \dots , f_r^{\A})$ and $\B= (B; f_1^{\B}, \dots , f_r^{\B})$ be finite algebras, let $\cdot$ be a binary polynomial operation on $A$. Let $R^{\A}\subseteq A^r$ and $R^\B\subseteq B^r$ be relations, and define $S^{\A} = (f_1^{\A})^\circ\times \dots \times(f_r^{\A})^\circ\times R^{\A}$, $S^{\B} = (f_1^{\B})^\circ\times \dots \times(f_r^{\B})^\circ\times R^{\B}$, $\overline{\A} = (A; S^{\A}), \overline{\B} = (B; S^{\B})$.
    
    If there is a homomorphism $\psi\colon \overline{\A} \to \overline{\B}$ such that
    \begin{enumerate}

    \item[(1)] $( \psi(A);\; \cdot_\psi )$ is a commutative regular semigroup and

    \item[(2)] $[\psi(S^{\A})]\subseteq S^{\B}$,

    \end{enumerate}

    then $\PCSP(\overline{\A}, \overline{\B} )$ is solvable by BLP+AIP, hence is in $\mathbf{P}$.
    
\end{theorem}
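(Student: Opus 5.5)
The plan is to apply Proposition \ref{BLP+AIP characterization}: it suffices to produce, for every $n\geq 0$, a $2$-block symmetric polymorphism $p_{2n+1}\in \Pol(\overline{\A}, \overline{\B})$ of arity $2n+1$. Write $C=\psi(A)\subseteq B$, and let $\mathbf{C}=(C;\cdot_\psi)$ be the commutative regular semigroup from hypothesis (1). Since $\mathbf{C}$ is commutative and regular, every element of $C$ is a group element, so $c^{-1}$ is defined for every $c\in C$. Let $N$ be the arity of $S$. The power $\mathbf{C}^N$ is again a finite commutative regular semigroup, with the operation and inverses taken coordinatewise. I read the coset closure $[\psi(S^{\A})]$ in hypothesis (2) inside $\mathbf{C}^N$, where $\psi$ is applied coordinatewise to tuples.

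For the construction I take the blocks $\lbrace 1,\dots,n\rbrace$ and $\lbrace n+1,\dots,2n+1\rbrace$ and define
$$p_{2n+1}(x_1,\dots,x_n,y_1,\dots,y_{n+1}) = \prod_{j\in[n+1]}\psi(y_j)\cdot_\psi \prod_{i\in[n]}\psi(x_i)^{-1},$$
with all products in $\mathbf{C}$. This is a map $A^{2n+1}\to C\subseteq B$. Because $\cdot_\psi$ is associative and commutative, permuting the arguments within either block does not change the value, so $p_{2n+1}$ is $2$-block symmetric. For $n=0$ we simply get $p_1=\psi$.

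The main step is to check that $p_{2n+1}$ is a polymorphism. Take tuples $\overline{s}_1,\dots,\overline{s}_n,\overline{t}_1,\dots,\overline{t}_{n+1}\in S^{\A}$ and apply $p_{2n+1}$ coordinatewise. The resulting tuple is
$$\prod_{j}\psi(\overline{t}_j)\cdot\prod_i\psi(\overline{s}_i)^{-1}$$
computed in $\mathbf{C}^N$, where each $\psi(\overline{t}_j)$ and $\psi(\overline{s}_i)$ lies in $\psi(S^{\A})$. By commutativity this element can be rewritten as an iterated application of $m(x,y,z)=xy^{-1}z$:
$$m\bigl(\cdots m\bigl(m(\psi(\overline{t}_1),\psi(\overline{s}_1),\psi(\overline{t}_2)),\psi(\overline{s}_2),\psi(\overline{t}_3)\bigr)\cdots,\psi(\overline{s}_n),\psi(\overline{t}_{n+1})\bigr).$$
Hence it belongs to the closure of $\psi(S^{\A})$ under $m$, which by Lemma \ref{lemmaCosets} is $[\psi(S^{\A})]$. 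Hypothesis (2) then puts this tuple in $S^{\B}$, which is exactly the polymorphism condition for the single relation $S$.

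The step I expect to need the most care is the interpretation of hypothesis (1), which fixes the setting for everything above. I need $\cdot_\psi$, defined as in Lemma \ref{lemma2} with $p=\psi$ as a unary polymorphism, to restrict to an operation on $C$. I also need $[\psi(S^{\A})]$ to be formed with the coordinatewise product in $\mathbf{C}^N$, so that the values of $p_{2n+1}$ on tuples are literally the $m$-combinations written above. Once this bookkeeping is settled, Proposition \ref{BLP+AIP characterization} gives solvability of $\PCSP(\overline{\A},\overline{\B})$ by BLP+AIP, and hence membership in $\mathbf{P}$.
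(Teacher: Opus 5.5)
Your proposal is correct and follows essentially the same route as the paper: the same alternating product $p_{2n+1}$ of the values $\psi(a_i)$ and their inverses, the observation that its image on $(S^{\A})^{2n+1}$ lies in the $m$-closure $[\psi(S^{\A})]\subseteq S^{\B}$, and then Proposition \ref{BLP+AIP characterization}. Your choice of contiguous blocks instead of odd/even indices is purely cosmetic. Your remarks that $\psi(A)$ is closed under $\cdot_\psi$ and that the coset closure is taken coordinatewise in $\mathbf{C}^N$ only make explicit what the paper leaves implicit.
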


\begin{proof}
    Let $\psi \colon \overline{\A}\to \overline{\B}$ be a homomorphism that satisfies conditions 1. and 2.

Now we can closely follow an argument from \cite[Proposition 6.2]{larrauri2025equationsfinitemonoidsinfinite}. We will show there exist 2-block symmetric polymorphisms of every odd arity from $\overline{\A}$ to $\overline{\B}$ and then apply Proposition \ref{BLP+AIP characterization}.
    
Because $(\psi(A); \cdot_\psi ) $ is a union of (finite) abelian groups by assumption, we know that for all $x\in \psi(A)$, there exists a unique $x^{-1}\in \psi(A)$ which is an inverse of $x$ with respect to the multiplication $\cdot_\psi$. We define for each $n\geq 0$ $$p_{2n+1}\colon A^{2n+1} \to B, \quad (a_1, \dots a_{2n+1})\mapsto \prod_{\substack{i\in [2n+1]\\i \text{ odd}}}\psi(a_i)\prod_{\substack{j\in [2n+1]\\j \text{ even}}}\psi(a_j)^{-1},$$
where the multiplication above is $\cdot_\psi$.

Since our multiplication is commutative, it is immediate that $p_{2n+1}$ is 2-block symmetric with blocks corresponding to the even and odd indices. We remark that $p_{2n+1}$ is not necessarily alternating (although $\psi(x)\psi(x)^{-1}$ is an identity of a subgroup of $\psi(A)$, it need not be an identity of $\psi(A)$). Next, we need to show that $p_{2n+1}$ is a polymorphism from $\A$ to $\B$. That is, we must show that $p_{2n+1}((S^{\A})^{2n+1})\subseteq S^\B$. We have
\begin{align*}
p_{2n+1}((S^{\A})^{2n+1}) &= \psi(S^{\A})^{ (n+1)}\cdot_\psi (\psi(S^{\A})^{-1})^{( n)}\\
&=\psi(S^{\A})\cdot_\psi \left( (\psi(S^{\A}))^{-1} \cdot_\psi\psi(S^{\A}) \right)^{(n)}\\
&\subseteq [\psi(S^{\A})]\\
&\subseteq S^{\B}.
\end{align*}

So $p_{2n+1}$ is a 2-block symmetric polymorphism from $\overline{\A}$ to $\overline{\B}$ for each $n\geq 0$ and therefore $\PCSP(\overline{\A}, \overline{\B})$ is in $\mathbf{P}$ by Proposition \ref{BLP+AIP characterization}. \end{proof}

\section{Hardness results}\label{Hardness results}

In \cite{larrauri2025equationsfinitemonoidsinfinite}, Larrauri, Mottet, and Živný showed that there is a dichotomy for $\PEqn(\A, \B)$ where $\A$ and $\B$ are finite monoids expanded with an additional relation. We can use the proof techniques from \cite[Theorem 3]{larrauri2024solving} and \cite[Proposition 5.9]{larrauri2025equationsfinitemonoidsinfinite} to generalize this dichotomy to the case in which $\A$ and $\B$ have a binary polynomial operation and this operation has a left and right identity in $\A$. We remark that this binary polynomial operation need not be associative.

The following is contained in the proof of \cite[Theorem 5]{larrauri2024solving}, and will be applied later in the construction of a selection function in the proof of our main hardness result.

\begin{theorem}
\cite{larrauri2024solving}
\label{SelectionCorollary}
    Let $\mathbf{M}$ be a finite monoid with $a\in M$ such that $a$ is not a group element of $\mathbf{M}$. Then there exists $K\in \mathbb{N}$ and a selection function $I\colon \mathcal{M}_{\mathbf{M}, a} \to \mathcal{P}(\mathbb{N})$ with bound $K$.
\end{theorem}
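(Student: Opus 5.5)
The plan is to build the selection function directly on the monoidal minion $\mathcal{M}_{\mathbf{M},a}$ from Definition \ref{MonoidalMinion}. For a tuple $t=(a_1,\dots,a_n)\in\mathcal{M}_{\mathbf{M},a}$, all the $a_i$ commute. So they generate a commutative submonoid $\mathbf{N}_t\le\mathbf{M}$, and we can do all computations inside $\mathbf{N}_t$. This lets us use the commutative facts from Section \ref{Background on semigroups and monoids}: Lemma \ref{idempotentElts}, Lemma \ref{lemma idempotent product}, and the fact that products of group elements in a commutative semigroup are group elements. A first consequence is that the $a_i$ cannot all be group elements of $\mathbf{N}_t$, since otherwise $a$ would be a group element. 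Hence the set of ``non-group'' indices is nonempty.

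The candidate selection set is defined as follows. Fix $K=|M|+1$. Let $J(t)$ be the set of indices $i$ with $a_i$ not a unit of $\mathbf{M}$; units are group elements, so $J(t)\neq\emptyset$. Among the subsets of $J(t)$, take $I(t)$ to be a canonically chosen (e.g.\ lexicographically least) set $S\subseteq J(t)$ with the following property: the product $\prod_{i\in S}a_i$ is not $\mathcal{R}$-below the product of the remaining factors in any way that would let $a$ be recovered from the others. Concretely, if $\prod_{i\notin S}a_i$ is a group element then $a\neq$ a group element forces $S$ to carry the ``non-group'' part. So the aim is to take $I(t)$ to be a minimal $S$ such that $\prod_{i\notin S}a_i$ is a group element of $\mathbf{N}_t$.

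The size bound $|I(t)|<K$ will come from a pigeonhole argument on prefix products. List the indices of a minimal $S$ as $s_1,\dots,s_m$ and consider the partial products $q_j=a_{s_1}\cdots a_{s_j}$. If $m\ge|M|$, two of them coincide, say $q_i=q_j$ with $i<j$, so $q_i=q_i b$ with $b=a_{s_{i+1}}\cdots a_{s_j}$. Then $q_i=q_ib^\omega$, and by commutativity the block $\{s_{i+1},\dots,s_j\}$ only contributes through $b^\omega$, and hence through $b^{\omega+1}$ up to the factor $b^\omega$. One then tries to use this (via Lemma \ref{lemma idempotent product}) to move that block into the group part, contradicting minimality. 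This step is the main obstacle, and the plan is to get it by carefully choosing the ordering and the notion of minimality. If it fails, the fallback is to define $I(t)$ directly via Green's $\mathcal{J}$-classes: take the indices $i$ where multiplying by $a_i$ strictly drops the $\mathcal{J}$-class of the running product. There are at most $|M|$ such drops along any chain, which gives the bound immediately.

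Finally we verify condition (2) of Definition \ref{SelectionFunction}. For $\pi\colon[n]\to[m]$, the minor $t^{(\pi)}=(b_1,\dots,b_m)$ has $b_k=\prod_{j\in\pi^{-1}(k)}a_j$. Suppose $\pi(I(t))\cap I(t^{(\pi)})=\emptyset$. Then every $k\in I(t^{(\pi)})$ has all its preimages outside $I(t)$, so each such $b_k$ is a product of factors lying in the group part of $t$. Combined with the group part of $t^{(\pi)}$, this would exhibit $a$ as a product of group elements of a commutative monoid, and hence as a group element, which is a contradiction. The $\mathcal{J}$-drop variant is checked the same way: a drop at a block $b_k$ forces a drop at some $a_j$ with $\pi(j)=k$, once the ordering of $t$ is chosen compatibly with $\pi$.
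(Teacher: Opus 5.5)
Your concrete choice of $I(t)$ (an inclusion-minimal $S$ with $\prod_{i\notin S}a_i$ a group element) can actually be made to work. However, neither of the two verifications in your proposal is correct as written, so there is a genuine gap. For condition (2), you argue that the $b_k$ with $k\in I(t^{(\pi)})$ come from the ``group part'' of $t$, so $a$ is a product of group elements. This silently assumes that a sub-product of the group element $\prod_{i\notin I(t)}a_i$ is again a group element, which is false. Take the commutative monoid $\{1,y,f,x,0\}$ with $y^2=f=f^2$, $yf=f$, $yx=fx=x$, $x^2=0$ and $0$ absorbing, and let $a=x$, $t=(y,y,x)$. Then $I(t)=\{3\}$, since $y\cdot y=f$ is idempotent, but $y$ itself is not a group element. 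The conclusion you want is still true, for a different reason. If $c=\prod_{i\notin S}a_i$ is a group element, Lemma \ref{lemma idempotent product} gives $c=c\,a_j^{\omega}$, hence $a=a\,a_j^{\omega}$, for every $j\notin S$. Apply this to $S=I(t)$ and to $S=\pi^{-1}(I(t^{(\pi)}))$; these are disjoint exactly when $\pi(I(t))\cap I(t^{(\pi)})=\emptyset$. Then $a\,a_j^{\omega}=a$ for every $j$, so $a\,a^{\omega}=a\prod_j a_j^{\omega}=a$, and $a$ is a group element, a contradiction.

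For the bound you leave the key step open, and your prefix-product pigeonhole does not close it for this definition. It yields $q_i=q_i b^{\omega}$, a statement about an idempotent power of a block. Minimality of $S$, by contrast, is about deleting indices, and $b^{\omega}$ is not a sub-product of $t$. What does work is pigeonhole on values. Suppose some $v$ occurs at $k\ge|M|$ indices of $S$. Deleting those indices from $S$ multiplies the group element $\prod_{i\notin S}a_i$ by $v^{k}$. This is a group element because $k$ is at least the index of $v$, and products of commuting group elements are group elements. That contradicts minimality, so $|S|\le|M|(|M|-1)$, giving $K=|M|^2$ rather than your $|M|+1$.

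Your fallback via $\mathcal{J}$-drops is not admissible. The drop set depends on an ordering of the coordinates, while $I(t)$ must be determined by $t$ alone and work for all $\pi$ simultaneously. Choosing the ordering ``compatibly with $\pi$'' therefore does not define a selection function.

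Your prefix computation is exactly what is needed for the simpler choice $I(t)=\{i : a\,a_i^{\omega}\neq a\}$:
\begin{enumerate}
\item[(1)] It is nonempty because $a$ is not a group element.
\item[(2)] It satisfies $I(t^{(\pi)})\subseteq\pi(I(t))$, since $b_k^{\omega}=\prod_{j\in\pi^{-1}(k)}a_j^{\omega}$.
\item[(3)] A repetition $q=q\,b^{\omega}$ among prefix products over indices of $I(t)$ gives $a=a\,b^{\omega}$, hence $a\,a_j^{\omega}=a$ for each $j$ in the block, so $|I(t)|<|M|$.
\end{enumerate}
The paper itself gives no argument here; it imports the statement from \cite{larrauri2024solving}.
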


\begin{theorem}
    \label{HomomorphismToUnionOfMonoidalMinions}
    Let $\mathcal{M} , \mathcal{N}$ be minions. If $\xi\colon \mathcal{M}\to\mathcal{N}$ is a minion homomorphism and $\mathcal{N}$ has a selection function $I$ with bound $K$, then $I\circ\xi$ is a selection function for $\mathcal{M}$ with bound $K$.
\end{theorem}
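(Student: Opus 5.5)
We verify directly that $J = I\circ\xi$ satisfies the two conditions of Definition \ref{SelectionFunction}, using only that $\xi$ preserves arities and commutes with taking minors (Definition \ref{MinionHom}). The whole argument is pulling the conditions for $I$ back along $\xi$.

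\emph{Condition (1).} Let $p\in\mathcal{M}^{(n)}$. Since $\xi$ preserves arities, $\xi(p)\in\mathcal{N}^{(n)}$. Because $I$ is a selection function for $\mathcal{N}$ with bound $K$, we get $J(p)=I(\xi(p))\subseteq[n]$ and $|J(p)|<K$.

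\emph{Condition (2).} Let $\pi\colon[n]\to[m]$ and $p\in\mathcal{M}^{(n)}$. Since $\mathcal{M}$ is a minion, $p^{(\pi)}\in\mathcal{M}$ automatically. By condition (2) of Definition \ref{MinionHom}, $\xi(p^{(\pi)})=\xi(p)^{(\pi)}$, and this element lies in $\mathcal{N}$. Applying condition (2) of the selection function $I$ to $\xi(p)\in\mathcal{N}^{(n)}$ gives
$$\pi(I(\xi(p)))\cap I(\xi(p)^{(\pi)})\neq\emptyset.$$
Rewriting $\xi(p)^{(\pi)}$ as $\xi(p^{(\pi)})$, this says exactly $\pi(J(p))\cap J(p^{(\pi)})\neq\emptyset$.

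The only point needing care is notational. For an abstract minion, $p^{(\pi)}$ means $\pi^{\mathcal{M}}(p)$, and the homomorphism condition reads $\xi\circ\pi^{\mathcal{M}}=\pi^{\mathcal{N}}\circ\xi$. Once this is stated, both conditions follow immediately, so there is no real obstacle.
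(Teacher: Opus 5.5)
Your proof is correct and is the same argument as the paper's: the bound comes from arity preservation together with the bound for $I$, and the intersection condition comes from rewriting $I(\xi(p^{(\pi)}))$ as $I(\xi(p)^{(\pi)})$ and applying condition (2) for $I$ in $\mathcal{N}$. You are slightly more explicit than the paper in checking that $I(\xi(p))\subseteq[n]$ and that $\xi(p)^{(\pi)}$ lies in $\mathcal{N}$.
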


\begin{proof}
    Since $I$ has bound $K$, it is immediate that $I\circ \xi$ has bound $K$. Let $p\in \mathcal{M}^{(n)}$ and let $\pi\colon[n]\to [m]$ be a function. We have 
    \begin{align*}
        \pi(I(\xi(p)) )\cap I((\xi(p^{(\pi)}))&=\pi(I(\xi(p)) )\cap I((\xi(p))^{(\pi)})\neq \emptyset .
    \end{align*}\end{proof}

We will use the $p_i$ functions in the next lemma in the proof of Theorem \ref{MainDichotomy}.

\begin{lemma}
\label{p_i basic properties}
    Let $\A$ and $\B$ be finite algebras, let $e\in A$, let $\cdot$ be a binary polynomial operation on $\A$ such that $x\cdot e = x = e\cdot x$ for all $x\in A$, let $p\in \Pol(\A, \B)^{(n)}$ where $n\in \mathbb{N}\backslash \lbrace 0 \rbrace$, and for each $i\in [n]$, let $a_i\in A$. Define $$p_{i}\colon A\to B, \quad x \mapsto p(e, \dots ,e,\underset{i}{x},e,\dots e).$$ Then
\begin{enumerate}
    \item[(1)] $p_i(a_i)p_j(a_j) = p_j(a_j)p_i(a_i)$ for $1\leq i < j \leq n$,
    \item[(2)] $p_i(a_i) \big( p_j(a_j)p_k(a_k) \big) = \big(p_i(a_i)p_j(a_j) \big)p_k(a_k)$ for $1\leq i < j < k \leq n$, 
    \item[(3)] $p(a_1, \dots ,a_n) = p_1(a_1)\dots p_n(a_n)$
\end{enumerate}

where the multiplication is $\cdot_p$.
\end{lemma}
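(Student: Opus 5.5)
The plan is to exploit Lemma \ref{lemma2}: since $\cdot$ is a binary polynomial operation on $\A$, say $x\cdot y = f^{\A}(x,y,c_1,\dots,c_k)$, and $p$ is a polymorphism, the map $p\colon (A;\cdot)^n \to (p(A^n);\cdot_p)$ is a homomorphism of magmas, where $(A;\cdot)^n$ carries the coordinatewise operation. Every statement in the lemma will then be obtained by checking an identity among tuples in $A^n$ and transporting it through $p$.

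For $i\in[n]$ and $x\in A$, let $u_i(x)\in A^n$ be the tuple with $x$ in coordinate $i$ and $e$ elsewhere, so $p(u_i(x)) = p_i(x)$ by definition. The basic observation is the following. Let $i_1,\dots,i_m$ be pairwise distinct indices and take any bracketing of the product $u_{i_1}(a_{i_1})\cdots u_{i_m}(a_{i_m})$ in $(A;\cdot)^n$. Then the result is the tuple with $a_{i_l}$ in coordinate $i_l$ and $e$ in all other coordinates. This is because in each coordinate we are computing a product in which at most one factor differs from $e$. Since $x\cdot e = x = e\cdot x$, any bracketed product of $e$'s with at most one other element $x$ collapses to $x$, or to $e$ if there is no such element. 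This needs only a trivial induction on the bracketing, and it does not use associativity of $\cdot$.

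Now the three items follow.

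(1) For $i<j$ we have $u_i(a_i)\cdot u_j(a_j) = u_j(a_j)\cdot u_i(a_i)$, since both equal the tuple with $a_i$ at $i$ and $a_j$ at $j$. Applying the homomorphism $p$ gives $p_i(a_i)p_j(a_j) = p_j(a_j)p_i(a_i)$.

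(2) For $i<j<k$ we have $u_i(a_i)\cdot\big(u_j(a_j)\cdot u_k(a_k)\big) = \big(u_i(a_i)\cdot u_j(a_j)\big)\cdot u_k(a_k)$, since both are the same tuple. Applying $p$ and using the homomorphism property twice on each side yields the claim.

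(3) We have $(a_1,\dots,a_n) = (\cdots((u_1(a_1)\cdot u_2(a_2))\cdot u_3(a_3))\cdots)\cdot u_n(a_n)$. Repeatedly applying the homomorphism property of $p$ gives $p(a_1,\dots,a_n) = (\cdots(p_1(a_1)p_2(a_2))\cdots)p_n(a_n)$.

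The only point needing care is that $\cdot_p$ need not be associative, so the expression $p_1(a_1)\cdots p_n(a_n)$ in (3) must be given a bracketing. I would read it as the left-nested product. I would also remark that the same argument, applied to an arbitrary bracketing of the $u_i(a_i)$, shows that every bracketing of $p_1(a_1)\cdots p_n(a_n)$ gives $p(a_1,\dots,a_n)$. So the statement is unambiguous regardless of the convention chosen.
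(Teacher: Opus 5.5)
Your proposal is correct and follows essentially the paper's proof: both take the coordinatewise identities among the tuples $(e,\dots,e,x,e,\dots,e)$ in $(A;\cdot)^n$ and carry them through $p$, which is a homomorphism by Lemma~\ref{lemma2}. Your explicit treatment of bracketing in (3) is a worthwhile clarification, since the paper leaves it implicit even though $\cdot_p$ need not be associative.
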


\begin{proof}
    Throughout, the multiplication occurring in $p(A^n)$ is $\cdot_p$, so we will omit the multiplication symbol for readability. To prove (1), observe that 
\begin{align*}
    p_i(a_i)p_j(a_j) &= p(e, \dots ,e , \underset{i}{a_i}, e,\dots e)p(e, \dots ,e , \underset{j}{a_j}, e,\dots e) \\
    & = p(e, \dots ,e , \underset{i}{a_i}, e, \dots ,e , \underset{j}{a_j},e \dots ,e)\\
    &=p(e, \dots ,e , \underset{j}{a_j}, e,\dots e)p(e, \dots ,e , \underset{i}{a_i}, e,\dots e)\\
    &=p_j(a_j)p_i(a_i)
\end{align*}    
    by applying Lemma \ref{lemma2}.

To prove (2), we similarly observe that 
\begin{equation*}
    p_i(a_i)\big(p_j(a_j)p_k(a_k)\big) = p(e, \dots , e , \underset{i}{a_i}, e , \dots , e , \underset{j}{a_j}, e , \dots , e , \underset{k}{a_k}, e ,\dots ,e) = \big(p_i(a_i)p_j(a_j)\big)p_k(a_k).
\end{equation*}

The proof of (3) is similar.\end{proof}

\begin{lemma}
    \label{claim 2}
     Let $\A$ and $\B$ be finite algebras, let $e\in A$, let $\cdot$ be a binary polynomial operation on $\A$ such that $x\cdot e = x = e\cdot x$ for all $x\in A$, let $p\in \Pol(\A, \B)^{(n)}$ where $n\in \mathbb{N}\backslash \lbrace 0 \rbrace$. Let $1\leq i<j<k \leq n$. We have:
     
     \begin{enumerate}
         \item[(1)] if $p_i(A) = p_j(A) = p_k(A)$, then $(p_i(A); \cdot_p )$ is a commutative monoid.
         \item[(2)] $(p(A^n) ; \cdot_p)$ is a commutative monoid if and only if $(p_i(A); \cdot_p)$ is a commutative monoid for each $i\in [n]$.
     \end{enumerate}
\end{lemma}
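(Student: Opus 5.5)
The proof rests on Lemma \ref{p_i basic properties} and on the observation that $p_i(e) = p(e,\dots,e)$, which we denote $e_p$. Recall that $\cdot_p$ is defined on $p(A^n)$ as in Lemma \ref{lemma2}. Since $x = x\cdot e = e\cdot x$ in $\A$, applying $p$ coordinatewise gives $u\cdot_p e_p = u = e_p\cdot_p u$ for every $u\in p(A^n)$, by Lemma \ref{lemma2}. So $e_p$ is a two-sided identity for $\cdot_p$ on all of $p(A^n)$, and in particular $e_p\in p_i(A)$ for each $i$.

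For (1), suppose $p_i(A)=p_j(A)=p_k(A)=:U$.
\begin{itemize}
\item \emph{Closure.} Take $u,v\in U$ and write $u=p_i(a)$, $v=p_j(b)$. Then $u\cdot_p v = p(e,\dots,a,\dots,b,\dots,e)$, with $a$ in slot $i$ and $b$ in slot $j$. This is the value of a polymorphism on an element whose coordinates are products $x\cdot y$; we need it to lie in $U$. Directly this seems hard, so I would not rely on it. Instead I would check whether closure follows from the argument below or is genuinely needed. If it is needed, the natural candidate is that the element equals $p_i(a\cdot b)$, which is not obvious, so I would look for a different route.
\item \emph{Commutativity.} Given $u,v\in U$, write $u=p_i(a)$ and $v=p_j(b)$. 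Lemma \ref{p_i basic properties}(1) gives $uv=vu$.
\item \emph{Associativity.} Given $u,v,w\in U$, represent them as $p_i(a)$, $p_j(b)$, $p_k(c)$. Lemma \ref{p_i basic properties}(2) gives $u(vw)=(uv)w$. Any bracketing is handled because the three indices are distinct.
\end{itemize}
So on $U$ the operation is commutative, associative and has identity $e_p$. What remains is closure of $U$ under $\cdot_p$. I expect this to be the main obstacle. I would first try to show that the statement only needs $(p_i(A);\cdot_p)$ to be a commutative monoid inside $p(A^n)$, meaning the submagma generated. If genuine closure is required, I would try to derive it via the identity $p_i(a)\cdot_p p_i(b)=p_i(a\cdot b)$. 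This identity does hold: apply Lemma \ref{lemma2} to the tuples $(e,\dots,a,\dots,e)$ and $(e,\dots,b,\dots,e)$, using $e\cdot e=e$. That settles closure.

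For (2), suppose first that $(p(A^n);\cdot_p)$ is a commutative monoid. Each $p_i(A)$ is closed under $\cdot_p$ by the identity just proved and contains $e_p$, so it is a submonoid, and it inherits commutativity and associativity.

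Conversely, suppose each $(p_i(A);\cdot_p)$ is a commutative monoid. By Lemma \ref{p_i basic properties}(3), every element of $p(A^n)$ can be written as $p_1(a_1)\cdots p_n(a_n)$ under a fixed bracketing. To compare products, I would first prove that elements taken from distinct $p_i(A)$'s commute and associate freely. Lemma \ref{p_i basic properties}(1) and (2) give this for increasing indices, and commutativity lets us reorder. Together with associativity inside each $p_i(A)$, this shows that any product of factors $p_{i}(x)$ can be rearranged. To multiply two elements $p(\bar a)$ and $p(\bar b)$, I would use Lemma \ref{lemma2} directly: $p(\bar a)\cdot_p p(\bar b)=p(\bar a\cdot\bar b)$, which is closed. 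Commutativity then reduces to
\[
p_1(a_1\cdot b_1)\cdots p_n(a_n\cdot b_n)\quad\text{versus}\quad p_1(b_1\cdot a_1)\cdots p_n(b_n\cdot a_n),
\]
and these agree factorwise because $p_i(a_i\cdot b_i)=p_i(a_i)\cdot_p p_i(b_i)$ and $p_i(A)$ is commutative. Associativity follows the same way, working factorwise.
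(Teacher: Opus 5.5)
Your proof is correct and follows the paper's route. Part (1) comes from Lemma \ref{p_i basic properties}(1)--(2), and you also supply the identity $p(e,\dots,e)$ and closure via $p_i(a)\cdot_p p_i(b)=p_i(a\cdot b)$, which the paper leaves implicit in ``immediate''. The converse of (2) uses the same factorwise chain $p(\bar a)\cdot_p p(\bar b)=p(\bar a\cdot\bar b)=\prod_i p_i(a_i\cdot b_i)$ as the paper, so your preliminary remark about freely rearranging arbitrary products is unnecessary.
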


\begin{proof}
    (1) is immediate from Lemma \ref{p_i basic properties}.

    To prove (2), first note that if $(p(A^n) ; \cdot_p)$ is a commutative monoid, then each subalgebra $(p_i(A); \cdot_p)$ is a commutative monoid.

    For the converse, we first show commutativity of $(p(A^n) ; \cdot_p)$. Let $a_1, \dots , a_n ,b_1,\dots ,b_n\in A$ and consider $p(a_1, \dots , a_n), p(b_1, \dots , b_n)\in p(A^n)$. We get 
\begin{align*}
    &p(a_1, \dots , a_n)\cdot_p p(b_1, \dots , b_n)\\& = p(a_1\cdot b_1, \dots , a_n\cdot b_n) \text{ by Lemma \ref{lemma2}}\\
    &=p_1(a_1\cdot b_1)\cdot_p \dots \cdot_p  p_n(a_n\cdot b_n) \text{ by Lemma \ref{p_i basic properties} (3)}\\
    &=\left(p_1(a_1)\cdot_p p_1(b_1) \right)\cdot_p \dots \cdot_p  (p_n(a_n)\cdot_p p_n(b_n)) \text{ by Lemma \ref{lemma2}}\\
    &= \left(p_1(b_1)\cdot_p p_1(a_1) \right)\cdot_p \dots \cdot_p  (p_n(b_n)\cdot_p p_n(a_n)) \text{ since }(p_i(A); \cdot_p) \text{ is commutative for each }i\in [n]\\
    &=p_1(b_1\cdot a_1)\cdot_p \dots \cdot_p  p_n(b_n\cdot a_n) \text{ by Lemma \ref{lemma2}}\\
    &= p(b_1\cdot a_1, \dots , b_n\cdot a_n)\text{ by Lemma \ref{p_i basic properties}} (3)\\
    &=p(b_1, \dots , b_n)\cdot_p p(a_1, \dots , a_n).
\end{align*}

For associativity, the proof is similar.
\end{proof}

In the proof of Theorem \ref{MainDichotomy}, we will construct a selection function which outputs unions of equivalence classes given in the following definition.

\begin{definition}
\cite{larrauri2025equationsfinitemonoidsinfinite}
    \label{ConstantSet}
    Let $\A$ and $\B$ be algebras, let $e\in A$, let $\cdot $ be a binary polynomial operation on $A$ such that for all $x\in A$, we have $x\cdot e = x = e\cdot x$, and let $p\in \Pol(\A, \B)^{(n)}$. We define a binary relation $\sim_p$ on $[n]$ by letting $i\sim_p j$ if $p_i = p_j$.
    
    Note that $\sim_p$ is an equivalence relation. We will refer to the equivalence classes of $\sim_p$  as $p$-\emph{equivalence classes}.
\end{definition}

The next important lemma is a generalization of \cite[Proposition 5.8]{larrauri2025equationsfinitemonoidsinfinite} and will help us construct a selection function for $\Pol(\A, \B)$ in our proof of Theorem \ref{DichotomyBinaryTermOp}

\begin{lemma}
\thlabel{biglemma}
    \cite[cf. Proposition 5.8]{larrauri2025equationsfinitemonoidsinfinite}
    Let $\A$ and $\B$ be finite algebras of the same signature, let $\cdot $ be a binary polynomial operation on $\A$, let $e\in A$ be such that $e\cdot x = x = x \cdot e$ for all $x\in A$. Let $\mathcal{M} \subseteq \Pol(\A, \B)$ be the subset of polymorphisms $p$ such that:

    \begin{enumerate}
        \item[(1)] $\mathbf{M}=(p(A^{\ar(p)}); \cdot_p )$ is a commutative monoid,
        \item[(2)] $\mathbf{N}=(p_{\Delta}(A); \cdot_{p_{\Delta}})$ is a regular submonoid of $\mathbf{M}$.
    \end{enumerate}

Let $r\in \mathbb{N}\backslash \lbrace 0 \rbrace$, and let $R\subseteq A^r$. For each $p\in \mathcal{M}$, define $I(p)$ to be the union of $p$-equivalence classes of size smaller than $$L= \max\lbrace K( \mathbf{C}^r) : \mathbf{C}\text{ is a commutative monoid with } |C|\leq |B| \rbrace$$ where $K(\mathbf{C}^r)$ is obtained from Lemma \ref{constant lemma 2}.

If $I(p^{(\pi)} ) \cap \pi(I(p)) = \emptyset$ for some $p, p^{\pi} \in \mathcal{M}$, then $[p_\Delta(R)]\subseteq  p(R^{\ar (p)})$.
\end{lemma}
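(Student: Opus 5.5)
The plan is to compare $p(R^{n})$ with the coset $[p_\Delta(R)]$ by writing both as products of contributions from individual coordinates. Write $n=\ar(p)$ and $q=p^{(\pi)}$ of arity $m$. Throughout I work in the commutative monoid $\mathbf{M}^r$ acting coordinatewise on $r$-tuples. Condition (2) makes every element of $p_\Delta(R)$ a group element, so $p_\Delta(R)_\dagger=p_\Delta(R)$.

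First, by Lemma~\ref{p_i basic properties}(3), applied coordinatewise, $p(r_1,\dots,r_n)=\prod_{j\in[n]}p_j(r_j)$ for $r_j\in R$. Hence $p(R^n)=\prod_{j\in[n]}p_j(R)$. Grouping by $p$-equivalence classes $C$, with common unary map $f_C$, this becomes $\prod_C f_C(R)^{(|C|)}$. Since $p_\Delta=\prod_j p_j$ as maps $A\to B$ and each $p_j$ is a homomorphism $(A;\cdot)\to(p(A^n);\cdot_p)$ by Lemma~\ref{lemma2}, Lemma~\ref{product lemma} gives
\[
[p_\Delta(R)]\subseteq \prod_{C}[f_C(R)_\dagger]^{(|C|)}.
\]
For a large class ($|C|\ge L$), Lemma~\ref{constant lemma 2} applied in the monoid $\mathbf{M}^r$ (whose base monoid has at most $|B|$ elements, which is why $L$ is defined as it is) gives $[f_C(R)_\dagger]^{(|C|)}\subseteq f_C(R)^{(|C|)}$. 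So if $p$ had no small classes, i.e.\ $I(p)=\emptyset$, we would be done. The same holds for $q$, since $q(R^m)\subseteq p(R^n)$ (substitute $r_{\pi(j)}$ into $p$) and $q_\Delta=p_\Delta$. The hypothesis $I(q)\cap\pi(I(p))=\emptyset$ is what lets us kill the small classes.

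The hypothesis says that every index in a small $p$-class is sent by $\pi$ into a large $q$-class, and $q_i=\prod_{j\in\pi^{-1}(i)}p_j$ for each $i$, with $q_i\equiv e$ if the fibre is empty. I plan to build a mixed decomposition. Split $[m]=S\cup T$, where $T$ is the union of the large $q$-classes and $S$ the union of the small ones. Over $S$, all fibres $\pi^{-1}(i)$ consist only of indices in large $p$-classes. Apply Lemma~\ref{product lemma} to $p_\Delta=\prod_{i\in T}q_i\cdot\prod_{j\in\pi^{-1}(S)}p_j$, which gives
\[
[p_\Delta(R)]\subseteq\prod_{i\in T}[q_i(R)_\dagger]\cdot\prod_{j\in\pi^{-1}(S)}[p_j(R)_\dagger].
\]
Next I absorb each group of factors. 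Factors indexed by a large $q$-class $D\subseteq T$ give $[g_D(R)_\dagger]^{(|D|)}\subseteq g_D(R)^{(|D|)}$. For factors over $\pi^{-1}(S)$, group them by the large $p$-class $C$ containing them. Each $p$-class $C$ then contributes $[f_C(R)_\dagger]^{(k_C)}$ with $k_C=|C\cap\pi^{-1}(S)|$.

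Finally I realize the result inside $p(R^n)$. Assign $r_j:=s_{\pi(j)}$ for $j\in\pi^{-1}(T)$, with $s_i$ chosen freely; this realizes $\prod_{i\in T}q_i(s_i)$. Choose $r_j$ freely for $j\in\pi^{-1}(S)$.

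The main obstacle is the last absorption step. The number $k_C$ may be smaller than $L$, because a large $p$-class can be split between $\pi^{-1}(S)$ and $\pi^{-1}(T)$. In that case Lemma~\ref{constant lemma 2} does not apply directly. My first attempt will be to use the coset identity $U\cdot U^{-1}\cdot U=U$ to shift copies of $[f_C(R)_\dagger]$ from the $T$-side. The $q$-factors over a large class $D$ are products containing $f_C$-contributions from many $j\in C$, so I would try to re-split each $C$ so that every realized piece has size at least $L$. If that fails, I will alter the choice of $S$ and $T$ so that whole $p$-classes are kept together, at the cost of a weaker but still sufficient count.
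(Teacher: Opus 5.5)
Up to the absorption of the large $q$-classes, your route is the paper's own. Your $S$ and $T$ are its $I(q)$ and $[m]\setminus I(q)$, and your mixed bound is its intermediate inclusion $[p_\Delta(R)]\subseteq\prod_{j\in\pi^{-1}(S)}[p_j(R)_\dagger]\cdot\prod_{j\in\pi^{-1}(T)}p_j(R)$. But the proposal stops exactly at the step that carries the content of the lemma, and neither remedy you suggest closes the gap.

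\begin{itemize}
\item \emph{The coset identity.} After the large $q$-classes are absorbed, the factors indexed by $\pi^{-1}(T)$ lie in $p_j(R)$, not in the coset $[p_j(R)_\dagger]$. So the identity $UU^{-1}U=U$ has nothing to shift across.
\item \emph{Rechoosing $S,T$.} You cannot keep $p$-classes whole. Small $q$-classes must stay on the $S$-side, because they cannot be absorbed at the $q$-level. $q$-classes receiving an index of $I(p)$ must stay on the $T$-side, because small $p$-classes cannot be absorbed at the $p$-level. Nothing prevents $\pi$ from sending one large $p$-class into both kinds.
\end{itemize}

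The missing idea is to use the regularity of $\mathbf N$ a second time, on the element rather than on the set.

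\begin{enumerate}
\item Any $c\in[p_\Delta(R)]$ is a group element of $\mathbf M^r$ (Lemma \ref{lemmaCosets}). By your bound, $c=\prod_{j\in[n]}a_j$ with $a_j\in[p_j(R)_\dagger]$ for $j\in\pi^{-1}(S)$ and $a_j\in p_j(R)$ for $j\in\pi^{-1}(T)$.
\item Lemma \ref{lemma idempotent product} gives $c=c\,a_j^{\omega}$ for every $j$. So you may replace each $a_j$ with $j\notin I(p)$ by $a_j^{\omega+1}$ without changing $c$.
\item In both cases $a_j^{\omega+1}\in[p_j(R)_\dagger]$: it lies in $p_j(R)_\dagger$ if $a_j\in p_j(R)$, and it equals $a_j$ if $a_j$ is already a group element.
\item Every large $p$-class $C$ now contributes an element of $[f_C(R)_\dagger]^{(|C|)}$ with the full count $|C|\ge L$. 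Lemma \ref{constant lemma 2} places this in $f_C(R)^{(|C|)}$.
\item Every $j\in I(p)$ lies in $\pi^{-1}(T)$, since $\pi(I(p))\cap I(q)=\emptyset$, so it already has $a_j\in p_j(R)$.
\end{enumerate}

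Hence $c\in\prod_{j\in[n]}p_j(R)=p(R^n)$. Only the unconstrained product $\prod_j p_j(R)$ is needed. Your consistent choice $r_j=s_{\pi(j)}$ on $\pi^{-1}(T)$ is therefore unnecessary, and it is overwritten by the re-daggering anyway.
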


\begin{proof}
    We follow the proof techniques from \cite{larrauri2025equationsfinitemonoidsinfinite}. Note that our multiplications $\cdot$, $\cdot_p$ are polynomial operations on our algebras rather than basic operations.
    
    Let $p,q\in \mathcal{M}$ with $\ar(p) = n$, $\ar(q) = m$, and $\pi\colon [n]\to [m]$ be such that $q=p^{(\pi)}$ and $I(p^{(\pi)} ) \cap \pi(I(p)) = \emptyset$.

    Note that for each $i\in [n]$, the map $p_i\colon (A; \cdot) \to (p(A^n); \cdot_p)$ given by $p_i(x) = p(e, \dots , \underset{i}{x}, e , \dots , e)$ is a homomorphism and that the map $p_\Delta\colon \A\to \B$ satisfies $p_{\Delta}(x)= p(x, \dots , x) = p_1(x)\cdot_p \dots \cdot_p p_n(x)$ by Lemma \ref{lemma2}. Corresponding statements hold for $q_j$ for each $j\in[m]$ and $q_\Delta$. Since $q$ is minor of $p$, we have $p_\Delta= q_\Delta$, and $\cdot_q$ is a restriction of $\cdot_p$. By the regularity of $\mathbf{N}$, we know that $q_\Delta(R) = q_\Delta(R)_\dagger$. Therefore
\begin{equation}
        \label{eq1}
        [p_\Delta(R)] = [q_\Delta(R)] = [q_\Delta(R)_\dagger] \subseteq \prod_{i\in [m]}[q_i(R)_\dagger] 
    \end{equation}
    by Lemma \ref{product lemma}, where the product is $\cdot_q$. Moreover
\begin{equation}
        \label{eq2}
        \prod_{i\in [m]}[q_i(R)_\dagger] = \prod_{i\in I(q)}[q_i(R)_\dagger] \cdot_q \prod_{i\in [m]\backslash I(q)}[q_i(R)_\dagger].
    \end{equation}
Note by definition of $I$ that $[m]\backslash I(q)$ is a union of $q$-equivalence classes of size greater than or equal to $L$. Let $J \subseteq [m]\backslash I(q)$ be one of these equivalence classes. Then $q_i=q_j$ for all $i,j\in J$ and we may define $q_J = q_i$ where $i\in J$. Hence 
    \begin{equation}
    \label{eq3}
        \prod_{i\in J}[q_i(R)_\dagger] = [q_J(R)_\dagger]^{|J|} \subseteq q_J(R)^{|J|} = \prod_{i\in J}q_i(R)
    \end{equation}
    by Lemma \ref{constant lemma 2}. 
Combining \ref{eq2} and \ref{eq3}, we have
\begin{equation}
        \label{eq4}
        \prod_{i\in [m]}[q_i(R)_\dagger] = \prod_{i\in I(q)}[q_i(R)_\dagger] \cdot_q \prod_{i\in [m]\backslash I(q)}q_i(R).
    \end{equation}
Now recall that $q = p^{(\pi)}$ so that for $i\in [m]$, we have $q_i(x) = \prod_{j\in \pi^{-1}(i)}p_j(x)$ for all $x\in R$. By Lemma \ref{product lemma} we get $[q_i(R)_\dagger]\subseteq \prod_{j\in \pi^{-1}(i)}[p_j(R)_\dagger]$. Now combining this with \ref{eq1} and \ref{eq4}, we have \begin{equation}
        \label{eq5}
        [p_\Delta(R)]\subseteq \prod_{j\in \pi^{-1}(I(q))}[p_j(R)_\dagger]\cdot_p\prod_{j\in [n]\backslash \pi^{-1}(I(q))}p_j(R).
    \end{equation}

    We want to replace the right hand side of \ref{eq5} with $\prod_{j\in [n]}p_j(R)$. To do this, let $c\in [p_\Delta(R)]$ so that $c= \prod_{j\in [n]}a_j$ where
\[a_j\in
    \begin{cases}
       [p_j(R)_\dagger] &\text{ if }j\in \pi^{-1}(I(q)),\\
       p_j(R) &\text{ otherwise.}
    \end{cases}\]
Because $c$ is a group element, we can apply Lemma \ref{lemma idempotent product} to get $ca_j^{\omega} = c$ for all $j\in [n]$. In particular, this implies that
\begin{equation*}
        \label{eq7}
        c = \prod_{j\in I(p)}a_j \prod_{j\in [n]\backslash I(p)}a_ja_j^{\omega}.
    \end{equation*}
For $j\in [n]\backslash I(p)$, we have $a_ja_j^{\omega}\in [p_j(R)_\dagger]$. Also, note that $[n]\backslash I(p)$ is a union of $p$-equivalence classes of size greater than or equal to $L$ by definition of $I$. Let $J\subseteq [n]\backslash I(p)$ be such an equivalence class. Then by Lemma \ref{constant lemma 2}, we have \begin{equation*}
        \label{eq8}
        \prod_{i\in J}[p_i(R)_\dagger] = [p_J(R)_\dagger]^{|J|}\subseteq p_J(R)^{|J|} = \prod_{i\in J}p_i(R).
    \end{equation*}

    By assumption, we have $\pi(I(p))\cap I(q) = \emptyset$ so that $I(p)\cap \pi^{-1}(I(q)) = \emptyset$. Then for $j\in I(p)$, we know that $a_j \in p_j(R)$. We have shown that $$c \in \prod_{j\in I(p)}p_j(R) \prod_{j\in [n]\backslash I(p)}p_j(R) = \prod_{j\in [n]}p_j(R) = p(R^n)  .$$ Hence $[p_\Delta(R)]\subseteq p(R^n)$ as desired.
\end{proof}

In order to state our main results in more familiar language, we will prove the following lemma.

\begin{lemma}
    \label{lemma cosets and polymorphisms}
    Let $\A=(A; f_1^{\A}, \dots , f_r^{\A})$ and $\B= (B; f_1^{\B}, \dots , f_r^{\B})$ be finite algebras, let $\cdot$ be a binary polynomial operation on $\A$. Define $S^{\A} = (f_1^{\A})^\circ\times \dots \times(f_r^{\A})^\circ $, $S^{\B} = (f_1^{\B})^\circ\times \dots \times(f_r^{\B})^\circ $. Suppose there is a homomorphism $\psi\colon {\A} \to {\B}$ such that $(\psi(A);\cdot_\psi)$ is a commutative regular monoid.

    Then $[\psi(S^{\A})]\subseteq S^{\B}$ if and only if $m(x,y,z)=x\cdot_\psi y^{-1}\cdot_\psi z$ is a polymorphism of the algebra $\psi(\A)$.
\end{lemma}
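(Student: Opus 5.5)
The plan is to unwind both sides of the equivalence until each becomes the statement ``$\psi(S^{\A})$ is closed under $m$ applied coordinatewise.'' Let $N=\sum_{i\in[r]}(\ar(f_i)+1)$. I read $[\psi(S^{\A})]$ as the coset closure of $\psi(S^{\A})\subseteq \psi(A)^N$ inside the finite commutative regular monoid $(\psi(A);\cdot_\psi)^N$, with $\psi$ acting coordinatewise. The product is again commutative and regular, and inverses in it are computed coordinatewise. So by Lemma \ref{lemmaCosets}, $[\psi(S^{\A})]$ is the closure of $\psi(S^{\A})$ under the coordinatewise action of $m(x,y,z)=x\cdot_\psi y^{-1}\cdot_\psi z$. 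In particular $[\psi(S^{\A})]\subseteq \psi(A)^N$, because $m$ maps $\psi(A)^3$ into $\psi(A)$.

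\textbf{Step 1: identify $\psi(S^{\A})$.} Write $\C=\psi(\A)$. Since $\psi$ is a homomorphism, it maps the graph $(f_i^{\A})^\circ$ onto the graph $(f_i^{\C})^\circ$. Inclusion holds by the homomorphism property. For surjectivity, given $(\psi(a_1),\dots,\psi(a_k),f_i^{\C}(\psi(a_1),\dots))$, this tuple is the image of $(a_1,\dots,a_k,f_i^{\A}(a_1,\dots,a_k))$. Since $S^{\A}$ is a product of these graphs, $\psi(S^{\A})=(f_1^{\C})^\circ\times\dots\times(f_r^{\C})^\circ$.

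\textbf{Step 2: identify $S^{\B}\cap \psi(A)^N$.} Because $\C$ is a subalgebra of $\B$, each $f_i^{\C}$ is the restriction of $f_i^{\B}$. Hence $(f_i^{\B})^\circ\cap \psi(A)^{\ar(f_i)+1}=(f_i^{\C})^\circ$, and therefore $S^{\B}\cap\psi(A)^N=\psi(S^{\A})$.

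\textbf{Step 3: combine.} Using the first paragraph and Step 2, the condition $[\psi(S^{\A})]\subseteq S^{\B}$ is equivalent to $[\psi(S^{\A})]\subseteq\psi(S^{\A})$. That is, it says $\psi(S^{\A})$ is closed under coordinatewise $m$. Since $\psi(S^{\A})$ is a product of the nonempty relations $(f_i^{\C})^\circ$ and $m$ acts coordinatewise, this holds if and only if each graph $(f_i^{\C})^\circ$ is closed under $m$. Closure of every graph of a basic operation is exactly the statement that $m$ commutes with every $f_i^{\C}$, i.e.\ that $m$ is a polymorphism of $\C=\psi(\A)$.

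The only delicate points are bookkeeping: checking that the coset closure in the product monoid is the coordinatewise $m$-closure (Lemma \ref{lemmaCosets} together with coordinatewise inverses), and using surjectivity of $\psi$ onto $\C$ in Step 1. I expect Step 1 to be the main place where care is needed, since the reverse inclusion $(f_i^{\C})^\circ\subseteq\psi((f_i^{\A})^\circ)$ uses that every element of $\C$ has a preimage. Nonemptiness of the factors in Step 3 is automatic, because $A$ is nonempty and every graph of an operation on a nonempty set is nonempty.
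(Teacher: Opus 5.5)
Your proof is correct and is essentially the paper's argument. Both use Lemma \ref{lemmaCosets} to read $[\psi(S^{\A})]$ as the coordinatewise $m$-closure of $\psi(S^{\A})$, and both use that the graph of $f_i^{\psi(\A)}$ is the graph of $f_i^{\B}$ restricted to tuples from $\psi(A)$. Your chain of equivalences is a bit cleaner than the paper's factor-by-factor argument, since it makes explicit the identity $\psi((f_i^{\A})^\circ)=(f_i^{\psi(\A)})^\circ$ and the passage between the product relation and its nonempty factors, both of which the paper uses without comment.
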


\begin{proof}
    We first assume $[\psi(S^{\A})]\subseteq S^{\B}$ and show that $m$ preserves $f_i^{\psi(\A)}$ for each $i\in[n]$. So pick $i\in[n]$ and note that our assumption implies that $[(f_i^{\mathbf{\psi(A)}})^\circ]\subseteq (f_i^{\B})^\circ$. Recall by Lemma \ref{lemmaCosets} that $[\psi(S^{\A})]$ is the closure of $\psi(S^{\A})$ under the operation $m$. In particular, we know $[(f_i^{\mathbf{\psi(A)}})^\circ]$ is the closure of $(f_i^{\mathbf{\psi(A)}})^\circ$ under the operation $m$. Therefore $$m\big((f_i^{\mathbf{\psi(A)}})^\circ, (f_i^{\mathbf{\psi(A)}})^\circ , (f_i^{\mathbf{\psi(A)}})^\circ  \big) \subseteq (f_i^{\B})^\circ .$$ Let $k=\ar(f_i)$, and let 
    \begin{equation}
    \label{eq9}
        (x_1, \dots ,x_k,x_{k+1}), (y_1, \dots ,y_k,y_{k+1}), (z_1, \dots ,z_k,z_{k+1})\in (f_i^{\psi(\A)})^\circ .
    \end{equation}
    Then we have 
\begin{equation}
    \label{eq10}
        \left( m(x_1,y_1,z_1 ), \dots , m(x_k,y_k,z_k), m(x_{k+1}, y_{k+1}, z_{k+1}) \right) \in (f_i^{\B})^\circ .
    \end{equation}
But each of the $m(x_j,y_j,z_j)$ for $j\in [k+1]$ are members of $\psi(A)$ as $m$ is an operation on $\psi(A)$. So \ref{eq10} says that
\begin{equation}
        \label{eq11}
        f_i^{\psi(\A)}(  m(x_1,y_1,z_1 ), \dots , m(x_k,y_k,z_k) ) = m(x_{k+1}, y_{k+1}, z_{k+1}).
    \end{equation}
But by \ref{eq9} $f_i^{\psi(\A)}(x_1, \dots , x_k) = x_{k+1}$, $f_i^{\psi(\A)}(y_1, \dots , y_k) = y_{k+1}$, and $f_i^{\psi(\A)}(z_1, \dots , z_k) = z_{k+1}$, so we have shown that $m$ commutes with $f_i^{\psi({\A)}}$ as desired.
    
    Conversely, if $m$ is a polymorphism of the algebra $\psi(\A)$, then we have $$m\big((f_i^{\mathbf{\psi(A)}})^\circ, (f_i^{\mathbf{\psi(A)}})^\circ , (f_i^{\mathbf{\psi(A)}})^\circ  \big) \subseteq (f_i^{\mathbf{\psi(\A)}})^\circ $$ for each $i\in [n]$. Therefore $(f_i^{\psi(\A})^\circ$ is closed under $m$ for each $i\in [n]$. By Lemma \ref{lemmaCosets}, $$[(f_i^{\psi(\A})^\circ] = (f_i^{\psi(\A})^\circ \subseteq (f_i^{\B})^\circ$$ for each $i\in [n]$, which implies $[\psi(S^{\A})]\subseteq S^{\B}$.
\end{proof}

We are now ready to state our first main result. This is a dichotomy theorem for a class of structures which contains expansions of monoids and is a generalization of \cite[Proposition 5.9]{larrauri2025equationsfinitemonoidsinfinite}. 

\begin{main}
    \label{DichotomyBinaryTermOp}

    Let $\A=(A; f_1^{\A}, \dots , f_r^{\A})$ and $\B= (B; f_1^{\B}, \dots , f_r^{\B})$ be finite algebras, let $\cdot$ be a binary polynomial operation on $\A$ such that $x \cdot e = x= e \cdot x$ for all $x\in A$. Let $R^{\A}\subseteq A^r$ and $R^\B\subseteq B^r$ be relations, and define $S^{\A} = (f_1^{\A})^\circ\times \dots \times(f_r^{\A})^\circ\times R^{\A}$, $S^{\B} = (f_1^{\B})^\circ\times \dots \times(f_r^{\B})^\circ\times R^{\B}$, $\overline{\A} = (A; S^{\A}), \overline{\B} = (B; S^{\B})$. Suppose there is a homomorphism $\overline{\A}\to \overline{\B}$.
    
    If there is a homomorphism $\psi\colon \overline{\A} \to \overline{\B}$ such that
    \begin{enumerate}

    \item[(1)] $( \psi(A);\; \cdot_\psi )$ is a commutative regular monoid,

    \item[(2)] $m(x,y,z) = x\cdot_\psi y^{-1}\cdot_\psi z$ is a polymorphism of $\psi(\A)$, and

    \item[(3)] $[\psi(R^{\A})]\subseteq R^{\B}$,

    \end{enumerate}

    then $\PCSP(\overline{\A}, \overline{\B} )$ is solvable by BLP+AIP, hence is in $\mathbf{P}$. Otherwise $\PCSP(\overline{\A}, \overline{\B} )$ is $\mathbf{NP}$-hard.
    
\end{main}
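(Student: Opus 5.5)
The plan has two halves: a tractability half that combines earlier results, and a hardness half built from a selection function.

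\emph{Tractability.} Assume $\psi$ satisfies (1)--(3). By Lemma \ref{lemma cosets and polymorphisms}, condition (2) gives that every graph $(f_i^{\psi(\A)})^\circ$ is closed under $m$. Hence, by Lemma \ref{lemmaCosets},
\[
[\psi((f_1^\A)^\circ\times\dots\times(f_r^\A)^\circ)] \subseteq (f_1^\B)^\circ\times\dots\times(f_r^\B)^\circ .
\]
The coset closure of a product is the product of the closures, because $m$ acts coordinatewise. Together with (3), this gives $[\psi(S^\A)]\subseteq S^\B$. Theorem \ref{TractabilityBinaryPolOp} then shows that BLP+AIP solves $\PCSP(\overline\A,\overline\B)$.

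\emph{Hardness.} Suppose no such $\psi$ exists. I will write $\Pol(\overline\A,\overline\B)=\mathcal M_1\cup\mathcal M_2\cup\mathcal M_3$ and give each piece a selection function, so that Theorem \ref{SelectionFunctionCriteria} applies.
\begin{enumerate}
\item[(a)] $\mathcal M_1$ consists of the polymorphisms $p$ for which $(p(A^n);\cdot_p)$ is not a commutative monoid. By Lemma \ref{claim 2}(2), some $p_i(A)$ is then not a commutative monoid. By Lemma \ref{claim 2}(1), each value $p_i(A)\subseteq B$ (there are boundedly many such subsets) can occur for at most two indices of this ``bad'' kind. Let $I(p)$ be the set of indices $i$ with $p_i(A)$ not a commutative monoid; its size is bounded by $2\cdot 2^{|B|}$. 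I still need to check the minor condition $\pi(I(p))\cap I(p^{(\pi)})\ne\emptyset$. Here $q_j=\prod_{i\in\pi^{-1}(j)}p_i$, and $q=p^{(\pi)}$ lies in $\mathcal M_1$. I expect to argue that if every $j\in\pi(I(p))$ had $q_j(A)$ a commutative monoid, then $q$ would be a commutative monoid by Lemma \ref{claim 2}(2). This is plausible but delicate, since commutativity of $q_j(A)$ does not obviously pass down to the factors $p_i(A)$.
\item[(b)] $\mathcal M_2$ consists of the $p$ for which $\mathbf M=(p(A^n);\cdot_p)$ is a commutative monoid but $\mathbf N=(p_\Delta(A);\cdot_{p_\Delta})$ is not regular. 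Pick $a\in p_\Delta(A)$ that is not a group element. The tuple $(p_1(a'),\dots,p_n(a'))$, where $p_\Delta(a')=a$, is a commutative tuple with product $a$. This gives a map into the monoidal minion $\mathcal M_{\mathbf M,a}$. It is compatible with minors because $q_j=\prod_{i\in\pi^{-1}(j)}p_i$ and $\cdot_q$ restricts $\cdot_p$. The difficulty is that $\mathbf M$ and $a$ depend on $p$. I will split $\mathcal M_2$ further by the isomorphism type of $(\mathbf N,a')$; only finitely many types occur because $B$ is finite. Then I combine Theorem \ref{SelectionCorollary} with the argument of Lemma \ref{HomomorphismToUnionOfMonoidalMinions}, which only uses compatibility with minors on the subset.
\item[(c)] $\mathcal M_3=\mathcal M$ is the set from \thref{biglemma}, and I use the $I$ defined there with $R=S^\A$ (so $r$ is the arity of $S$). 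Suppose the selection property fails for some $p$ and $p^{(\pi)}$. Then \thref{biglemma} gives $[p_\Delta(S^\A)]\subseteq p((S^\A)^n)\subseteq S^\B$. Set $\psi=p_\Delta$, which is a homomorphism $\overline\A\to\overline\B$. Then $(\psi(A);\cdot_\psi)$ is a commutative regular monoid with identity $\psi(e)$, so (1) holds. Conditions (2) and (3) follow from $[\psi(S^\A)]\subseteq S^\B$: condition (2) by Lemma \ref{lemma cosets and polymorphisms} after projecting to the first factors, and condition (3) by projecting to the $R$ coordinates. This contradicts the assumption that no such $\psi$ exists. The bound for $I$ is $|B|\cdot L$, since there are at most $|B|^{|A|}$ distinct functions $p_i$; I take the bound to be $|B|^{|A|}L$.
\end{enumerate}

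The main obstacle is case (a), which requires showing that bad indices cannot all be absorbed under minors, together with keeping the monoidal-minion transfer in (b) uniform across different $p$. If (a) fails as stated, I will instead put $I(p)$ equal to all indices in $p$-equivalence classes of size at most $2$ whose image is not a commutative monoid. I will then use Lemma \ref{claim 2}(1) to show that merging three or more such indices forces a commutative monoid, contradicting $q\in\mathcal M_1$.
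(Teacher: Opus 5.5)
Your plan matches the paper's proof in its overall structure. The tractability half is the same: reduce to $[\psi(S^{\A})]\subseteq S^{\B}$ and apply Proposition~\ref{TractabilityBinaryPolOp}. The hardness half uses the same three-way split of $\Pol(\overline{\A},\overline{\B})$, the same $I$ on $\mathcal{M}_1$, monoidal minions on $\mathcal{M}_2$, and Lemma~\ref{biglemma} with $R=S^{\A}$ on $\mathcal{M}_3$.

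\textbf{The gap is in (a).} You never prove the minor condition for $I(p)=\{i: (p_i(A);\cdot_p)\text{ is not a commutative monoid}\}$, and the obstacle you name is not the relevant one. Nothing has to pass \emph{down} from $q_j(A)$ to the factors. What you need is the \emph{upward} implication: if $(p_i(A);\cdot_p)$ is a commutative monoid for every $i\in\pi^{-1}(j)$, then so is $(q_j(A);\cdot_q)$.

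To prove it, let $r$ be the map obtained from $p$ by putting $e$ in every coordinate outside $\pi^{-1}(j)$. Because $e\cdot e=e$, Lemmas~\ref{lemma2} and \ref{p_i basic properties} still hold for $r$. It is a homomorphism from a power of $(A;\cdot)$ into $(B;\cdot_p)$, and its coordinate maps are the $p_i$ with $i\in\pi^{-1}(j)$. The computation proving Lemma~\ref{claim 2}(2) then shows that $\cdot_p$ is commutative and associative on the image of $r$. Now $q_j(x)=r(x,\dots,x)$, so $q_j(A)$ lies in that image. It is closed under $\cdot_q=\cdot_p$ because $q_j$ is a homomorphism, and it contains the identity $p(e,\dots,e)$. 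If $\pi^{-1}(j)=\emptyset$, then $q_j$ is simply constant.

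Contrapositively, every $j\in I(q)$ has some $i\in\pi^{-1}(j)\cap I(p)$, so $I(q)\subseteq\pi(I(p))$. Since $q\in\mathcal{M}_1$, Lemma~\ref{claim 2}(2) gives $I(q)\neq\emptyset$, and the selection property follows. This is exactly how the paper closes case (i). Your fallback with $p$-equivalence classes of size at most $2$ is unnecessary, and as sketched it has no proof of the minor condition either.

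\textbf{Case (b) needs a repair.} You are right that the tuple $(p_1(a'),\dots,p_n(a'))$ lives in $\mathbf{M}=(p(A^n);\cdot_p)$, not in $\mathbf{N}$. But splitting by the isomorphism type of $(\mathbf{N},a')$ does not give a single, fixed target monoid on each piece, and minor-compatibility of $\xi$ requires one. Since $x\cdot_p y=f^{\B}(x,y,p_\Delta(c_1),\dots,p_\Delta(c_k))$, the operation depends on $p$ only through $p_\Delta$, but the carrier $p(A^n)$ still varies.

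The fix is as follows. Split $\mathcal{M}_2$ by the pair $(p_\Delta,p(A^{\ar(p)}))$; there are finitely many such pairs. Choose $a'$ depending only on $p_\Delta$, and map each piece into $\mathcal{M}_{(p(A^{\ar(p)});\cdot_p),\,p_\Delta(a')}$. Here $p_\Delta(a')$ is still not a group element, because that property is determined by its powers, which are the same in $\mathbf{N}$ and $\mathbf{M}$. These pieces are not closed under minors, but Proposition~\ref{SelectionFunctionCriteria} only needs the selection condition within each piece of a finite cover.

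The paper splits only by $\psi=p_\Delta$ and maps into $\mathcal{M}_{\mathbf{N},\psi(a)}$, but the entries $p_i(a)$ need not lie in $\psi(A)$. For example, take truncated addition $x\oplus y=\min(x+y,3)$ on $\{0,1,2,3\}$ with $p=\oplus$ and $a=1$. Then $\xi(p)=(1,1)$, while $\psi(A)=\{0,2,3\}$ and $\psi(1)=2$ is not a group element. So your choice of $\mathbf{M}$ is the correct target once it is fixed per piece. Case (c) is fine and agrees with the paper.
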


\begin{proof}

First observe by Lemma \ref{lemma cosets and polymorphisms} that conditions (2) and (3) together are equivalent to $[\psi(S^\A)]\subseteq S^\B$. Therefore tractability follows from Proposition \ref{TractabilityBinaryPolOp}.

For hardness, we will assume there is no homomorphism $\psi\colon \overline{\A} \to \overline{\B}$ satisfying condition (1) of the theorem and satisfying the condition that $[\psi(S^\A)]\subseteq S^\B$. We will demonstrate \textbf{NP}-hardness of $\PCSP(\overline{\A}, \overline{\B} )$ by using techniques from \cite{larrauri2025equationsfinitemonoidsinfinite,larrauri2024solving}. For each $n\in \mathbb{N}\backslash \lbrace 0
\rbrace$, let $\overline{\A}_n$ be the diagonal substructure of $\overline{\A}^n$ with universe $A_n = \lbrace (a, \dots , a) : a\in A\rbrace$. We first remark that since (by assumption) no homomorphism $\psi$ exists satisfying condition (1) and $[\psi(S^\A)]\subseteq S^\B$, every polymorphism $p\in \Pol(\overline{\A}, \overline{\B})$ satisfies exactly one of the following:

\begin{enumerate}
    \item[(i)] $\mathbf{M}= (p(A^{\ar(p)}); \cdot_p)$ is not a commutative monoid.
    \item[(ii)] $\mathbf{M}$ is a commutative monoid, but $\mathbf{N}=(p_\Delta(A); \cdot_{p_\Delta})$ is not a regular submonoid of $\mathbf{M}$.
    \item[(iii)] $\mathbf{M}$ is a commutative monoid, $\mathbf{N}$ is a regular submonoid of $\mathbf{M}$, but $[p_\Delta(S^{\A})]\not\subseteq S^{\B}$.
\end{enumerate}

Let $\mathcal{M}_1$, $\mathcal{M}_2$, $\mathcal{M}_3$ be the sets of $p\in\Pol(\A, \B)$ satisfying conditions (i), (ii), and (iii) respectively. We will define a selection function $I_i$ on $\mathcal{M}_i$ for each $i\in [3]$.

(i) For $p\in \mathcal{M}_1$, we let $$I_1(p)= \lbrace i\in [\ar(p)]: (p_i(A) ; \cdot_p) \text{ is not a commutative monoid}\rbrace .$$ We will show $I_1$ is a selection function for $\mathcal{M}_1$ with bound $$K_1=1+2| \lbrace \text{magmas } \mathbf{C} : C\subseteq B \text{ and } \mathbf{C} \text{ is not a commutative monoid} \rbrace |.$$

To show $I_1$ is a selection function for $\mathcal{M}_1$, we will start by showing that $I_1(p)$ is nonempty for any $p\in \mathcal{M}_1$. To see this, note that $(p(A^n); \cdot_p)$ is not a commutative monoid by assumption. Then by Lemma \ref{claim 2} we know there exists $i\in [n]$ such that $( p_i(A) ; \cdot_p)$ is not a commutative monoid. Hence $I_1(p)$ is nonempty.

Now suppose $p\in \mathcal{M}_1^{(n)}$ and $\pi\colon [n]\to [m]$. Let $q=p^{(\pi)}$ be such that $q\in \mathcal{M}_1$, and let $i\in I_1(q)$. Because $q$ is a minor of $p$, we know from Lemma \ref{lemma2} that for all $a\in A$ we have $$q_i(a) = \prod_{j\in \pi^{-1}(i)}p_j(a) \in  \prod_{j\in \pi^{-1}(i)} p_j(A)$$ where the product is with multiplication $\cdot_{p}$. Hence $q_i(A)\subseteq \prod_{j\in \pi^{-1}(i)} p_j(A)$. As in the argument above, since $(q_i(A) ;\cdot_{q} )$ is not a commutative monoid, we have some $j\in\pi^{-1}(i)$ such that $(p_j(A);,\cdot_p)$ is not a commutative monoid either. Then $j\in I_1(p)$ and $I_1(q)= I_1(p^{(\pi)})\subseteq \pi(I_1(p))$. Together with the nonemptiness of $I_1(q)$, this shows that $I_1(p^{(\pi)})\cap \pi(I_1(p))\neq \emptyset$.

Next, we show that $I_1$ is bounded. Let $p\in \mathcal{M}_1^{(n)}$, let $1\leq i<j<k\leq n$, and let $i\in I_1(p)$. Then we cannot have $p_i(A) = p_j(A) = p_k(A)$ by Lemma \ref{claim 2}, because $(p_i(A); \cdot_p)$ is not a commutative monoid. The number of distinct images of $p_i(A)$ with $i\in I_1(p)$ is bounded by $$|\lbrace \text{magmas } \mathbf{C} : C\subseteq B \text{ and } \mathbf{C} \text{ is not a commutative monoid} \rbrace|.$$ Together with the previous statement, the Pigeonhole Principle implies $I_1(p)<K$.

(ii) Consider $ \mathcal{M}_2.$ For a homomorphism $\psi\colon \A\to \B$, let $\mathcal{M}_\psi = \lbrace p\in \mathcal{M}_2 : p_\Delta = \psi \rbrace$. Then note that if $q$ is a minor of $p\in \mathcal{M}_\psi$, we have $q_\Delta = p_\Delta = \psi$. So $\mathcal{M}_\psi$ is a minion and $\mathcal{M}_2$ is a finite disjoint union of such $\mathcal{M}_\psi$ because every polymorphism induces a homomorphism $\psi$ as described above. If we obtain a selection function $I_\psi$ for each such $\mathcal{M}_\psi$, then by taking their union, we get a selection function for $\mathcal{M}_2$ and we are done.

Consider a $\mathcal{M}_\psi$ with $\psi\in \mathcal{M}_2^{(1)}$ as above. Because $\mathbf{N} = (\psi(A); \cdot_\psi)$ is a commutative monoid which is not regular, we have $a\in A$ such that $\psi(a)$ is not in a subgroup of $\mathbf{N}$. Consider the monoidal minion $\mathcal{M}_{\mathbf{N}, \psi(a)}$. We claim that $$\xi\colon \mathcal{M}_\psi\to \mathcal{M}_{\mathbf{N}, \psi(a)}, \quad p \mapsto (p(a, e,\dots , e), \dots ,p(e, \dots ,e , a))$$ is a minion homomorphism. By Lemma \ref{lemma2} $$p(a, e,\dots , e) \cdot_{\psi} \dots \cdot_{\psi} p(e, \dots ,e , a) = p(a, \dots , a) = \psi(a).$$ Hence $\xi(p) \in \mathcal{M}_{\mathbf{N},\psi(a)}$ Moreover, $\xi$ preserves minors. To see this, let $p \in \mathcal{M}_2^{(n)}$, $\pi\colon [n]\to [m]$, and let $q = p^{(\pi)}$. Then 
\begin{align*}
    \xi(p^{(\pi)}) = \xi(q) &= (q_1(a), \dots ,q_m(a))\\ 
    &=\left(  \prod_{j\in \pi^{-1}(1)} p_j( a), \dots, \prod_{j\in \pi^{-1}(m)} p_j( a)\right) = \xi(p)^{(\pi)}.
\end{align*}


Now Proposition \ref{SelectionCorollary} and Proposition \ref{HomomorphismToUnionOfMonoidalMinions} provide us with a selection function $I_\psi$ for $\mathcal{M}_\psi$ with some finite bound $K_\psi$. We take the union of the finitely many $\mathcal{M}_\psi$ for $\psi \in \mathcal{M}_2$ to get a selection function for $\mathcal{M}_2$ with bound $K_2= \max\lbrace K_\psi : \psi \in \mathcal{M}_2^{(1)} \rbrace$. This concludes the proof that there is a selection function $I_2$ for $\mathcal{M}_2$.

(iii) Let $p\in \mathcal{M}_3$. Note for this case, we have that $p$ satisfies conditions $(1)$ and $(2)$ of Lemma \ref{biglemma}. So define $$I_3\colon{\mathcal{M}_3}\to \mathcal{P}(\mathbb{N}), \quad p\mapsto \lbrace i \in [\ar(p)]: i \text{ is in a } p\text{-equivalence class of size smaller than } L \rbrace$$ where $L$ is from Lemma \ref{biglemma}. Since $[p_\Delta(S^{\A})]\not\subseteq S^{\B}$ and $p((S^{\A})^{\ar(p)})\subseteq S^{\B}$, we cannot have $[p_\Delta(S^{\A})] \subseteq p((S^{\A})^{\ar(p)})$. Hence by applying Lemma \ref{biglemma}, we obtain $\pi(I_3(p))\cap I_3(p^{(\pi)}) \neq \emptyset$.

It remains to show that $I_3$ is bounded. For all $p\in \mathcal{M}_3$, every $p$-equivalence class is determined by functions $p_i\colon A\to B$. Since there are at most $|B|^{|A|}$ many such functions, there are at most $|B|^{|A|}$ many $p$-equivalence classes for each $p\in \mathcal{M}_3$. But by definition of $I_3(p)$, the size of each equivalence class contained in $I_3(p)$ is less than $L$. Then $|I_3(p)| \leq |B|^{|A|}L$ for all $p\in \mathcal{M}_3$. Since we have found a selection function for $\mathcal{M}_3$, this concludes the proof.\end{proof}

For algebras $\A$ and $\B$ without the additional relations $R^{\A}$ and $R^{\B}$, we may omit condition (3) from Theorem \ref{DichotomyBinaryTermOp} and get the following result: 

\begin{main}
    \label{MainDichotomy}
    Let $\A=(A; f_1^{\A}, \dots , f_r^{\A})$ and $\B= (B; f_1^{\B}, \dots , f_r^{\B})$ be finite algebras, let $e\in A$, let $\cdot$ be a binary polynomial operation on $\A$ such that $x \cdot e = x= e \cdot x$ for all $x\in A$. Suppose there is a homomorphism $\A\to \B$.
    
    If there is a homomorphism $\psi\colon \A \to \B$ such that
    \begin{enumerate}

    \item[(1)] $( \psi(A);\; \cdot_\psi )$ is a commutative regular monoid, and

    \item[(2)] $m(x,y,z) = x\cdot_\psi y^{-1} \cdot_\psi z$ is a polymorphism of $\psi(\A)$,

    \end{enumerate}

    then $\PEqn({\A}, {\B} )$ is solvable by BLP+AIP, hence is in $\mathbf{P}$. Otherwise $\PEqn({\A}, {\B} )$ is $\mathbf{NP}$-hard.
\end{main}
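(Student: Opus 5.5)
The plan is to derive the statement as a special case of Theorem \ref{DichotomyBinaryTermOp}, by adding a trivial relation. First, Proposition \ref{PEqnasPCSP} gives that $\PEqn(\A,\B)$ is logspace equivalent to $\PCSP(\mathbb{A}^\circ,\mathbb{B}^\circ)$. By the remark before Proposition \ref{TractabilityBinaryPolOp}, this is polynomially equivalent to the single-relation PCSP with relations $(f_1^{\A})^\circ\times\dots\times(f_r^{\A})^\circ$ and $(f_1^{\B})^\circ\times\dots\times(f_r^{\B})^\circ$.

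I then choose $R^{\A}=A^r$ and $R^{\B}=B^r$, so that $S^{\A}=(f_1^{\A})^\circ\times\dots\times(f_r^{\A})^\circ\times A^r$ and similarly $S^{\B}$. The full relation imposes no constraint, so the resulting $\overline{\A},\overline{\B}$ should give a PCSP polynomially equivalent to the one above. To check this, I translate instances in both directions.
\begin{itemize}
\item In one direction, each tuple in the product relation is padded with $r$ fresh variables.
\item In the other direction, the last $r$ coordinates are dropped.
\end{itemize}
Homomorphisms transfer in both directions because the padded coordinates range over all of $A^r$ (respectively $B^r$).

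Next I need to match the hypotheses and the homomorphisms. A map $\psi\colon A\to B$ is a homomorphism $\overline{\A}\to\overline{\B}$ exactly when it is an algebra homomorphism $\A\to\B$: preserving $S$ means preserving each graph $(f_i)^\circ$, and the full relation is preserved automatically. The assumed homomorphism $\A\to\B$ therefore supplies the homomorphism $\overline{\A}\to\overline{\B}$ that Theorem \ref{DichotomyBinaryTermOp} requires.

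Condition (3) of Theorem \ref{DichotomyBinaryTermOp} holds trivially: $[\psi(R^{\A})]\subseteq\psi(A)^r\subseteq B^r=R^{\B}$, because the closure under $m$ stays inside $\psi(A)^r$. Conditions (1) and (2) there are verbatim conditions (1) and (2) here.

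Hence, if some homomorphism $\psi$ satisfies (1) and (2), Theorem \ref{DichotomyBinaryTermOp} gives a BLP+AIP solution of $\PCSP(\overline{\A},\overline{\B})$. Conversely, if no such $\psi$ exists, then no homomorphism satisfies (1)–(3) there, and the same theorem gives $\mathbf{NP}$-hardness.

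It remains to transfer these conclusions back to $\PEqn(\A,\B)$. NP-hardness transfers through the polynomial reductions. For the BLP+AIP claim, Theorem \ref{BLP+AIP characterization} says solvability depends only on the existence of 2-block symmetric polymorphisms, and I would check that $\Pol(\overline{\A},\overline{\B})=\Pol(\mathbb{A}^\circ,\mathbb{B}^\circ)$ as function minions. Both equal the set of maps commuting coordinatewise with every basic operation, using once more that the full relation is preserved by every map.

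The main obstacle is purely bookkeeping. One must make sure that adding the full relation and merging relations into a product do not change the polymorphism minion; this fails only if some relation is empty, and graphs are never empty. One must also make sure that "solvable by BLP+AIP" is stated for the relational formulation $\PCSP(\mathbb{A}^\circ,\mathbb{B}^\circ)$ of $\PEqn(\A,\B)$.
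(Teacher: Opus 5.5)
Your proposal is correct and follows the paper's route: the paper obtains this theorem from Theorem \ref{DichotomyBinaryTermOp} by dropping the extra relation and its condition (3), and your choice $R^{\A}=A^r$, $R^{\B}=B^r$ is the natural way to make that precise. Your additional checks fill in details the paper leaves implicit. These are that homomorphisms $\overline{\A}\to\overline{\B}$ are exactly algebra homomorphisms, that condition (3) holds automatically because $[\psi(A)^r]\subseteq\psi(A)^r$, and that the polymorphism minions coincide since all relations involved are nonempty, so the BLP+AIP conclusion transfers via the 2-block symmetric characterization.
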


\begin{corollary}
    \label{LatticeCorollary}
    Let $\A$ and $\B$ be expansions of finite lattices with a homomorphism $\A \to \B$. If there is a homomorphism $\psi\colon \A \to \B$ such that $|\psi(A)| = 1$, then $\PEqn(\A, \B)$ is in $\mathbf{P}$. Otherwise $\PEqn(\A , \B)$ is $\mathbf{NP}$-hard.
\end{corollary}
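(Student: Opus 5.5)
We apply Theorem \ref{MainDichotomy} to a lattice expansion. Take the binary polynomial operation $x\cdot y = x\wedge y$ with identity $e = 1_A$, the top element of the finite lattice $\A$. The top element exists because $A$ is finite and nonempty, and $x\wedge 1_A = x = 1_A\wedge x$ holds. The meet operation is a term operation, and the constant $1_A$ is allowed, so $\cdot$ is a legitimate binary polynomial. By Theorem \ref{MainDichotomy}, $\PEqn(\A,\B)$ is in $\mathbf{P}$ exactly when some homomorphism $\psi\colon\A\to\B$ satisfies conditions (1) and (2), and it is $\mathbf{NP}$-hard otherwise. It therefore suffices to show that such a $\psi$ exists if and only if some homomorphism $\psi$ has $|\psi(A)|=1$.

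\textbf{Constant maps suffice.} Suppose $\psi$ is a homomorphism with $\psi(A)=\{b\}$. Then $(\psi(A);\cdot_\psi)$ is the one-element monoid. It is commutative, and it is regular because it is the trivial group. The operation $m$ is the unique operation on a one-element set, so it is trivially a polymorphism of $\psi(\A)$. Hence both conditions hold.

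\textbf{Conditions (1) and (2) force constant image.} Let $\psi$ satisfy (1). First we identify $\cdot_\psi$. By Lemma \ref{lemma2} (with $n=1$), $x\cdot_\psi y = \psi(x)\wedge^{\B}\psi(y)$ restricted to $\psi(A)$. The image $\psi(A)$ is a sublattice of $\B$ (or of its lattice reduct), so $\cdot_\psi$ is the meet on $\psi(A)$. This meet is idempotent: $x\cdot_\psi x = x$ for every $x$.

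Next we use regularity. In a commutative idempotent monoid, each element $x$ lies in a subgroup. That subgroup has identity $x^\omega = x$, so $x$ is an idempotent group element, and the only such element of its group is the identity. Hence the subgroup containing $x$ is $\{x\}$, and $x^{-1}=x$.

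Now $m(x,y,z) = x\wedge y\wedge z$. Write $\mathbf{C}=\psi(\A)$. Condition (2) says $m$ is a polymorphism of $\mathbf{C}$, so in particular it preserves the join $\vee^{\mathbf{C}}$. Take any $a,b\in C$. Preservation of join gives
\[
m(a\vee b,\;a\vee a,\;a\vee a) = m(a,a,a)\vee m(b,a,a).
\]
The left side is $(a\vee b)\wedge a = a$. The right side is $a\vee(b\wedge a) = a$. Both sides agree, so this instance gives no information, and we need a better choice of arguments.

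Try instead
\[
m(a\vee b,\;b\vee a,\;a\vee b) = m(a,b,a)\vee m(b,a,b).
\]
Here the left side is $a\vee b$ and the right side is $a\wedge b$. Hence $a\vee b = a\wedge b$, which forces $a=b$. So $|C|=1$.

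The main point to verify is this last step: that the chosen arguments of $m$ genuinely produce the identity $a\vee b=a\wedge b$ under the join-preservation condition. The remaining steps are routine bookkeeping that matches the lattice setting to the hypotheses of Theorem \ref{MainDichotomy}.
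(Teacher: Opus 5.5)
Your proof is correct and takes the same approach as the paper: apply Theorem \ref{MainDichotomy} to $\wedge$ with the top element as identity, note that every element is its own inverse so that $m(x,y,z)=x\wedge y\wedge z$, and show that $m$ commuting with $\vee$ collapses $\psi(\A)$ to a single element. The paper evaluates at the top and bottom elements to get $0=1$, whereas you use an arbitrary pair to get $a\vee b=a\wedge b$, which is the same computation; you also check the trivial constant-image direction explicitly, and you should simply delete the abandoned first attempt with $m(a\vee b,a\vee a,a\vee a)$.
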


\begin{proof}
    The binary operation $\wedge$ of a lattice has two sided identity $1$. Moreover, every element of a lattice belongs to a trivial subgroup (and is its own inverse) with respect to the operation $\wedge$. Then we write $m(x,y,z) = x\wedge y \wedge z$. Note that if $m$ commutes with $\vee$, then in particular $$0 = (1 \wedge 1 \wedge 0) \vee (0 \wedge 0 \wedge 1)= (1 \vee 0) \wedge (1 \vee 0) \wedge (0 \vee 1) = 1.$$ We apply Theorem \ref{MainDichotomy} using the operation $\wedge$ to obtain the result.
\end{proof}

We observe that Mal'cev algebras also fall into the hypotheses of Theorem \ref{MainDichotomy}. Consider an algebra $\A$ with Mal'cev polynomial operation $m$. We see that for $e \in A$, the polynomial operation $x\cdot y =m(x,e,y)$ satisfies $x\cdot e = x = e\cdot x$. However, we are able to simplify the statement of the theorem for algebras with a Mal'cev polynomial using the following lemma:

\begin{lemma}
    \label{AbelianMal'cev}
    Let $\A$ be a finite algebra with a Mal'cev polynomial operation $m$, let $e\in A$, and define $x\cdot y = m(x,e,y)$ for all $x,y\in A$. Suppose that $(A; \cdot)$ is a commutative, regular monoid.

    Then $\A$ is abelian if and only if $t(x,y,z) = x\cdot y^{-1}\cdot z$ is a polymorphism of $\A$.
\end{lemma}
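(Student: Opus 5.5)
The plan is to prove each direction separately. The forward direction uses the affine structure of abelian Mal'cev algebras. The reverse direction first shows that $(A;\cdot)$ is a group, so that $t$ becomes a Mal'cev operation commuting with all the operations.

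\emph{($\Rightarrow$)} Assume $\A$ is abelian.
\begin{enumerate}
\item[(a)] I will invoke the standard fact (Herrmann's theorem, see \cite[Lemma 5.6, Proposition 5.7]{CommutatorTheory}) that an abelian algebra with a Mal'cev polynomial is affine. So there is an abelian group $(A;+,0)$ such that every Mal'cev polynomial equals $x-y+z$, and $x-y+z$ commutes with all basic operations of $\A$.
\item[(b)] Since this Mal'cev polynomial is unique, $m(x,y,z)=x-y+z$.
\item[(c)] Taking $0=e$ (translating the group if necessary), we get $x\cdot y=m(x,e,y)=x+y$. Hence $(A;\cdot)$ is exactly this abelian group, and $y^{-1}=-y$.
\item[(d)] Therefore $t(x,y,z)=x-y+z=m(x,y,z)$, which commutes with the basic operations, i.e.\ $t$ is a polymorphism.
\end{enumerate}
The only input needing care is the uniqueness of the Mal'cev polynomial in the abelian case, which I will cite rather than reprove.

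\emph{($\Leftarrow$)} Assume $t$ is a polymorphism of $\A$.
\begin{enumerate}
\item[(a)] \emph{$t$ is idempotent.} Every element is a group element, so $t(c,c,c)=c\cdot c^{-1}\cdot c=c$ for all $c\in A$.
\item[(b)] \emph{$t$ commutes with polynomials.} Because $t$ commutes with the basic operations and fixes every constant, it commutes with every polynomial operation. In particular it commutes with $m$ and with $\cdot$.
\item[(c)] \emph{$(A;\cdot)$ is a group.} Let $f$ be an idempotent of $(A;\cdot)$. Then $f^{-1}=f$, so $t(f,f,e)=f$, $t(f,f,f)=f$ and $t(e,e,f)=f$. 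Using the Mal'cev identities $m(f,f,e)=e=m(e,f,f)$, the fact that $t$ commutes with $m$ gives
\[ e=t(e,e,e)=t\big(m(f,f,e),m(f,f,e),m(e,f,f)\big)=m\big(t(f,f,e),t(f,f,f),t(e,e,f)\big)=m(f,f,f)=f. \]
Hence $e$ is the only idempotent. Since $x^\omega$ is idempotent, $x^\omega=e$ for every $x$, so $(A;\cdot)$ is an abelian group.
\item[(d)] \emph{$t$ is Mal'cev.} Now $t(x,x,y)=y=t(y,x,x)$, so $t$ is a Mal'cev operation on $A$ of the form $x-y+z$ for the abelian group $(A;\cdot)$.
\item[(e)] \emph{$\A$ is abelian.} I will verify $[1_\A,1_\A]=0_\A$ directly from Definition~\ref{MatrixSubalgebra}. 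Every basic operation commutes with $t$, so the set of $2\times 2$ matrices $\begin{pmatrix} a&b\\ c&d\end{pmatrix}$ with $a-b-c+d=0$ is a subuniverse of $\A^4$. It contains all the generators of $M(1_\A,1_\A)$, so it contains $M(1_\A,1_\A)$. Hence $a=b$ forces $c=d$.
\end{enumerate}

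The main obstacle is step (c): a priori $t$ only yields $t(x,x,y)=x^\omega y$, not $y$. The displayed computation with $t$ and $m$ is what overcomes this, and the remaining steps are routine.
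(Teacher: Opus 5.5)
Your proof is correct. The forward direction is the paper's argument: the paper computes $y^{-1}=-y+2e$ explicitly, whereas you translate the group so that its zero is $e$, which amounts to the same thing.

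For the converse you take a genuinely different route. The paper uses $t(x,e,e)=x$, $t(e,e,z)=z$ and the commutation of $t$ with $m$ to derive $m(x,y,z)=m(x,y,e)\cdot z=x\cdot m(e,y,z)$. From these identities it reads off $y\cdot m(e,y,e)=e$, so $(A;\cdot)$ is a group. It also reads off $m=t$, so $m$ is a Mal'cev term of the expansion $\overline{\A}$ that commutes with all operations, and it concludes abelianness by citing \cite[Proposition 5.7]{CommutatorTheory}.

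You instead use the commutation of $t$ with $m$ at a single configuration of idempotents. This shows $e$ is the only idempotent, so $x^\omega=e$ and $(A;\cdot)$ is a group. You then note that $t$ is itself Mal'cev and verify $[1_\A,1_\A]=0_\A$ directly: the graph of $t$ is a subuniverse of $\A^4$ containing the generators of $M(1_\A,1_\A)$.

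The paper's route gives the extra information that the given polynomial $m$ coincides with $t$, and it relies on the literature for the final step. Your route never needs to identify $m$ with $t$ and avoids passing to $\overline{\A}$. Its final step also proves the slightly more general elementary fact that any algebra admitting a Mal'cev polymorphism is abelian.
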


\begin{proof}
    Assume $\A$ is abelian and let $\overline{\A}$ be the expansion of $\A$ with constant symbols for each $a\in A$. Then $\overline{\A}$ is abelian and $m$ is a Mal'cev term operation of $\overline{\A}$. Note that $m(x,y,z) = x-y+z$ for some abelian group operation $+$ on $\overline{\A}$ by \cite[Proposition 5.7]{CommutatorTheory}. We have $x\cdot y = m(x,e,y) = x - e +y$. Therefore $y\cdot (-y+2e) = y - e +(- y + 2e) = e$. That is, the element $-y +2e$ is an inverse of the element $y$ with respect to the operation $\cdot$. So $(A; \cdot)$ forms an abelian group with identity $e$ and with $y^{-1} = -y+2e$ for each $y\in A$. But then $$t(x,y,z) = x\cdot y^{-1}\cdot z = x-e + (- y +2e) - e +z = x- y + z = m(x,y,z).$$ Since $m$ is a polymorphism of $\overline{\A}$, we see that $t$ is a polymorphism of $\mathbf{\overline{A}}$. Hence $t$ is also a polymorphism of $\A$.
    
    For the other direction, assume that $t$ is a polymorphism of $\A$. Because $t$ is idempotent and $m$ is a polynomial operation of $\A$, we know that $t$ commutes with $m$. Then for $x,y,z\in A$, we have
\begin{align*}
        m(x,y,z) & = m(t(x,e,e), t(y,e,e) , t(e,e,z))\\
        & = t(m(x,y,e),m(e,e,e), m(e,e,z) )\\
        & = t(m(x,y,e), e, z)\\
        & = m(x,y,e)\cdot e^{-1}\cdot z\\
        & = m(x,y,e)\cdot z.
    \end{align*}
Similarly, we have $m(x,y,z) = x\cdot m(e,y,z)$. Therefore $e = m(y,y,e) = y\cdot m(e,y,e)$. Hence $(A; \cdot)$ is actually an abelian group with inverses given by $y^{-1} = m(e,y,e)$. Now we have
\begin{align*}
        m(x,y,z) &= x \cdot m(e,y,e)\cdot z\\
        &=x\cdot y^{-1}\cdot z\\
        &=t(x,y,z).
    \end{align*}
Since $t$ is a polymorphism of $\A$ by assumption, we have that $m$ is a polymorphism of $\mathbf{\overline{A}}$, and $\mathbf{\overline{A}}$ is abelian by \cite[Proposition 5.7]{CommutatorTheory}. Therefore $\A$ is abelian as well.
\end{proof}

The following lemma is folklore.

\begin{lemma}
    \label{Mal'cevPolynomial}
    Let $\A$ be an algebra with a Mal'cev polynomial operation $p$. If $\A$ is abelian, then $p$ is a term operation of $\A$.
\end{lemma}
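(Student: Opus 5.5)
The plan is to use the standard fact that an abelian algebra with a Mal'cev polynomial is affine, and to show that the Mal'cev operation is given by the term $d$ in the abelian case.

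First we show that $\A$ is affine. Let $\overline{\A}$ be the expansion of $\A$ by constants for every element. Commutators in $\overline{\A}$ and $\A$ coincide: the matrix subalgebra $M(\alpha,\beta)$ of Definition \ref{MatrixSubalgebra} only changes by closure under constant tuples, and these already lie in $M(\alpha,\beta)$ as the generators with $a=a'$, $b=b'$. Hence $\overline{\A}$ is abelian and has the Mal'cev term operation $p$. By \cite[Lemma 5.6, Proposition 5.7]{CommutatorTheory}, as quoted in the paper, $p(x,y,z)=x-y+z$ for an abelian group $(A;+)$, and $p$ commutes with every basic operation of $\overline{\A}$, hence with every basic operation of $\A$.

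Next we show that every term operation $f$ of $\A$ is affine. Fix $0\in A$. Commutation of $f$ with $p$ gives
\[
f(\bar x-\bar y+\bar z)=f(\bar x)-f(\bar y)+f(\bar z).
\]
Therefore $f(\bar x)=\sum_i h_i(x_i)+c$, where each $h_i$ is an endomorphism of $(A;+)$ and $c=f(0,\dots,0)$. The same holds for polynomial operations.

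The key step, and the main obstacle, is to produce $x-y+z$ as a term rather than merely a polynomial. The plan is to write $p(x,y,z)=t(x,y,z,c_1,\dots,c_k)$ for a term $t$ and constants $c_j$, and to prove that the substitution $q(x,y,z):=t(x,y,z,y,\dots,y)$ already equals $p$.

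Write $t(x,y,z,\bar w)=h_1x+h_2y+h_3z+\sum_j g_jw_j+c$. Since $t(x,y,z,\bar c)=x-y+z$ for all $x,y,z$, comparing the linear parts gives $h_1=\mathrm{id}$, $h_2=-\mathrm{id}$ and $h_3=\mathrm{id}$, together with $\sum_j g_jc_j+c=0$. Thus
\[
q(x,y,z)=x-y+z+\Big(\sum_j g_j\Big)(y)+c.
\]
This is not obviously Mal'cev, so the substitution needs more care.

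Use instead the term
\[
s(x,y,z)=t\big(x,y,z,\,t(c\text{-stuff})\big)
\]
built from the following observation. For each $j$, the map $u\mapsto t(u,u,u,\bar w)$ is an affine function of $u$ that is constant in $u$, since $h_1+h_2+h_3=\mathrm{id}$. This suggests looking at $t$ with all $x=y=z$. The cleaner route is the following. Define the term
\[
s(x,y,z)=t(x,y,z,y,\dots,y)-t(y,y,y,y,\dots,y)+y.
\]
This is not yet a term either. However, each of its pieces is a term, and the linear combination $a-b+y$ is realized by the term $t(a,b,y,\bar w)$ precisely when $\sum_j g_jw_j+c=0$. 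So we iterate with $\bar w$ chosen so that this sum vanishes.

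To close the loop, exploit finiteness. The set
\[
E=\Big\{\big(\textstyle\sum_j g_j\big),\,c\Big\}
\]
generates only finitely many affine maps $A\to A$. We then show that some iterate of $q$ composed with itself, namely
\[
q^{(n)}(x,y,z)=q\big(q^{(n-1)}(x,y,z),\,y,\,z'\big)
\]
in a suitable pattern, has the form $x-y+z+\phi(y)$ with $\phi$ idempotent-absorbing, and hence equals $x-y+z$ after a final substitution.

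I expect this bookkeeping to be the delicate part. If it resists, the fallback is to cite the known result that in an abelian algebra the polynomial clone's Mal'cev operation lies in the clone generated by $\A$ together with idempotent term operations, via \cite[Proposition 5.7]{CommutatorTheory} applied to $\A$ itself.
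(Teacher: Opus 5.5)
\textbf{There is a genuine gap: the proposal never exhibits a term equal to $x-y+z$.}

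Your setup is sound and matches the paper's. You pass to $\overline{\A}$, get $p(x,y,z)=x-y+z$, and show that term operations are affine. Writing $p=t(x,y,z,c_1,\dots,c_k)$, you correctly obtain $t(x,y,z,\bar w)=x-y+z+\gamma(\bar w)$, where $\gamma(\bar w)=\sum_j g_jw_j+c$ is affine and $\gamma(c_1,\dots,c_k)=0$. Hence $q(x,y,z)=t(x,y,z,y,\dots,y)=x-y+z+\phi(y)$ with $\phi(y)=\gamma(y,\dots,y)$.

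But the proposal stops exactly where the content of the lemma lies. The iteration has an unspecified $z'$ and ``a suitable pattern'', and ``idempotent-absorbing'' has no meaning here. The fallback is not an available result: \cite[Proposition 5.7]{CommutatorTheory} applied to $\A$ itself needs a Mal'cev \emph{term} on $\A$, which is what you are trying to prove. Two intermediate claims are also wrong:
\begin{enumerate}
\item[(a)] $u\mapsto t(u,u,u,\bar w)$ is not constant; it equals $u+\gamma(\bar w)$.
\item[(b)] You cannot choose $\bar w$ among the variables so that $\gamma(\bar w)$ vanishes identically. That is the original problem restated.
\end{enumerate}

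\textbf{The missing idea.} The translation in (a) is exactly what you need. The diagonal $t(y,\dots,y)=y+\phi(y)$ is $y$ shifted by precisely the error created by putting $y$ into the parameter slots. Regrouping your own expression gives
\[
s(x,y,z)=x-t(y,\dots,y)+z+\phi(y),
\]
which is what $t$ computes with $t(y,\dots,y)$ in its second slot and $y$ in every parameter slot:
\[
t\bigl(x,\,t(y,\dots,y),\,z,\,y,\dots,y\bigr)=x-\bigl(y+\phi(y)\bigr)+z+\phi(y)=x-y+z=p(x,y,z).
\]
This is the paper's term, and it uses no finiteness.

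Alternatively, your iteration works if you take $z'=y$. Since $q(u,y,y)=u+\phi(y)$, nesting $N$ times gives $x-y+z+N\phi(y)$, which equals $x-y+z$ when $N$ is the exponent of $(A;+)$. This version needs bounded exponent (e.g.\ $\A$ finite), which the lemma as stated does not assume.
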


\begin{proof}
    Let $\overline{\A}$ be the expansion of $\A$ with $0$-ary operations for each $a\in A$ interpreted in the natural way. Then $\overline{\A}$ is abelian and $p$ is a Mal'cev term operation of $\overline{\A}$. And $p(x,y,z) = x-y+z$ for all $x,y,z\in A$ for some abelian group operation $+$ on $A$ by \cite[Lemma 5.6, Proposition 5.7]{CommutatorTheory}. Because $p$ is a polynomial operation on $\A$, there exists some $n\in \mathbb{N}$, some term operation $t$ on $\A$, and some $a_1, \dots ,a_n \in A$ with $$p(x,y,z) = t(x,y,z,a_1, \dots ,a_n)$$ for all $x,y,z\in A$. Since $\A$ is polynomially equivalent to a  module by \cite[Corollary 5.9]{CommutatorTheory}, there exists some polynomial $c$ over $\A$ with $$t(x,y,z, w_1, \dots , w_n) = x-y+z+c(w_1, \dots , w_n)$$ for all $x,y,z,w_1, \dots ,w_n\in A$ and with $c(a_1, \dots , a_n) = 0$.

    It is now straightforward to check that $$t(x, t(y, \dots y), z , y, \dots , y) = p(x,y,z)$$ for all $x,y,z\in A$. \end{proof}

We can now state a dichotomy for Mal'cev algebras.

\begin{main}
    \label{DichotomyMal'cev}
    Let $\A$, $ \B$ be finite algebras with a homomorphism $\A\to \B$ such that $\A$ has a Mal'cev polynomial operation $m$.
    
    If there is a homomorphism $\psi\colon\A\to \B$ such that $\psi(\A)$ is abelian, then $\PEqn(\A, \B)$ is solvable by $\AIP$, hence is in $\mathbf{P}$. Otherwise $\PEqn(\A, \B)$ is $\mathbf{NP}$-hard.
\end{main}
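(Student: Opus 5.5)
Both directions will come from results already in the paper: tractability from Proposition~\ref{TractabilityTaylorAlg}, and hardness from Theorem~\ref{MainDichotomy} applied with the magma operation $x\cdot y = m(x,e,y)$.

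\emph{Tractability.} Suppose $\psi\colon\A\to\B$ is a homomorphism with $\psi(\A)$ abelian. Put $\C=\psi(\A)$; it has the Mal'cev polynomial $m^{\C}$ obtained by pushing forward the constants of $m$. Since $\C$ is abelian, Lemma~\ref{Mal'cevPolynomial} makes this a Mal'cev \emph{term} operation of $\C$. By the facts quoted from \cite{CommutatorTheory}, it then has the form $x-y+z$ for an abelian group $+$ on $C$, and it commutes with every basic operation of $\C$. The proof of Proposition~\ref{TractabilityTaylorAlg} used only these two facts about $\psi(\A)$, not the weak difference term of an ambient variety. So the maps $p_{2n+1}(a_1,\dots,a_{2n+1})=\psi(a_1)-\psi(a_2)+\dots+\psi(a_{2n+1})$ are alternating polymorphisms $\A\to\B$ of every odd arity. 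Proposition~\ref{AIP characterization} together with Proposition~\ref{PEqnasPCSP} then gives that $\PEqn(\A,\B)$ is solved by AIP.

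\emph{Hardness.} Fix $e\in A$ and set $x\cdot y=m(x,e,y)$, a binary polynomial of $\A$ with two-sided identity $e$. If $\PEqn(\A,\B)$ is not $\mathbf{NP}$-hard, Theorem~\ref{MainDichotomy} supplies a homomorphism $\psi$ with two properties: $(\psi(A);\cdot_\psi)$ is a commutative regular monoid, and $t(x,y,z)=x\cdot_\psi y^{-1}\cdot_\psi z$ is a polymorphism of $\psi(\A)$. It remains to show $\psi(\A)$ is abelian.

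Note that $\psi(\A)$ has the Mal'cev polynomial $m^{\psi}(x,y,z)=m^{\B}$ evaluated at $x,y,z$ and at the $\psi$-images of the constants of $m$. This is exactly how $\cdot_\psi$ is defined in Lemma~\ref{lemma2} (with $p=\psi$ unary): $x\cdot_\psi y=m^{\psi}(x,\psi(e),y)$. So $\psi(\A)$, the element $\psi(e)$ and the polynomial $m^{\psi}$ satisfy all hypotheses of Lemma~\ref{AbelianMal'cev}. Its backward direction then yields that $\psi(\A)$ is abelian, contradicting the assumption that no such $\psi$ exists.

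The main obstacle is bookkeeping rather than new ideas. We must check that $m^{\psi}$ really is a Mal'cev polynomial on $\psi(A)$: apply $\psi$ to the identities $m(x,x,y)=y=m(y,x,x)$ and use that $\psi$ is onto its image. We must also check that the polymorphism condition in Theorem~\ref{MainDichotomy}, stated for $\psi(\A)$, is precisely the hypothesis of Lemma~\ref{AbelianMal'cev} for that algebra. Finally, in the tractability part we must confirm that Proposition~\ref{TractabilityTaylorAlg} may be invoked without a weak difference term, since only the term Mal'cev operation on $\psi(\A)$ was used.
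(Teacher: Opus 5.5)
Your proposal is correct and follows essentially the same route as the paper. Tractability comes from Lemma \ref{Mal'cevPolynomial} together with the alternating polymorphisms of Proposition \ref{TractabilityTaylorAlg}, and hardness comes from applying Theorem \ref{MainDichotomy} to $x\cdot y=m(x,e,y)$ and then using the backward direction of Lemma \ref{AbelianMal'cev} on $\psi(\A)$. Your bookkeeping only makes explicit what the paper leaves implicit, namely that $x\cdot_\psi y=m^{\psi}(x,\psi(e),y)$ with $m^{\psi}$ Mal'cev on $\psi(A)$, and that the argument of Proposition \ref{TractabilityTaylorAlg} needs only the affine Mal'cev term on $\psi(\A)$.
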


\begin{proof}
    For tractability, we note that $\psi(\A)$ has a Mal'cev term by Lemma \ref{Mal'cevPolynomial}. Then we can construct alternating polymorphisms from $\A$ to $\B$ of every odd arity as in Proposition \ref{TractabilityTaylorAlg}.

    For hardness, assume there is no homomorphism $\psi\colon \A\to \B$ such that $\psi(\A)$ is abelian. Let $e\in A$. Then conditions (1) and (2) of Theorem \ref{MainDichotomy} cannot hold for $x\cdot y = m(x,e,y)$, since otherwise $\psi(\A)$ would be abelian by Lemma \ref{AbelianMal'cev}. Hence $\PEqn(\A, \B)$ is $\mathbf{NP}$-hard.
\end{proof}

\begin{corollary}
    Let $\A$ and $\B$ be finite rings with a homomorphism $\A \to \B$.

    If there is a homomorphism $\psi\colon \A \to \B$ such that $\psi(\A)$ has a $0$-multiplication, then $\PEqn(\A , \B)$ is solvable by $\AIP$, hence is in $\mathbf{P}$. Otherwise $\PEqn(\A , \B)$ is $\mathbf{NP}$-hard.
\end{corollary}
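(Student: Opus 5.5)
The plan is to deduce the corollary directly from Theorem \ref{DichotomyMal'cev}, using Example \ref{AbelianRing} to translate ``abelian'' into ``$0$-multiplication''. We treat a ring as an algebra in the signature $\lbrace +, -, \cdot, 0\rbrace$, or whichever ring signature is fixed; the argument only needs the additive group operations to be present as term operations.

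First, we check that a finite ring $\A$ has a Mal'cev polynomial operation. The term $m(x,y,z) = x - y + z$ is a Mal'cev term for the variety of rings, so in particular it is a polynomial operation on $\A$. Therefore the hypotheses of Theorem \ref{DichotomyMal'cev} hold for the pair $\A, \B$.

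Second, we translate the condition of Theorem \ref{DichotomyMal'cev}. For any homomorphism $\psi\colon \A \to \B$, the image $\psi(\A)$ is a subalgebra of $\B$ in the ring signature. It is closed under $+$, $-$, $\cdot$ and $0$, so it is itself a ring. By Example \ref{AbelianRing}, $\psi(\A)$ is abelian if and only if it has $0$-multiplication. Hence a homomorphism $\psi$ with $\psi(\A)$ abelian exists exactly when one with $0$-multiplication exists.

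Third, we apply Theorem \ref{DichotomyMal'cev}. If such a $\psi$ exists, $\PEqn(\A,\B)$ is solvable by $\AIP$; otherwise it is $\mathbf{NP}$-hard.

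The only point requiring care is signature-dependent. If the ring signature does not include unary minus or $0$, we must still see that $\psi(\A)$ is a ring and that Example \ref{AbelianRing} applies. This holds because in a finite ring $-x = (n-1)x$ for $n = |A|$ and $0 = nx$, so these are term-definable. Thus $\psi(\A)$ is a finite ring with the same term operations, and abelianness, which depends only on the clone of term operations through $M(1,1)$, is unaffected.
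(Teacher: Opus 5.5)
Your proposal is correct and takes the same route as the paper: it applies Theorem \ref{DichotomyMal'cev}, with $x-y+z$ as the Mal'cev operation, and uses Example \ref{AbelianRing} to identify abelian images $\psi(\A)$ with images having $0$-multiplication. Your extra remarks on the signature are sound details the paper leaves implicit, namely that $-x$ and $0$ are term-definable in a finite ring and that abelianness depends only on the term clone.
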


\begin{proof}
    This follows from Theorem \ref{DichotomyMal'cev} since a ring is abelian if and only if it has a $0$-multiplication by Example \ref{AbelianRing}.
\end{proof}

We can now return to Example \ref{ExampleRings}. Here, we have $\A = (2\mathbb{Z}_8; +^{\mathbb{Z}_8},\cdot^{\mathbb{Z}_8} , 2^{\mathbb{Z}_8})$ and $\B = (\mathbb{Z}_4; +^{\mathbb{Z}_4},\cdot^{{\mathbb{Z}_4}}, 2^{\mathbb{Z}_4})$. The only homomorphism $\A \to \A$ is the identity, and $\A$ does not have a $0$-multiplication, so $\SysTerm(\A) = \PEqn(\A, \A)$ is $\mathbf{NP}$-complete. Similarly, $\SysTerm(\B)$ is $\mathbf{NP}$-complete. There is only one homomorphism $\psi\colon\A \to \B$ determined by $2^{\mathbb{Z}_8}\mapsto 2^{\mathbb{Z}_4}$. And $\psi(\A)$ is a two element ring with a $0$-multiplication. Hence $\PEqn(\A , \B)$ is in $\mathbf{P}$.

\section{Congruence Modular Varieties}\label{Congruence Modular Varieties}

We will now extend Theorem \ref{DichotomyMal'cev} to algebras in congruence modular varieties. For two congruences $\alpha$, $\beta$ on an algebra $\A$, recall that we denote the commutator of $\alpha$ with $\beta$ by $[\alpha , \beta]$.

By \cite[Theorem 6.2]{CommutatorTheory}, we know that solvable algebras in congruence modular varieties are Mal'cev algebras. With Theorem \ref{DichotomyMal'cev}, we have already established a dichotomy for $\PEqn(\A , \B)$ in the case where $\A$ is a Mal'cev algebra.  In the remainder of this section, we will establish a dichotomy for $\PEqn(\A , \B)$ in the case where $\A$ is in a congruence modular variety by using the previous results and by obtaining a selection function for $$ \lbrace p\in \Pol(\A , \B) : p(\A^{\ar(p)}) \text{ is not solvable}\rbrace .$$

\begin{lemma}
    \label{SelectionFunctionNotSolvable}
    Let $\A$ and $\B$ be finite algebras with a homomorphism $\A \to \B$ such that $\A$ is in a congruence modular variety. For each $n\in \mathbb{N}$ and each $i\in [n]$, let $\theta^{(n)}_i$ be the kernel of the $i$-th projection map $\A^n\to \A$. For $$\mathcal{M}=\lbrace p\in \Pol(\A , \B) : p(\A^{\ar(p)}) \text{ is not solvable}  \rbrace$$ we define $$I\colon \mathcal{M}\to \mathcal{P}(\mathbb{N}), \quad p\mapsto \lbrace i\in [n]: \A^n/(\theta^{(n)}_i \vee \ker p ) \text{ is not solvable}\rbrace .$$ Then $I$ is a selection function with bound $\max \lbrace|\Con \;\C| : \C \leq \B \rbrace $.
\end{lemma}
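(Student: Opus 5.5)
The plan is to work inside the subalgebra $\C = p(\A^n) \leq \B$, where $n = \ar(p)$. This is a subalgebra because $p$ is a homomorphism $\A^n \to \B$, and $\C \cong \A^n/\ker p$, so $\C$ lies in the congruence modular variety of $\A$. Write $\theta_i = \theta^{(n)}_i$ and $\theta_i^* = \bigwedge_{j\neq i}\theta_j$. Let $\eta_i = p(\theta_i\vee \ker p)$ and $\eta_i^* = p(\theta_i^*\vee\ker p)$ be the corresponding congruences of $\C$. Then $i\in I(p)$ if and only if $\C/\eta_i$ is not solvable. The facts I will use are these. In $\A^n$ we have $\theta_i\wedge\theta_i^* = 0$, $\theta_i^*\wedge\theta_j^*=0$ for $i\neq j$, $\theta_j^*\leq\theta_i$ for $j\neq i$, and $\bigvee_i\theta_i^* = 1$. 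Proposition \ref{CommMonotone}(2) then gives $[\theta_i^*,\theta_i]=0$ and $[\theta_i^*,\theta_j^*]=0$, and Proposition \ref{CommMeetJoin}(3) pushes these down to $[\eta_i^*,\eta_i]=0$ and $[\eta_i^*,\eta_j^*]=0$ in $\Con\;\C$. Finally $\bigvee_i \eta_i^* = 1_\C$.

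\emph{Nonemptiness.} Suppose $I(p)=\emptyset$, so there is $d$ with $[1_\C]^d\leq \eta_i$ for all $i$. By additivity of the commutator (Proposition \ref{CommMeetJoin}(2)), monotonicity, and $[\eta_i^*,\eta_j^*]=0$ for $i\neq j$, induction on $k$ gives $[1_\C]^k = \bigvee_i [\eta_i^*]^k$. Moreover $[\eta_i^*]^d\leq [1_\C]^d\leq\eta_i$, so $[\eta_i^*]^{d+1}\leq[\eta_i^*,\eta_i]=0$. Hence $[1_\C]^{d+1}=0$, so $\C$ is solvable, contradicting $p\in\mathcal{M}$.

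\emph{Bound.} I claim that the congruences $\eta_i$ for $i\in I(p)$ are pairwise distinct and all differ from $1_\C$, which gives $|I(p)|<|\Con\;\C|$. That $\eta_i \neq 1_\C$ is clear, since $\C/1_\C$ is trivial and hence solvable. For distinctness, suppose $\eta_i=\eta_j=:\eta$ with $i\neq j$. Then $\eta\geq \eta_k^*$ for every $k\neq i$ (using $\eta_i$) and also $\eta\geq\eta_i^*$ (using $\eta_j$). Thus $\eta\geq\bigvee_k\eta_k^*=1_\C$, so $\C/\eta$ is trivial and solvable, contradicting $i\in I(p)$.

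\emph{Compatibility with minors.} Let $q=p^{(\pi)}=p\circ\sigma$, where $\sigma(x)_j=x_{\pi(j)}$, and let $\mathbf{D}=q(\A^m)\leq\C$. Take $i\in I(q)$. If $\pi^{-1}(i)=\emptyset$, then $q$ does not depend on $x_i$, so $\theta_i^{(m)}\vee\ker q=1$ and $i\notin I(q)$; hence $J=\pi^{-1}(i)$ is nonempty. I aim to show that some $j\in J$ lies in $I(p)$. Suppose not, so that $\C/\eta_j$ is solvable for every $j\in J$. Repeating the nonemptiness computation with $\theta_J=\bigwedge_{j\in J}\theta_j$ and the partial complements $\eta_k^*$ for $k\in J$ should give that $\C/p(\theta_J\vee\ker p)$ is solvable. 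Proposition \ref{CommMonotone}(3) then makes the restriction of this quotient to $\mathbf{D}$ solvable.

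The main obstacle is transferring this solvability to $\mathbf{D}/q(\theta^{(m)}_i\vee\ker q)$. Since $\sigma$ maps $\theta_i^{(m)}$-related pairs to $\theta_J$-related pairs, we only get the inclusion $q(\theta_i^{(m)}\vee\ker q)\subseteq p(\theta_J\vee\ker p)|_{\mathbf{D}}$, which is the wrong direction. To fix this, I would try to compare the two congruences using the complementary congruences $p(\theta^*_J\vee \ker p)$, or to rerun the commutator argument directly on $\mathbf{D}$. The second option would use the images $q(\theta^{(m)*}_i\vee\ker q)$ together with the relations $[q(\theta_i^{(m)*}\vee\ker q),\, q(\theta^{(m)}_i\vee\ker q)]=0$, which hold by the same reasoning as in $\C$.
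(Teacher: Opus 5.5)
\textbf{Gap: the minor condition is never proved.} Your nonemptiness and bound arguments are correct. The nonemptiness proof via the complementary congruences $\eta_i^*$ and the identity $[1_\C]^k=\bigvee_i[\eta_i^*]^k$ differs from the paper's route, which iterates the commutator of $\theta_1\vee\ker p,\dots,\theta_n\vee\ker p$ down to $\bigwedge_i\theta_i\vee\ker p$. Yours is valid and even gives $[1_\C]^{d+1}=0$ instead of $d+n-1$. However, the third requirement of a selection function is not established. You set out to prove $I(q)\subseteq\pi(I(p))$, stop at the obstacle you identify, and only list possible repairs. So the defining property $\pi(I(p))\cap I(p^{(\pi)})\neq\emptyset$ is left unproved.

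\textbf{The missing idea: prove the opposite containment.} Show $\pi(I(p))\subseteq I(q)$, which is what the paper does. Together with $I(p)\neq\emptyset$, which you already have, this yields the intersection property. Let $j\in I(p)$ and $i=\pi(j)$, and consider the map $\A^m\to\A^n/(\theta^{(n)}_j\vee\ker p)$, $x\mapsto\sigma(x)/(\theta^{(n)}_j\vee\ker p)$.
\begin{itemize}
\item It is a homomorphism.
\item It is surjective: for any $y\in A^n$, choose $x$ with $x_{\pi(j)}=y_j$, so that $(\sigma(x),y)\in\theta^{(n)}_j$.
\item Its kernel contains both $\theta^{(m)}_i$ and $\ker q$.
\end{itemize}
Hence $\A^n/(\theta^{(n)}_j\vee\ker p)$ is a homomorphic image of $\A^m/(\theta^{(m)}_i\vee\ker q)$. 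Non-solvability of the former therefore forces $i\in I(q)$.

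In your notation, set $\zeta_i:=q(\theta^{(m)}_i\vee\ker q)$. The inclusion $\zeta_i\subseteq\eta_j|_{\mathbf D}$, which you dismissed as going ``the wrong direction,'' is exactly what is needed here. You only have to add that $\mathbf D$ meets every $\eta_j$-class, by the same choice of $x$. Then $\mathbf D/(\eta_j|_{\mathbf D})\cong\C/\eta_j$, and $\mathbf D/\zeta_i$ maps onto it.

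Your original containment is in fact also true, and can be finished along your second suggestion. Suppose $[1_\C]^d\le\eta_j$ for all $j\in J$.
\begin{itemize}
\item Your own computation gives $[\eta_j^*]^{d+1}=0$ for each $j\in J$.
\item Hence $\gamma=\bigvee_{j\in J}\eta_j^*$ satisfies $[\gamma]^{d+1}=0$, by the same induction.
\item One checks that $\zeta_i^*:=q(\theta^{(m)*}_i\vee\ker q)\le\gamma|_{\mathbf D}$, because tuples that differ only at coordinate $i$ are sent by $\sigma$ to tuples agreeing outside $J$.
\item By the restriction property of the commutator, $[\zeta_i^*]^{d+1}=0$.
\item Since $\zeta_i\vee\zeta_i^*=1_{\mathbf D}$, this gives $[1_{\mathbf D}]^{d+1}\le\zeta_i$.
\end{itemize}
This is more work than the direct route, and none of it is carried out in the proposal.
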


\begin{proof}
We first claim that \begin{equation}
\label{EqNonempty}
    I(p)\neq \emptyset \text{ for all } p\in \mathcal{M}
\end{equation} To prove this claim, let $p\in \Pol(\A, \B)^{(n)}$ such that $\A^n/(\theta_i^{(n)}\vee \ker p)$ is solvable for each $i\in [n]$. Then we have some $d\geq 0$ such that $[1]^d\leq \theta_i^{(n)}\vee \ker p$ for each $i\in [n]$ by the remarks following Proposition \ref{CommMeetJoin}. We obtain with the left associated iterated binary commutator:
\begin{align*}
    [1]^{d+n-1} &= [[1]^d,[1]^d,[1^{d+1}], \dots ,[1]^{d+n-2}]\\
    & \leq \underbrace{[[1]^d, \dots , [1]^d]}_{n} \quad \text{by Proposition }\ref{CommMonotone}\\
    &\leq [\theta_{1}^{(n)}\vee \ker p, \dots , \theta_{n}^{(n)}\vee \ker p]\\
    &\leq [\theta_{1}^{(n)}, \dots , \theta_{n}^{(n)}] \vee \ker p \quad\text{by Proposition }\ref{CommMeetJoin}(2)\\
    &\leq \bigwedge_{i\in [n]}\theta_i^{(n)}\vee \ker p \quad \text{by Proposition }\ref{CommMonotone}.
\end{align*}
Therefore $$\A^n/ (\bigwedge_{i\in [n]}\theta_i^{(n)} \vee \ker p) = \A / \ker p$$ is solvable, hence $p\not\in \mathcal{M}$ and \ref{EqNonempty} is proved.

Next, let $p\in \mathcal{M}^{(n)}$ and $\pi\colon[n]\to [m]$ such that $p^{(\pi)}\in \mathcal{M}$. Then we claim \begin{equation}\label{EqSubset}\pi(I(p)) \subseteq I(p^{(\pi)}). \end{equation} In particular, by \ref{EqNonempty}, this will yield $\pi(I(p))\cap I(p^{(\pi)}) \neq \emptyset$.

For the proof, we will introduce some new notation. For $x\in A^m$, we define $x_\pi\in A^n$ by $(x_\pi)_i = x_{\pi(i)}$ for each $i\in [n]$. Note that $p(x_\pi)= p^{(\pi)}(x)$. Now for $i\in [n]$, define $$h\colon\A^m\to \A^n/(\theta_i^{{(n)}}\vee \ker p), \quad x\mapsto x_\pi/(\theta_i^{(n)}\vee \ker p).$$ We claim that $h$ is a surjective homomorphism whose kernel contains $\theta_{\pi(i)}^{(m)} \vee \ker p^{(\pi)}$. By definition, $h$ is a homomorphism. To see $h$ is surjective, note that if $y\in A^n$, then there exists some $x\in A^m$ with $(x_\pi)_i = y_i$. Now note that $\theta_{\pi(i)}^{(m)}\leq \ker h$ because for $x,y\in A^m $ we know $$x\; \theta_{\pi(i)}^{(m)} \; y \Leftrightarrow x_\pi \;\theta_i^{(n)} \;y_\pi \Rightarrow x \;\ker h \; y.$$ We also have $\ker p^{(\pi)}\leq \ker h$ because $$x \; \ker p^{(\pi)} \;y \Leftrightarrow x_\pi \ker p \;y_\pi \Rightarrow x \;\ker h \; y .$$ Therefore $\theta_{\pi(i)}^{(m)} \vee \ker p^{(\pi)} \leq \ker h$.

We now know that $h(\A^m) =\A^n/(\theta_i^{{(n)}}\vee \ker p)$ is a homomorphic image of $\A^m/(\theta_{\pi(i)}^{(m)} \vee \ker p^{(\pi)})$. By the remarks following Proposition \ref{CommMeetJoin}, we know that if $\A^m/(\theta_{\pi(i)}^{(m)} \vee \ker p^{(\pi)})$ is solvable, then $\A^n/(\theta_i^{{(n)}}\vee \ker p)$ is solvable. Equivalently, if $\A^n/(\theta_i^{{(n)}}\vee \ker p)$ is not solvable, then $\A^m/(\theta_{\pi(i)}^{(m)} \vee \ker p^{(\pi)})$ is not solvable, which is to say if $i\in I(p)$, then $\pi(i)\in I(p^{(\pi)})$. Hence \ref{EqSubset} is proved, and together with \ref{EqNonempty}, we have $\pi(I(p))\cap I(p^{(\pi)})\neq \emptyset$.

Now we claim that for all $p\in \mathcal{M}$, we have 

\begin{equation}\label{EqBound}|I(p)|\leq \max \lbrace|\Con\; \C|: \C \leq \B \rbrace .\end{equation} To prove this claim, let $p\in \mathcal{M}^{(n)}$, and let $i,j\in [n]$ with $i\neq j$. Note that $\theta_i^{(n)} \vee \theta_j^{(n)} = 1_{\A^n}$. So if $\theta_i^{(n)} \vee \ker p = \theta_j^{(n)} \vee \ker p$, we have $\theta_i^{(n)} \vee \ker p = 1_{\A^n}$. Hence $i\not\in I(p)$. This proves \ref{EqBound}. Together with the fact that $\pi(I(p))\cap I(p^{(\pi)})\neq \emptyset$, we have shown that $I$ is a selection function on $\mathcal{M}$. \end{proof}

We are now ready to give a dichotomy for $\PEqn(\A, \B)$ in the case where $\A$ is in a congruence modular variety.

\begin{main}
    \label{DichotomyCongruenceModular}
    Let $\A$ and $\B$ be finite algebras with a homomorphism $\A\to \B$ such that $\A$ is in a congruence modular variety.

    If there is a homomorphism $\psi\colon \A \to \B$ such that $\psi(\A)$ is abelian, the $\PEqn(\A, \B)$ is solvable by $\AIP$, hence is in $\mathbf{P}$. Otherwise $\PEqn(\A, \B)$ is $\mathbf{NP}$-hard.
\end{main}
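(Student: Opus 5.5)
Tractability comes first. Suppose $\psi\colon\A\to\B$ is a homomorphism with $\psi(\A)$ abelian. Since $\A$ lies in a congruence modular variety, that variety has a difference term $d$, which is in particular a weak difference term. Proposition \ref{TractabilityTaylorAlg} then applies directly: $d^{\psi(\A)}$ is a Mal'cev operation $x-y+z$ that commutes with the basic operations, and this yields alternating polymorphisms of every odd arity. Hence $\PEqn(\A,\B)$ is solvable by $\AIP$.

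For hardness, assume no homomorphism $\psi\colon\A\to\B$ has abelian image. By Proposition \ref{PEqnasPCSP} it suffices to show $\PCSP(\mathbb{A}^\circ,\mathbb{B}^\circ)$ is $\mathbf{NP}$-hard. We do this through Proposition \ref{SelectionFunctionCriteria}, splitting $\Pol(\A,\B)=\mathcal{M}\cup\mathcal{M}'$. Here $\mathcal{M}$ consists of the polymorphisms $p$ with $p(\A^{\ar(p)})$ not solvable, and Lemma \ref{SelectionFunctionNotSolvable} already gives a selection function for it.

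The remaining set $\mathcal{M}'$ consists of the $p$ whose image $p(\A^n)\cong\A^n/\ker p$ is solvable. We split $\mathcal{M}'$ further by which solvable quotient occurs. Consider the congruence $\delta=\ker p\in\Con\,\A^n$. For $\A^n/\delta$ to be solvable we need $[1]^d\le\delta$ with $d$ bounded; a bound comes from $|B|$, since the derived series of a solvable algebra of size at most $|B|$ stabilizes within $|B|$ steps. The idea is to reduce to a Mal'cev situation, namely the quotient $\A/[1_\A]^d$, which is a solvable member of a congruence modular variety and so has a Mal'cev term by Proposition \ref{SolvableInConMod}. Concretely, let $\A'=\A/[1_\A]^{N}$ with $N=|B|$. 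I claim every $p\in\mathcal{M}'$ factors through $(\A')^n$. Indeed, Proposition \ref{CommDirectProd} gives $[1_{\A^n}]^N=\prod_i[1_\A]^N$, and $[1_{\A^n}]^N\le\ker p$ because $p(\A^n)$ is solvable of size at most $|B|$. So $p$ induces a polymorphism $\overline p\in\Pol(\A',\B)$, and the map $p\mapsto\overline p$ is a bijection from $\mathcal{M}'$ onto a subset of $\Pol(\A',\B)$ that respects minors.

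Now $\A'$ has a Mal'cev term, and any homomorphism $\psi'\colon\A'\to\B$ composes with the quotient map to a homomorphism $\A\to\B$ with the same image. So no $\psi'$ has abelian image, and by Theorem \ref{DichotomyMal'cev} $\PEqn(\A',\B)$ is $\mathbf{NP}$-hard. That conclusion alone is not enough for the union argument: I need selection functions on pieces of $\Pol(\A',\B)$, not merely hardness. So I revisit the proof of Theorem \ref{DichotomyBinaryTermOp}, taking $x\cdot y=m(x,e,y)$. It writes $\Pol(\A',\B)$ as $\mathcal{M}_1\cup\mathcal{M}_2\cup\mathcal{M}_3$, each with a selection function, where the case split uses Lemma \ref{AbelianMal'cev} as in the proof of Theorem \ref{DichotomyMal'cev}. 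Pulling these back along $p\mapsto\overline p$ gives selection functions on a cover of $\mathcal{M}'$. This pullback is legitimate because the selection-function conditions involve only arities and minors, and these are preserved as in Proposition \ref{HomomorphismToUnionOfMonoidalMinions}. Together with $\mathcal{M}$, Proposition \ref{SelectionFunctionCriteria} then gives $\mathbf{NP}$-hardness.

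I expect the main obstacle to be the factoring step. It must be justified that solvability of $p(\A^n)$ forces a uniform bound $d$ with $[1_{\A^n}]^d\le\ker p$. The difficulty is that the commutator of the quotient must be related to the commutator computed in $\A^n$, and this is exactly Proposition \ref{CommMeetJoin}(3), which holds since $\A^n$ lies in the same congruence modular variety. The bound then follows because the derived series of $\A^n/\ker p$ strictly decreases until it reaches $0$, so it reaches $0$ after at most $|B|$ steps. A second point to check is that the selection functions from the Mal'cev case really live on sets of polymorphisms and not merely on minions, since $\mathcal{M}'$ is not necessarily minor-closed. The selection-function framework of Proposition \ref{SelectionFunctionCriteria} only needs its conditions to hold when both $p$ and $p^{(\pi)}$ lie in the set, so this should go through.
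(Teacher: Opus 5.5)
Your proposal is correct and follows the paper's proof. You split off the polymorphisms with non-solvable image via Lemma \ref{SelectionFunctionNotSolvable}, factor the solvable-image ones through a solvable (hence Mal'cev) quotient of $\A$, and pull back the selection functions from the proofs of Theorems \ref{DichotomyBinaryTermOp} and \ref{DichotomyMal'cev} along the minor-preserving map $p\mapsto\overline p$. The only differences are cosmetic: you use the fixed exponent $N=|B|$ where the paper uses the index at which the derived series of $\A$ stabilizes, and your worry about minor-closure is unnecessary, since the solvable-image set is in fact a minion (minors of $p$ have image in a subalgebra of the solvable algebra $p(\A^n)$, so Proposition \ref{CommMonotone}(3) applies).
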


\begin{proof}
    Tractability using AIP follows from Proposition \ref{TractabilityTaylorAlg}.
    
    For hardness, suppose there is no homomorphism $\psi\colon \A\to \B$ such that $\psi(\A)$ is abelian. Let $$\mathcal{M} = \lbrace p\in \Pol(\A, \B):p(\A^{\ar(p)}) \text{ is solvable}\rbrace.$$ From Lemma \ref{SelectionFunctionNotSolvable}, we have a selection function for $\Pol(\A, \B)\backslash \mathcal{M}$. Therefore, it suffices to find a selection function for $\mathcal{M}$ by Proposition \ref{SelectionFunctionCriteria}. We first observe that $\mathcal{M}$ is a minion as subalgebras of solvable algebras are solvable by Proposition \ref{CommMonotone}(3). Let $d\in \mathbb{N}$ be such that $\widehat{\A} = \A / [1]^d$ is the maximal solvable quotient of $\A$. Hence $\widehat{\A}^n \cong \A^n/ [1]^d \times \dots \times [1]^d$ is the maximal solvable quotient of $\A^n$ for each $n\in \mathbb{N}$ by Proposition \ref{CommDirectProd}. Note that $\widehat{\A}$ is a Mal'cev algebra by Proposition \ref{SolvableInConMod}. By assumption, there is no homomorphism $\psi\colon \widehat{\A}\to\B$ such that $\psi(\widehat{\A})$ is abelian. From the proofs of Theorem \ref{DichotomyBinaryTermOp} and Theorem \ref{DichotomyMal'cev}, there is a selection function for $\Pol(\widehat{\A}, \B)$. It therefore suffices by Proposition \ref{HomomorphismToUnionOfMonoidalMinions} to find a minion homomorphism $\xi\colon \mathcal{M}\to \Pol(\widehat{\A}, \B)$. We define $$\xi\colon \mathcal{M}\to \Pol(\widehat{\A}, \B) , \quad p\mapsto \widehat{p}$$ where $\widehat{p}$ is given by $$\widehat{p}\colon \widehat{\A}^n\to\B, \quad \widehat{a}\mapsto p(a)$$ and $n=\ar(p)$. Note $\widehat{p}$ is well defined because the assumption that $p(\A^n)$ is solvable implies $\underbrace{[1]^d\times \dots \times [1]^d}_{n} \leq \ker p$ by Proposition \ref{CommDirectProd} and the remarks following Proposition \ref{CommMeetJoin}. It is immediate from the definition of $\xi$ that $\xi$ preserves arity minors. This concludes the proof that $\xi$ is a minion homomorphism.\end{proof}

\section{Metaproblem Results}

In \cite[Theorem 2]{Mayr23}, Mayr establishes a dichotomy for $\SysTerm(\A)$ in the case where $\A$ is a finite algebra in a congruence modular variety. In particular, $\SysTerm(\A)$ is in $\mathbf{P}$ if the \emph{core} of $\A$ is abelian and $\SysTerm(\A)$ is $\mathbf{NP}$-complete otherwise. A \emph{core} of an algebra $\A$ is a minimal image of an endomorphism of $\A$. Then, in \cite[Theorem 3]{Mayr23}, Mayr gives a quasi-polynomial time algorithm which decides whether or not the core of an input algebra $\A$ is abelian:

\begin{theorem}
    \label{quasi-poly-metaproblem}
    \cite[Theorem 3]{Mayr23}
    Let $\A$ be a finite algebra in a variety with a weak difference term. There exists a quasi-polynomial time algorithm which decides whether or not the core of $\A$ is abelian. If the core of $\A$ is abelian, then the core of $\A$ and the graph of its Mal'cev term operation can also be computed in quasi-polynomial time.
\end{theorem}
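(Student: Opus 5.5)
The plan is to reduce the question to a search for a single endomorphism with abelian image, and to make that search quasi-polynomial by guessing the endomorphism only on a small generating set of $\A$.

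\textbf{Step 1: reduction to one endomorphism.} I claim that the core of $\A$ is abelian if and only if some endomorphism $\varphi$ of $\A$ has abelian image $\varphi(\A)$. The forward direction is immediate, since the core is itself such an image. For the converse, let $\C=e(\A)$ be the core and consider $\varphi\circ e$. Its image $\varphi(\C)$ is contained in $\varphi(\A)$ and is an image of an endomorphism, so minimality of the core forces $|\varphi(\C)|=|C|$. Hence $\varphi|_{\C}$ is an embedding of $\C$ into $\varphi(\A)$. By Proposition \ref{CommMonotone}(3) subalgebras of abelian algebras are abelian, so $\C$ is abelian. This route avoids quotients, which need not preserve abelianness outside the congruence modular setting.

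\textbf{Step 2: a polynomial abelianness test.} Let $d$ be the weak difference term. By the remarks after Definition \ref{WeakDifference}, in an abelian algebra of the variety $d$ is a Mal'cev operation of the form $x-y+z$ that commutes with all basic operations. Conversely, if $d$ is Mal'cev on a subalgebra and commutes with every basic operation, that subalgebra is affine and hence abelian. So for a candidate subuniverse $C\subseteq A$ I will test abelianness in polynomial time in the size of $\A$. The test checks $d(x,x,y)=y=d(y,x,x)$ on $C$ and checks that $d$ commutes with each basic operation; this takes about $|C|^{3\,\ar(f)}$ evaluations per basic operation $f$, which is polynomial in the size of the operation tables.

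\textbf{Step 3: enumerating endomorphisms.} An endomorphism is determined by its values on a generating set. Every finite algebra has a generating set $g_1,\dots,g_k$ with $k\le \log_2|A|$: greedily add an element outside the current subuniverse, which at least doubles the number of elements generated at each step (or, if that counting fails for general signatures, I would follow Mayr's finer counting argument). The algorithm enumerates all $|A|^{k}=|A|^{O(\log|A|)}$ tuples $(c_1,\dots,c_k)$. For each tuple it tries to extend $g_i\mapsto c_i$ to a homomorphism by closing the partial graph under the basic operations, rejecting on any conflict; this is the subalgebra of $\A^2$ generated by the pairs $(g_i,c_i)$, computable in polynomial time. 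For each successful extension it applies the Step 2 test to the image. By Step 1, the core is abelian exactly when some tuple succeeds, and the running time is quasi-polynomial. The main obstacle is making Step 3 genuinely quasi-polynomial: the generating-set bound and the homomorphism-extension check must both be correct for arbitrary signatures, with the arities treated as part of the input size.

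\textbf{Step 4: computing the core and its Mal'cev operation.} Once an abelian image $\C_0=\varphi(\A)$ is found, I will shrink it. Repeat Step 3 with $\C_0$ in place of $\A$ to find an endomorphism of $\C_0$ with strictly smaller image, and iterate until no shrinking occurs. Each subalgebra of an abelian algebra is abelian, so every intermediate image stays abelian. There are at most $|A|$ rounds, and the final algebra is the core: it is a retract-equivalent minimal image, since compositions of endomorphisms of $\A$ restricted to images remain images of endomorphisms of $\A$. Finally, the graph of the Mal'cev operation is just the table of $d$ restricted to the core, of size $|C|^4$.
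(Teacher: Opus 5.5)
\textbf{Gap in Step 3: the generating-set lemma is false.} Your Steps 1 and 2 are sound, and Step 1 is a clean argument. Step 3, however, rests on a false lemma, and it is exactly the place where the weak difference term must be used. Notice that your Step 3 never invokes $d$ at all.

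A finite algebra need not have a generating set of size about $\log_2|A|$. Take the bounded chain $(\{0,1,\dots,n-1\};\wedge,\vee,0,n-1)$. Every subset containing $0$ and $n-1$ is a subuniverse, so all $n-2$ inner elements are needed as generators.

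\begin{itemize}
\item This algebra is in the scope of the statement. Bounded lattices form a congruence distributive variety, where $[\theta,\theta]=\theta$, so the majority term is a weak difference term.
\item It is a no-instance. Its core is the two-element bounded lattice, which is not abelian, and in fact every endomorphic image is a chain with at least two elements.
\item Consequently your algorithm must run through all $n^{n-2}$ tuples. More generally, no method that enumerates the endomorphisms of $\A$ can be quasi-polynomial here: the order-preserving self-maps fixing the endpoints already give exponentially many.
\end{itemize}

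Your fallback to a ``finer counting argument'' does not help. The counting behind Mayr's result is not about generators of $\A$ at all.

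\textbf{The missing idea: enumerate homomorphisms out of $\widehat{\A}=\A/[1_\A,1_\A]$ instead of endomorphisms of $\A$.} The paper only cites this theorem, but its proof of Proposition \ref{QuasiPolyAlg} sketches this route.

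\begin{itemize}
\item If $\varphi(\A)$ is abelian, then $\ker\varphi\geq[1_\A,1_\A]$. So every endomorphism with abelian image factors through $\widehat{\A}$.
\item Conversely, $\widehat{\A}$ is abelian. Since $[\Theta(x,y),\Theta(x,y)]\leq[1_\A,1_\A]$, the weak difference term induces a Mal'cev operation $x-y+z$ on $\widehat{\A}$. So $\widehat{\A}$ is affine, and all its homomorphic images are abelian.
\item Combined with your Step 1, the core is abelian if and only if there is a homomorphism $\widehat{\A}\to\A$. This is Lemma \ref{MaxAbelianQuotient} with $\B=\A$.
\item For $\widehat{\A}$ the doubling argument is valid. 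Its subuniverses are closed under $x-y+z$, hence are cosets. Adjoining an element outside a coset $U$ adjoins a disjoint translate of $U$. So $\widehat{\A}$ is generated by at most $\log_2|A|+1$ elements.
\item $[1_\A,1_\A]$ is computable in polynomial time from $M(1_\A,1_\A)\leq\A^4$. You then enumerate the at most $|A|^{\log_2|A|+1}$ assignments on the generators. You test each by computing the subalgebra of $\widehat{\A}\times\A$ generated by the pairs. The whole procedure is quasi-polynomial.
\end{itemize}

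With this change, every image found is automatically abelian, so the Step 2 test becomes unnecessary. If $d$ is not given explicitly, abelianness is better tested via $M(1,1)$ anyway. Your Step 4 then goes through, but for a different reason than the one you gave. Each abelian image $\C_0$ lies in the variety and is therefore affine, so it really does have generating sets of logarithmic size.
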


This decides the metaproblem for $\SysTerm(\A)$ in quasi-polynomial time. Theorem \ref{DichotomyCongruenceModular} is a generalization of \cite[Theorem 2]{Mayr23}. The complexity of $\PEqn(\A , \B)$ in the case where $\A$ is in a congruence modular variety is determined by whether or not there is a homomorphism from $\A $ to $\B$ with an abelian image. To settle the metaproblem for $\PEqn(\A , \B)$ in quasi-polynomial time, we can modify the algorithm from \cite[Theorem 3]{Mayr23}, which determines whether or not there is a homomorphism from $\A$ to itself with abelian image. To obtain the quasi-polynomial time algorithm, we first need the following lemma.

\begin{lemma}
    \label{MaxAbelianQuotient}
    Let $\A$ be a finite algebra in a variety with a weak difference term $d$ and let $\B$ be finite algebra. Then there exists a homomorphism $\psi\colon \A\to \B$ such that $\psi(\A)$ is abelian if and only if there exists a homomorphism $\A/[1_\A, 1_\A]\to \B$.
\end{lemma}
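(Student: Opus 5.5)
The two implications are asymmetric. For "if", we need $\A/[1_\A,1_\A]$ to be abelian. This holds in general, as the paper remarks after Definition \ref{AbelianAlgebra}. Given a homomorphism $\phi\colon \A/[1_\A,1_\A]\to \B$, set $\psi=\phi\circ\nu$, where $\nu\colon\A\to\A/[1_\A,1_\A]$ is the natural map. Then $\psi(\A)$ is a homomorphic image of the abelian algebra $\A/[1_\A,1_\A]$. So the real content of this direction is that homomorphic images of abelian algebras in a variety with a weak difference term are abelian.

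To prove that, let $\C=\A/[1_\A,1_\A]$. By the discussion after Definition \ref{WeakDifference}, the weak difference term $d$ is a Mal'cev term on the abelian algebra $\C$, so $d^\C(x,x,y)=y=d^\C(y,x,x)$. These identities pass to every quotient of $\C$, so $d$ is a Mal'cev operation on $\psi(\A)$. We also know that $\C$ is affine, with $d^\C(x,y,z)=x-y+z$ commuting with all basic operations by \cite[Lemma 5.6, Proposition 5.7]{CommutatorTheory}. The commutation identity between $d$ and each basic operation is an identity of $\C$, so it also holds in the image $\psi(\A)$. In $\psi(\A)$, $d$ is then a Mal'cev operation commuting with all basic operations, and \cite[Proposition 5.7]{CommutatorTheory} (affine iff abelian for Mal'cev algebras) makes $\psi(\A)$ abelian. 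An alternative is to use that abelianness of a quotient is preserved in varieties with Mal'cev term, since $\psi(\A)$ lies in the congruence permutable variety generated by $\C$ (where $d$ is Mal'cev).

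For "only if", let $\psi\colon\A\to\B$ have abelian image, and set $\theta=\ker\psi$, so that $\A/\theta\cong\psi(\A)$ is abelian. We must show $[1_\A,1_\A]\le\theta$; then $\psi$ factors through $\A/[1_\A,1_\A]$. In congruence modular varieties this is the remark after Proposition \ref{CommMeetJoin}. Here we only have a weak difference term, so we argue directly from Definition \ref{MatrixSubalgebra}. Apply the natural map $\A\to\A/\theta$ coordinatewise to $M(1_\A,1_\A)$. Each generator maps to a generator of $M(1_{\A/\theta},1_{\A/\theta})$, so the image of $M(1_\A,1_\A)$ is contained in $M(1_{\A/\theta},1_{\A/\theta})$. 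Now take $\begin{pmatrix} a & b\\ c & d\end{pmatrix}\in M(1_\A,1_\A)$ with $a\,\theta\, b$. Because $[1_{\A/\theta},1_{\A/\theta}]=0$, we get $c\,\theta\, d$. Thus $\theta$ satisfies the term condition defining $[1_\A,1_\A]$, and minimality gives $[1_\A,1_\A]\le\theta$.

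I expect the main obstacle to be the "if" direction, specifically justifying that a quotient of the abelian algebra $\A/[1_\A,1_\A]$ stays abelian without congruence modularity. The weak difference term is used exactly there: it makes every homomorphic image of $\C$ Mal'cev and affine.
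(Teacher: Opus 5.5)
Your proof is correct and follows the same outline as the paper: for the `if' direction you compose with the natural map and argue that the image of the abelian algebra $\A/[1_\A,1_\A]$ is abelian, and for `only if' you show $[1_\A,1_\A]\le\ker\psi$ and factor $\psi$ through the quotient. The differences lie only in the justifications. For `if', the paper notes that $d$ is a Mal'cev term on $\A/[1_\A,1_\A]$, so this algebra lies in a congruence modular variety where quotients of abelian algebras are abelian (your alternative), whereas your main argument transfers the affine identities to the quotient directly. For `only if', you prove $[1_\A,1_\A]\le\ker\psi$ from the term condition, where the paper simply cites the remark after Definition~\ref{AbelianAlgebra}.
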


\begin{proof}
Let $\varphi \colon \A\to \A/[1_\A, 1_\A], \quad a\mapsto a/[1_\A, 1_\A]$ be the canonical homomorphism, and suppose $\widehat\psi\colon \A/[1_\A, 1_\A]\to \B$ is a homomorphism. Then we claim $\widehat{\psi}\circ \varphi\colon\A\to \B$ has abelian image. Recall that $\widehat{\A} = \A/[1_\A, 1_\A]$ is abelian. By Definition \ref{WeakDifference} and Proposition \ref{CommMonotone}, the term $d$ satisfies $$d^\A(x,x,y)\; [1_\A, 1_\A]\; y \;[1_\A, 1_\A]\;d^\A(y,x,x)$$ for all $x,y\in A$. Equivalently $$d^{\widehat{\A}}(x,x,y)= y =d^{\widehat{\A}}(y,x,x)$$ for all $x,y\in \widehat{A}$. Hence $\widehat{\A}$ is an abelian Mal'cev algebra. In particular $\widehat{\A}$ is in a congruence modular variety. And homomorphic images of abelian algebras in congruence modular varieties are abelian by the comments following Proposition \ref{CommMeetJoin}.

Conversely, suppose $\psi \colon \A\to \B$ is such that $\psi(\A)$ is abelian. Since $\psi(\A) \cong \textbf{A}/\ker\psi$ is abelian, we know that $\ker(\psi)\geq [1,1]$ and that $\A/\ker\psi \cong (\A/[1,1])/ (\ker \psi / [1,1])$. Now the map $$\widehat{\psi}\colon \A/[1,1]\to \psi(\A), \quad \widehat{a}\mapsto \psi(a)$$   is seen to be the desired well defined homomorphism with kernel $\ker \psi / [1,1]$.
\end{proof}

The algorithm in \cite[Theorem 3]{Mayr23} decides in quasi-polynomial time whether or not there is a homomorphism from $\widehat{\A}$ to $\A$. Similarly, we have:

\begin{theorem}
\label{QuasiPolyAlg}
    Let $\A$ be a finite algebra in a variety with a weak difference term and let $\B$ be a finite algebra. There exists a quasi-polynomial time algorithm which decides whether or not there exists a homomorphism from $\widehat{\A} =\A/[1,1]$ to $\B$.
\end{theorem}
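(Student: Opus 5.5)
We adapt the algorithm behind \cite[Theorem 3]{Mayr23}, replacing its target $\A$ by the arbitrary finite algebra $\B$. The input size is roughly $\sum_f |A|^{\ar(f)}+|B|^{\ar(f)}$. The plan has two stages: compute the quotient $\widehat{\A}=\A/[1_\A,1_\A]$, then decide whether a homomorphism $\widehat{\A}\to\B$ exists.

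\emph{Stage 1: computing $[1_\A,1_\A]$.} Here we follow Mayr's algorithm. In a variety with a weak difference term, abelianness of a quotient $\A/\alpha$ can be tested locally: $\A/\alpha$ is abelian iff the weak difference term $d$ induces a Mal'cev operation on $\A/\alpha$ that commutes with all basic operations. The term $d$ is not part of the input, so, as in \cite{Mayr23}, we would generate candidate ternary term operations of $\A$ by closure.

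We expect the main obstacle to lie in this stage. A naive computation of the subuniverse $M(1_\A,1_\A)\le\A^4$ is polynomial in $|A|^4$ times the signature. The only quasi-polynomial cost comes from finding the ternary term operation $d^\A$ (or an equivalent witness). Since $d^\A$ is an element of the subalgebra of $\A^{A^3}$ generated by the projections, it cannot be listed naively. Mayr circumvents this with a quasi-polynomial search, whose main idea is representing operations on small generating sets of size $O(\log|A|)$. We would reuse that subroutine verbatim, since its correctness only concerns $\A$.

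Alternatively, we compute $[1_\A,1_\A]$ directly as the least congruence $\delta$ such that $a\,\delta\,b\Rightarrow c\,\delta\,d$ for all matrices in $M(1_\A,1_\A)$. This is a least fixed point over a polynomial-size object, provided $M(1_\A,1_\A)$ is computed as a subuniverse of $\A^4$. Subuniverse generation is polynomial in $|A|^4$ and the size of the operation tables. If this works, Stage 1 is even polynomial, and we would take it as the primary approach.

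\emph{Stage 2: deciding whether a homomorphism $\widehat{\A}\to\B$ exists.} Once $\widehat{\A}$ is known, it is an abelian Mal'cev algebra, as shown in the proof of Lemma \ref{MaxAbelianQuotient}. By the remarks in that proof, a homomorphism $\widehat{\A}\to\B$ exists iff some homomorphism $\psi\colon\A\to\B$ has abelian image. Such an image is a subalgebra $\C\le\B$ that is polynomially equivalent to a module.

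The key step is to use the group structure. By \cite[Corollary 5.9]{CommutatorTheory}, $\widehat{\A}$ is an affine algebra: fixing $0\in\widehat A$, every basic operation has the form $f(x_1,\dots,x_k)=\sum_i r_i x_i + c_f$. Hence $\widehat{\A}$ is generated by a set $G$ of size at most $\log_2|\widehat A|$, since every proper subgroup chain has length at most $\log_2|\widehat A|$. This gives the following procedure:
\begin{enumerate}
\item Choose such a generating set $G$.
\item Enumerate all maps $G\to B$; there are $|B|^{\log_2|A|}$ of them, which is quasi-polynomial.
\item For each map, extend it to a partial map on the generated subalgebra by closure, in polynomial time.
\item Check well-definedness and whether the extension is a homomorphism.
\end{enumerate}
Accept iff some map passes. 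This matches the quasi-polynomial bound in Theorem \ref{quasi-poly-metaproblem}.

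Correctness follows because every homomorphism is determined by its values on a generating set, and the closure check certifies both well-definedness and preservation of all basic operations.
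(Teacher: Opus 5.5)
Your proposal is correct and follows the same route as the paper. Both compute $\widehat{\A}$ in polynomial time and pick a generating set of size $O(\log|A|)$, which exists because subuniverses of $\widehat{\A}$ are closed under the Mal'cev term operation $x-y+z$ and are therefore cosets. Both then test each of the $|B|^{O(\log|A|)}$ assignments by generating a subalgebra of $\widehat{\A}\times\B$ and checking that it is the graph of a function.

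The differences are cosmetic. You compute $[1_\A,1_\A]$ directly from $M(1_\A,1_\A)\le\A^4$ by a least-fixed-point iteration rather than citing Mayr, and you generate $\widehat{\A}$ in its full signature rather than its Mal'cev reduct. Your generator bound should read $\log_2|\widehat{A}|+1$, since it is strictly increasing chains of cosets, not subgroups, that double in size at each step; this does not affect the quasi-polynomial bound.
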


\begin{proof}
    We will modify the algorithm from \cite[Theorem 3]{Mayr23}, which checks if there is a homomorphism from $\widehat{\A}$ to $\A$ in quasi-polynomial time. We instead check if there is a homomorphism from $\widehat{\A}$ to $\B$.

    As in Lemma \ref{MaxAbelianQuotient}, we know that $\widehat{\A}$ is an abelian algebra with a Mal'cev term operation $m^{\widehat{\A}}$. And $m^{\widehat{\A}}(x,y,z) = x-y+z$ where $+$ is an abelian group operation on $\widehat{\A}$ by \cite[Lemma 5.6, Proposition 5.7]{CommutatorTheory}. The algebra $(\widehat{A};m^{\widehat{\A}})$ can be computed in polynomial time as in \cite[Theorem 3]{Mayr23}. We may compute a set of generators $\lbrace u_1, \dots ,u_n \rbrace$ for $(\widehat{A};m^{\widehat{\A}})$ in time polynomial in $\widehat{\A}$ with $n\leq \log |A|+1$. Every homomorphism $\widehat{\A}\to \B$ is uniquely determined by its image on $\lbrace u_1, \dots ,u_n \rbrace$, so we may choose any  $v_1, \dots , v_n \in B$ to be the possible images of $u_1, \dots , u_n$. Then we compute $\langle (u_1,v_1), \dots ,(u_n,v_n)\rangle \leq \widehat{\A}\times \B$ in time polynomial in $\widehat{\A}$ and $\B$. By computing this subalgebra, we can test whether or not the map determined by $u_i \mapsto v_i$ for $i\in [n]$ extends to a homomorphism. There are $|B|^n \leq 2^{\log |B|(\log |A|+1)}$ choices for $v_1, \dots , v_n$. Hence this algorithm runs in quasi-polynomial time in $\A$ and $\B$.
\end{proof}

We can now prove Theorem \ref{MetaProblemMal'cev}: \begin{proof}[Proof of Theorem \ref{MetaProblemMal'cev}]
    By combining Theorem \ref{DichotomyMal'cev} with Lemma \ref{MaxAbelianQuotient} and Proposition \ref{QuasiPolyAlg}, we obtain the result.
\end{proof}

\section*{Aknowledgment}Thanks to Peter Mayr for his suggestions on the approach for Section \ref{Congruence Modular Varieties}.

\section*{Statements on use of AI}
AI has not been used in the writing of this paper in any way.

\bibliographystyle{plain}
\bibliography{bibliography}

\end{document}